\newcommand\vldbpagestyle{plain}
\newcommand\mysection[1]{\vspace{-1.6mm}\section{#1}\vspace{-0mm}}
\newcommand\mysubsection[1]{\vspace{-2.2mm}\subsection{#1}\vspace{-.3mm}}
\newcommand\mysubsubsection[1]{\vspace{-2mm}\subsubsection{#1}\vspace{0mm}}
\newcommand{\extVersion}{false}
\newcommand{\printIfExtVersion}[2]%
{%
        \ifthenelse{\equal{\extVersion}{true}}{#1}{}%
        \ifthenelse{\equal{\extVersion}{false}}{#2}{}%
}
\newtheorem{theorem}{Theorem}[section]
\newtheorem{lemma}[theorem]{Lemma}
\theoremstyle{definition}
\newtheorem{definition}{Definition}[section]
\theoremstyle{remark}
\newtheorem{property}{Property}
\newcommand{\Grow}{\textsc{Grow}\xspace}
\newcommand{\Merge}{\textsc{Merge}\xspace}
\newcommand{\Init}{\textsc{Init}\xspace}
\newcommand{\Mo}{\textsc{Mo}\xspace}
\newcommand{\titledparagraph}[1]{\noindent\textbf{#1}}
\newcommand{\MM}[1]{\textcolor{blue}{\textbf{MM:} #1 }}
\newcommand{\graph}{\ensuremath\mathbf{G}}
\newcommand{\nodes}{\ensuremath\mathbf{N}}
\newcommand{\edges}{\ensuremath\mathbf{E}}
\newcommand{\labels}{\ensuremath\mathbf{L}}
\newcommand{\props}{\ensuremath\mathcal{P}}
\newcommand{\vars}{\ensuremath\mathcal{V}}
\newcommand{\query}{\mbox{:- }}
\newcommand{\bfs}{\textsc{BFT}}
\newcommand{\bfsm}{\textsc{BFT-M}}
\newcommand{\bfsam}{\textsc{BFT-AM}}
\newcommand{\esp}{\textsc{ESP}}
\newcommand{\lesp}{\textsc{LESP}}
\newcommand{\moesp}{\textsc{MoESP}}
\newcommand{\molesp}{\textsc{MoLESP}}
\newcommand{\TO}{\texttt{TimeOut}}
\begin{document}
\title{Integrating connection search in graph queries}

\author{Angelos Christos Anadiotis}
\authornote{Work done while at Ecole Polytechnique.}
\affiliation{%
  \institution{Oracle, Switzerland}
}
\email{angelos.anadiotis@oracle.com}

\author{Ioana Manolescu}
\affiliation{%
  \institution{Inria and IPP, France}
}
\email{ioana.manolescu@inria.fr}

\author{Madhulika Mohanty}
\affiliation{%
  \institution{Inria and IPP, France}
}
\email{madhulika.mohanty@inria.fr}

\begin{abstract}
Graph data management and querying has many practical applications. 
When graphs are very heterogeneous and/or users are unfamiliar with their structure, they may need to {\em find how two or more groups of nodes are connected in a graph}, even when users are not able to describe the connections. This is only partially supported by existing query languages, which allow searching for {\em paths}, but not for {\em trees connecting three or more node groups}. The latter is related to the NP-hard Group Steiner Tree problem, and has been previously considered for keyword search in databases.

In this work, we formally show how to integrate {\em connecting tree patterns} (CTPs, in short) within a graph query language such as SPARQL or Cypher, leading to an {\em Extended Query Language} (or EQL, in short). 
We then study a set of algorithms for evaluating CTPs; we generalize prior keyword search work, most importantly by ($i$)~considering bidirectional edge traversal and ($ii$)~allowing users to select {\em any} score function for ranking CTP results.
To cope with very large search spaces, we propose an efficient pruning technique and formally establish a large set of cases where our algorithm, \molesp, is complete even with pruning. Our experiments validate the performance of our CTP and EQL evaluation algorithms on a large set of synthetic and real-world workloads. 
\end{abstract}

\maketitle

\pagestyle{\vldbpagestyle}

\vspace{1.5mm} 
\mysection{Introduction}\label{sec:introduction}

Graph databases are increasingly adopted in a wide range of applications spanning from social network analysis to scientific data exploration, the financial industry, and many more.
To query RDF graphs, one can use the W3C's standard SPARQL~\cite{sparql11}  query language; for property graphs,  Cypher~\cite{neo4j-cypher} is among the best known. 
An interesting but challenging query language feature is  {\em  reachability}:  a SPARQL 1.1 query can \underline{check}, e.g., if there are some paths along which Mr. Shady deposits funds into a given bank ABC. Such queries are important in investigative journalism applications~\cite{anadiotis:hal-03337650}, in the fight against money laundering, etc. SPARQL allows checking for the existence of a path, but does not return the matching paths to users. In contrast, a Cypher query may also \underline{return} the paths between two given sets of nodes.

Unfortunately, none of these languages support \underline{finding trees}, connecting three (or more) sets of nodes, while the latter can be very useful. For instance, when investigating ill-acquired wealth, one may want to find ``all connections between Mr. Shady, bank company ABC, and the tax office of the DEF republic'': an answer to this query is a {\em tree}, connecting three nodes corresponding to the person, bank, and tax office, respectively. 

Searching for connections among $m$ sets of nodes is closely related to the Group Steiner Tree Problem (GSTP), which asks for {\em the least-cost}, e.g., fewest-edges, tree; 
the problem is NP-hard. The database literature has studied many variants of this problem under the name of {\em keyword search in databases}, for e.g., \cite{banks-1, banks-1-demo,banks-2, ease, dpbf, DBXplorer, kssurvey, coffman@tkde2014, qgstp@www21, lancet@vldb2021}. 
To cope with the high complexity, existing algorithms ($i$)~consider a fixed cost function and leverage its properties to limit the search, ($ii$)~propose approximate solutions, within a known distance from the optimum, and/or ($iii$)~implement heuristics without guarantees but which have performed well on some problems.

\noindent\textbf{Requirements} Our recent collaborations with investigative journalists~\cite{anadiotis:hal-03337650,gam-inf-sys-2022} lead to identifying the following set of needs.
First, (\textbf{R1}) {\em  graph query languages should allow returning trees that connect $m$ node sets}, for some integer $m\geq 2$; 
(\textbf{R2}) it must be possible to search for connecting trees  {\em orthogonally to (or, in conjunction with any) score functions} used to compare and rank the trees. This is because different graphs and applications are best served by different scores, and when exploring a graph, journalists need to experiment with several before they find interesting patterns. 
For instance, in the example above, if Mr. Shady is a citizen of DEF and ABC has offices there, the smallest solution connects them through the DEF country node; however, this is not interesting to journalists. Instead, a connection through three ABC accounts, sending money from DEF to Mr. Shady in country GHI, is likely much more interesting.  
An orthogonal requirement is (\textbf{R3}) to {\em treat graphs as undirected when searching for trees}. For instance, the graph may contain ``Mr. Shady $\xrightarrow{\text{hasAccount}}$ acct$_1$'', or, just as likely, ``acct$_1$ $\xrightarrow{\text{belongsTo}}$ Mr. Shady''. We cannot afford to miss a connecting tree because we ``expected'' an edge in a direction and it happens to be in the opposite direction.  
Further,  (\textbf{R4}) {\em \underline{all} answers need to be found (within a time and/or space budget)} for several reasons: ($i$)~continuity with the semantics of standard graph query languages, that also return all results (unless users explicitly \textsf{\small LIMIT} the result size); ($ii$)~to remain independent of, and thus orthogonal to, the cost function (recall (R2)); and,  ($iii$)~for practical reasons, given the problem complexity, which is further exacerbated by (R3), and renders complete search on large graphs unfeasible.
Finally,  (\textbf{R5}) {\em the extended queries should be efficiently executed}, even when graphs are {\em highly heterogeneous}, as in investigative journalism scenarios, where text, structured, and/or semistructured sources are integrated together. 

\titledparagraph{Contributions} 
To address the above requirements, we make the following contributions: 

\indent\textbf{(1)} We {\em formally define an Extended Query Language (EQL, in short)}, which combines together Basic Graph Pattern (or conjunctive) queries at the core of both SPARQL and Cypher, and Connecting Tree Patterns (CTPs, in short). A CTP allows searching for trees that connect $m$ groups of nodes, for $m\geq 2$. BGPs and CTPs can be freely joined. 
This addresses requirements (R1), (R2), and also (R3), since our CTP semantics returns trees regardless of the edge directions (Section \ref{sec:language-construct}).

\indent\textbf{(2)} We provide a {\em scalable EQL query evaluation strategy}, which leverages existing algorithms for the well-studied problem of evaluating conjunctive queries, contributing to (R5) (Section~\ref{sec:query-evaluation}).

\indent\textbf{(3)} For {\em CTP evaluation}, we study a set of baseline algorithms, and explain that their performance suffers due to repeated (wasted) work and/or the need to minimize the trees they find; GAM~\cite{gam-inf-sys-2022} algorithm is more efficient, but it does not scale in all cases. We introduce a powerful {\em Edge Set Pruning (ESP) technique}, which significantly speeds up the execution, but can lead to incompleteness. We then bring two orthogonal modifications which, combined, lead to our \molesp\ algorithm, for which we {\em formally establish completeness for $m\in \{2,3\}$}, which are most frequent, as well as for {\em a large class of results for arbitrarily large $m$}. This addresses requirement (R4) and contributes to (R5) (Section~\ref{sec:ctp-evaluation}). 

\indent\textbf{(4)} We experimentally show that: ($i$)~baseline algorithms inspired from breadth-first search are unfeasible even for small graphs; ($ii$)~the optimizations we bring here over the GAM algorithm~\cite{gam-inf-sys-2022} strongly reduce the search time; ($iii$)~integrating our \molesp\ algorithm with a simple conjunctive graph query engine allows to efficiently evaluate queries in our extended language (Section \ref{sec:experiments}).



\mysection{Extended Query Language (EQL)}\label{sec:language-construct}

\definition[Graph] 
A  graph $\graph(\nodes, \edges)$ consists of a set of nodes $\nodes$ and a set of edges $\edges \subseteq \nodes \times \nodes$. Each node  $n\in \nodes$ carries a label $l(n)$ from a label set $\labels$, which includes the empty label $\epsilon$. Similarly, each edge $e\in \edges$ has a label $l(e)\in \labels$. 

The two main graph data models are RDF graphs, and property graphs (PGs). To illustrate, in the following, we will rely on RDF graphs; our work can be transposed with only surface changes to PGs. 
Figure~\ref{fig:eg-graph} introduces a sample graph, assigning an integer ID and label to each node and edge. We will refer to nodes as $n_1,n_2$, etc., e.g., $n_1$ is the node whose ID is 1 and label is OrgB, and similarly to edges as $e_1,e_2$, etc. Labels of literal nodes, e.g.,  $n_{11}$, are enclosed in quotes; the other nodes are URIs. 

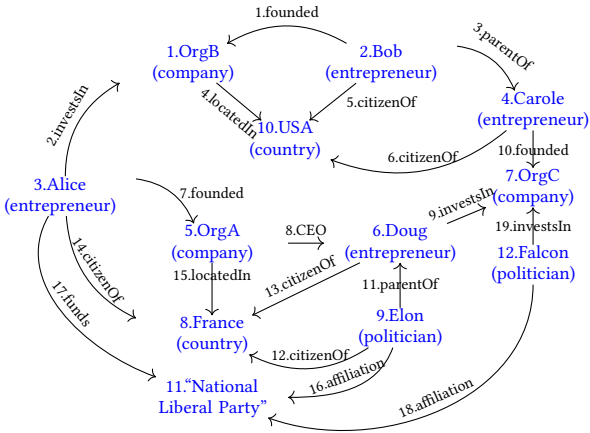
\begin{figure}[t!]
\hspace{-2mm}
\tikzstyle{node} = [ text width=2cm, text centered,inner sep=0.1pt, fill=white,font = {\footnotesize},text=blue]
\tikzstyle{arrow} = [thick,font = {\scriptsize}]

\begin{tikzpicture}[node distance=2.5cm]
\node (n1) [node] {1.OrgB \\ (company)};
\node (n2) [node,right of=n1] {2.Bob \\ (entrepreneur)};
\node (n3) [node, below left of=n1] {3.Alice \\ (entrepreneur)};
\node (n4) [node, below right=0.1cm and 0cm of n2] {4.Carole \\ (entrepreneur)};
\node (n5) [node, below right=0.1cm and 0cm of n3] {5.OrgA \\ (company)};
\node (n6) [node, right of=n5] {6.Doug \\ (entrepreneur)};
\node (n7) [node, below=0.5cm of n4] {7.OrgC \\ (company)};
\node (n8) [node, below=0.7cm of n5] {8.France \\ (country)};
\node (n9) [node, below=0.6cm of n6] {9.Elon \\ (politician)};
\node (n10) [node, below right=0.5cm and -0.8cm of n1] {10.USA \\ (country)};
\node (n11) [node, below=0.3cm of n8] {11.``National Liberal Party''};
\node (n12) [node, below=0.5cm of n7] {12.Falcon \\ (politician)};

\draw[<-] (n1) to [bend left] node [above] {\scriptsize 1.founded} (n2);
\draw[->]  (n3) to [bend left] node [sloped,above] {\scriptsize 2.investsIn} (n1);

\draw[->]  (n2) to [bend left] node [sloped,above] {\scriptsize 3.parentOf} (n4) ;
\draw[->]  (n1) -> node [sloped,below] {\scriptsize 4.locatedIn} (n10);
\draw[->]  (n2) -> (n10) node [midway,right,xshift=2pt] {\scriptsize 5.citizenOf};
\draw[->]  (n4) to [bend left] node [above] {\scriptsize 6.citizenOf} (n10) ;
\draw[->] (n3) to [bend left] node [midway,right] {\scriptsize 7.founded} (n5);

\draw[->]  (n5) -> (n6) node [midway,above] {\scriptsize 8.CEO};
\draw[->]  (n6) ->  node [sloped,above] {\scriptsize 9.investsIn} (n7);
\draw[->]  (n4) -> (n7) node [midway] {\scriptsize 10.founded};
\draw[<-]  (n6) -> (n9) node [midway,centered] {\scriptsize 11.parentOf};

\draw[->] (n9) to [bend left] node [midway,above] {\scriptsize 12.citizenOf} (n8);

\draw[->]  (n6) ->  node [sloped,above] {\scriptsize 13.citizenOf} (n8);

\draw[->] (n3) to [bend right] node [sloped,above] {\scriptsize 14.citizenOf} (n8);

\draw[->]  (n5) -> (n8) node [midway,above] {\scriptsize 15.locatedIn};

\draw[->]  (n9) to [out=250,in=0] node [sloped, above] {\scriptsize 16.affiliation} (n11) ;
\draw[->]  (n3) to [out=240,in=160] node [sloped,above] {\scriptsize 17.funds} (n11) ;
\draw[->]  (n12) to [out=270,in=340] node [sloped,above] {\scriptsize 18.affiliation} (n11) ;
\draw[->]  (n12) -> (n7) node [midway] {\scriptsize 19.investsIn};
\end{tikzpicture}
\vspace{-5.5mm}
\caption{Sample data graph. 
  \label{fig:eg-graph}}
\vspace{-5.5mm}
\end{figure}

\noindent\textbf{Node and edge properties} Graph nodes and edges may have other properties beyond labels; for instance, an RDF node may have $0$ or more {\em types}. In our example, types are shown in parentheses under the nodes. In a PG,  nodes and edges can have multiple properties. We denote by $\props$ the set of all properties that nodes and edges may have; each property $p \in \props$ is a function $p$ that, given a node $n$ (or edge $e$), returns $p(n)$, the value of property $p$ on node $n$ (and similarly for $e$). Without loss of generality, we consider that $l$:$\nodes$$\rightarrow$$\labels$ belongs to $\props$, that is, the label is a node and/or edge property. 


Let $\vars$ be a set of variable names, to be used in queries. 
Let $\Omega=\{=,<,\leq,\sim \}$ be a set of comparison operators, where $\sim$ denotes pattern matching such as SQL's \textsf{\small like} operator.  They are used to express predicates over nodes and/or edges, as follows:

  \definition[Predicate] A {\em condition} over a variable $v\,$$\in$$\,\vars$ is of the form $p(v) \;op \; c$ where $p\in \props$, $op\,$$\in$$\,\Omega$ and $c$ is a constant such that the operator $op$ is well-defined on any value of property $p$ together with $c$. A {\em predicate over $v$} is a conjunction of  conditions over $v$. An empty predicate (no conditions) over $v$ is simply $v$.

A node $n\in \nodes$ (or edge $e\in \edges$) {\em satisfies the predicate} if and only if, in every condition of the predicate, replacing $v$ with $n$ (respectively, $e$) and evaluating $op$ yields true. 
For instance, $l(v)$$\sim$$\text{"*lice"} \wedge \tau(v)$=$\tau_{\mathrm{entrepreneur}}$ is a predicate consisting of two conditions, one on the label (which must end in the string ``lice'') and one on the type, which must be entrepreneur. 
This predicate is true on the node $n_3$ in our example, and false on the other nodes and edges.
Any node or edge satisfies the empty predicate. 
{\em For readability, when a predicate consists of exactly an equality between a node or edge label and a constant, we simply use the constant to denote the predicate}, thus, $l(v)=\text{"Alice"}$ can be simply written "Alice", when this is unambiguous. However, {\em each predicate always involves exactly one variable} ($v$ in our example), {\em even when the short syntax hides it}. We will revert to the longer syntax when we need to make the variable explicit, e.g., use it several times in the query.

\definition[Edge Pattern] An edge pattern is a triple $(p_1, p_2,p_3)$ of three predicates: 
 $p_1$ holds over the source node of an edge, $p_2$ over the edge itself, and $p_3$ over the target node.

 For instance, $(l(s)$$=$$\text{"Alice"},l(e)$$=$$\text{"citizenOf"},d)$
 states that the source node $s$ is labeled "Alice" and the edge $e$ is labeled "citizenOf". The third predicate is a variable. With the above simplification, we can also write this pattern as $(\text{"Alice"},\text{"citizenOf"},d)$. 

 A core construct of graph query languages is:
 
\definition[Basic Graph Pattern]
A Basic Graph Pattern (BGP) $b$ is a set of edge patterns that are {\em connected} in the following sense. If the BGP contains at least $2$ edge patterns, each pattern must have a common variable with another edge pattern.  

A sample BGP $b_1$ is: \{$(x, \text{"citizenOf"}, \text{"USA"}),(x,\text{"founded"},\text{"OrgB"})$\}. 


\definition[CT Pattern]
A connecting tree pattern (CTP, in short) is a tuple of the form: $g = (g_1, g_2, \dots , g_m,\underline{v_{m+1}})$ where each $g_i$, $1$$\leq$$i$$\leq$$m$ is a predicate and $\underline{v_{m+1}}$ is a variable.
All variables occurring in $g_1,\ldots,g_m, \underline{v_{m+1}}$ are pairwise distinct. 

CTPs are used to find connections among nodes, as follows. When replacing each $g_i$ with a graph node,  $v_{m+1}$ is bound to a {\em subtree} of $\graph$, having these nodes as leaves (we formalize this below). To visually distinguish BGPs from CTPs, we always \underline{underline} the last variable of a CTP. 

\definition[Core query]
A core query $Q$ has a {\em head} and a {\em body}. The body is a set of $k$ BGPs, $k\geq 0$,  and $l$ CTPs, $l\geq 0$, such that  $k+l>0$, and each underlined (last) variable from a CTP appears exactly once in $Q$. The head is a subset of the body variables.

An example core query, $Q_1$, consists of $3$ BGPs and a CTP: 
\begin{tabular}{@{}l@{}l@{}l@{}}
&$(x,y,z,\underline{w}) \query$&$( \tau(x)=\tau_{\mathrm{entrepreneur}},\text{"citizenOf"},\text{"USA"})$\\
\fbox{$Q_1$}\;&&$( \tau(y)=\tau_{\mathrm{entrepreneur}},\text{"citizenOf"}, \text{"France"}),$\\
&                  &$( \tau(z)=\tau_{\mathrm{politician}},\text{"citizenOf"}, \text{"France"}),(x,y,z,\underline{w})$\\
\end{tabular}

$Q_1$ asks: ``What are the connections $\underline{w}$ between some American entrepreneur $x$,  some French entrepreneur $y$, and some French politician $z$?''
We denote the CTP of this query by $g^1$.  
%
%
%
To define core query semantics, our first notion is: 

\vspace{-1mm}
\definition[BGP embedding] \label{def:bgp-embedding} Given a BGP $b=\{t_1, \ldots, t_k\}$,  an embedding of $b$ into $\graph$ is a function $\phi$, associating to each variable $v$ in $b$, a node $n\in \nodes$ or an edge $e \in \edges$, such that ($i$)~$\phi(v)$ satisfies all the predicates on $v$ in $b$;  and ($ii$)~for every edge pattern $(s,e,d)$ in $b$, the edge $\phi(e) \in \edges$ goes from $\phi(s)$ to $\phi(d)$. 

A sample embedding $\phi$ for the first BGP of $Q_1$ maps $x$ to $n_4$, 
"USA" to $n_{10}$, 
"citizenOf" to $e_6$, etc. 

  Next, we define:

\vspace{-1mm}
\definition[Set-based CTP result]  \label{def:ctp-result} Let $g=(g_1,\ldots,g_m,\underline{v_{m+1}})$ be a CTP pattern and $S_1,\ldots, S_m$ be sets of $\graph$ nodes, called \textbf{seed sets}, such that every node in $S_i$ satisfies $g_i$, for $1$$\leq$$i$$\leq$$m$. 
The {\em result of $g$ based on $S_1,\ldots,S_m$}, denoted $g(S_1,\ldots,S_m)$, is the set of all $(s_1,\ldots,s_m,t)$ tuples such that $s_1$$\in$$\,S_1$, $\ldots$, $s_m$$\in$$\,S_m$ and $t$ is a {\em minimal} subtree of $\graph$ containing the nodes $s_1,\ldots,s_m$.  By minimal, we mean that ($i$)~removing any edge from $t$ disconnects it and/or removes some $s_i$ from $t$, and ($ii$)~$t$ contains only one node from each $S_i$. 

In our sample graph, let $S_1=\{n_2,n_4\}$ (US entrepreneurs), $S_2=\{n_3,n_6\}$ (French entrepreneurs), and $S_3=\{n_9\}$ (French politicians).  
Then, $g^1(S_1,S_2,S_3)$ includes 
$(n_4, n_6, n_9, t_\alpha)$ where the tree $t_\alpha$ consists of the edges $n_4\xrightarrow{e_{10}} n_7 \xleftarrow{e_9} n_{6} \xleftarrow{e_{11}} n_9$,  also denoted by $\{e_{10},e_{9},e_{11}\}$ for brevity. 
Another result of this CTP is $(n_2,n_3,n_9, t_\beta)$, with $t_\beta=\{e_1,e_2,e_{17},e_{16}\}$. This result is only possible because Def.~\ref{def:ctp-result} allows trees to span over $\graph$ edges {\em regardless of the edge direction}. Had it required directed trees, $t_\beta$ would not qualify, since none of its nodes can reach the others through unidirectional paths. 

The above definition allows arbitrary seed sets, in particular, an $S_i$ can be $\nodes$, the set of all graph nodes. We adjust Def.~\ref{def:ctp-result} to allow a connecting tree to have any number of nodes {\em from those seed sets equal to $\nodes$} (otherwise, only 1-node trees would appear in results). 

\vspace{1mm}
\noindent\textbf{Difference wrt path-based semantics} Consider a simple CTP $g'=(v_1,v_2,\underline{v_3})$ and two seed sets $S_1,S_2$. $g'(S_1,S_2)$  {\em may differ from} the set of all paths between an $S_1$ node and an $S_2$ node: for instance, a path going from $s_1$$\in$$\,S_1$ {\em through $s_1'$$\in$$\,S_1$} to $s_2$$\in$$\,S_2$ cannot appear in $g'(S_1,S_2)$, because of our minimality condition ($ii$), requiring  {\em direct} connections between seeds from different sets. Further, consider a CTP $g''=(v_1,v_2,v_3,\underline{v_4})$ and some seed sets $S_1,S_2,S_3$. One may try to compute $g''(S_1,S_2,S_3)$ by a three-way join of the paths from a common root node $r$, to a node from $S_1$, one from $S_2$ and one from $S_3$; we call this approach \textbf{path stitching}. The results may differ even more: ($i$)~for each tree of $n$ nodes that appears in $g''(S_1,S_2,S_3)$, the three-way join produces $n$ results, that need deduplication; ($ii$)~if a path from $r$ to $s_1$ has common nodes or even common edges with a path from $r$ to $s_2$ and/or the one from $r$ to $s_3$, the join of these paths is {\em not a tree}, thus it cannot appear in a CTP result.
This is why in this work, we compute CTP results directly (not via stitching).

Note that a CTP can have a very large number of results, as illustrated by the graph in Figure~\ref{fig:sample-chain}.  A CTP $(1,N+1,\underline{v_3})$, asking for all the connections between the end nodes, has $2^N$ solutions, or $2^{|E|/2}$, which grows exponentially in $|E|$,  the number of graph edges. This is why \textbf{complete CTP result computation may be unfeasible} in some cases, and we will include in our language \textbf{CTP filters} for limiting the CTP result computation effort. 

\begin{figure}[t!]
  \centering
\begin{tikzpicture}
  \node (0) at (0, 0) {$1$};
  \node (1) at (1, 0) {$2$};
  \draw[->] (0) to [bend left]  node[above] {$a$} (1);
  \draw[->] (0) to [bend right] node[below] {$a$} (1);
  \node (2) at (2, 0) {$3$};
  \draw[->] (1) to [bend left]  node[above] {$a$} (2);
  \draw[->] (1) to [bend right] node[below] {$b$}(2);
  \node (3) at (3, 0) {$4$};
  \draw[->] (2) to [bend left]  node[above] {$a$} (3);
  \draw[->] (2) to [bend right] node[below] {$b$}(3);
  \node (dots) at (4,0) {$\ldots$};  
   \node (5) at (5, 0) {$N$};
  \node (6) at (6.5, 0) {$N+1$};
  \draw[->] (5) to [bend left]  node[above] {$a$} (6);
  \draw[->] (5) to [bend right] node[below] {$b$} (6);
\end{tikzpicture}
\vspace{-5mm}
\caption{Sample ``chain'' graph.\label{fig:sample-chain}}
\vspace{-5mm}
\end{figure}
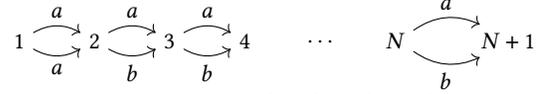




  We call {\em simple variable} in a query a variable that does not occur in the last position in a CTP. For a core query $Q$, we define: 

\definition[Simple embedding] A simple embedding of $Q$  in $\graph$ is a function $\phi$ mapping each simple variable into a $\graph$ node or edge,  such that: 

\begin{enumerate}
 \item The restriction of $\phi$ to each BGP pattern $b$  of $Q$ is an embedding of $b$ into $\graph$ (Def.~\ref{def:bgp-embedding}); 
 \item For each query CTP of the form $g=(g_1,\ldots,g_m,\underline{v_{m+1}})$, such that the simple variable in the predicate $g_i$, for $1$$\leq$$i$$\leq$$m$, is $v_i$, $\phi$ maps each $v_i$ into a $\graph$ node satisfying $g_i$.
\end{enumerate}


\definition[Core query result] \label{def:core-query-result} Let $Q$ be a core query having the head variables $u_1,\ldots,u_n$, and the simple variables  $v_1,\ldots,v_p$. 
Let $\Phi$ be the set of all $(\phi(v_1),\ldots,\phi(v_p))$ tuples for any simple embedding $\phi$ of $Q$ in $\graph$.
For each CTP $g^j$ in $Q$ of the form $(g_1,\ldots,g_m,\underline{v_{m+1}})$, let $v^j_i$ be the simple variable in $g_i$, $1$$\leq$$i$$\leq$$m$.
We define the $i$-th seed set of $g^j$, denoted $S^j_i$, as $\pi_{v^j_i}(\Phi)$, that is: all the nodes to which $v^j_i$ is bound in $\Phi$.  
The result of $Q$ is:

\vspace{.5mm}
\noindent$\;\,Q(\graph)=\pi_{u_1,\ldots,u_n}(\Phi \bowtie g^1(S^1_1,\ldots, S^1_{m_1}) \bowtie \ldots \bowtie g^l(S^l_1,\ldots, S^l_{m_l}))$

\vspace{.5mm}
\noindent where $g^1,\ldots,g^l$ are the CTPs of $Q$, having respectively $m_j$ simple variables, $1$$\leq$$j$$\leq$$l$, $g^j(S^j_1,\ldots, S^j_{m_j})$ is the set-based CTP result of $g^j$ (Def.~\ref{def:ctp-result}) on its seed sets derived from $\Phi$, and $\bowtie$ denotes the natural join on all the simple variables. 

\vspace{1.5mm}
\noindent\textbf{CTP filters} A set of orthogonal language extensions, which allow to filter (restrict) set based CTP results, are also provided. 

The keyword \textsf{\small UNI} after a CTP indicates that only {\em unidirectional} trees are sought, that is: a tree $t$, as in Def.~\ref{def:ctp-result},  must have a {\em root} node, from which a {\em directed path}
goes to each seed node in $t$.

Adding \textsf{\small LABEL} and a set of labels $\{l_1,l_2,\ldots,l_k\}$ after a CTP indicates that the edges in any result of that CTP must have labels from the given set. 

Adding \textsf{\small MAX} $n$ after a CTP indicates that only trees of at most $n$ edges are sought.

A \textbf{score function} $\sigma$ can be used to assign to each tree in a CTP result a real number $\sigma(t)$ (the higher, the better). Specifying (for a given CTP or for the whole query) \textsf{\small SCORE $\sigma$ [TOP $k$]} means that the results of each CTPs must be scored using $\sigma$, and the scores included in the query result. The optional \textsf{\small TOP $k$} allows to restrict the CTP result to those having the $k$-highest $\sigma$ scores. 

Finally, a practical way to limit the evaluation of a CTP (recall the example on Figure~\ref{fig:sample-chain}) is to specify a \textsf{\small timeout} $T$ (maximum allowed evaluation time); for simplicity, we consider the same $T$ is allotted to each CTP in a query. 

\definition[Query] A query consists of a core query, together with $0$ or more filters for each CTP. 

The semantics of a query is easily derived from that of a core query (Def.~\ref{def:core-query-result}), by filtering set-based CTP results accordingly.

\mysection{Query Evaluation Strategy}\label{sec:query-evaluation}

An EQL query consists  of a set of BGPs and a set of CTPs. Our evaluation strategy consists of the following steps: 

\noindent(A) Evaluate each BGP $b_i$, that is, compute all embeddings of its variables, and materialize them in a table $B_i$.

\noindent(B) For each CTP $g^j$ of the query, of the form $(g^j_1,\ldots, g^j_{m_j},\underline{v^j_{m_j+1}})$:
  \begin{enumerate}
  \item For $1\leq i \leq m_j$, where $v^j_i$ is the variable in $g^j_i$, compute the seed set $S_i^j$ as follows. 
    \begin{itemize}
      \item If $v^j_i$ appears also in one of the $B_i$, take  $S^j_i$  to be $\pi_{v^j_i}(B_i)$ (all the nodes to which $v^j_i$ has been bound). Further, if $g^j_i$ is not an empty predicate, restrict $S^j_i$ to only those nodes that also satisfy $g^j_i$.
      \item Otherwise, we obtain $S^j_i$ by restricting $\nodes$ (the graph's nodes set) to those that match $g^j_i$.
     \end{itemize}   
  \item \label{item:CTP}  Compute $F_j(g^j(S_1^j,\ldots, S_{m_j}^j))$, where $F_j(\cdot)$ applies all the CTP filters that may be attached to $g^j$. In practice, we actually {\em push the filters in the CTP evaluation}. Thus, we use the notation $g^j(S_1^j,\ldots,S_{m_j}^j,F_j)$ to denote the {\em set-based result of $g^j$ given its seed sets and filters}, and store it in a table $CTP_j$.
  \end{enumerate}

\noindent(C) Compute the query result as a projection on the head variables, over the natural join of the $B_i$ and  $CTP_j$ tables. 


All the above steps but (B) can be implemented by leveraging an existing conjunctive graph query engine. Thus, in the sequel, we focus on efficiently computing set-based CTP results. 

\mysection{Computing set-based CTP results}\label{sec:ctp-evaluation}

To compute $g(S_1,\ldots,S_m,F)$, we must find all the minimal subtrees of $\graph=(\nodes,\edges)$ containing exactly one node (or \textbf{seed}) from each $S_i$, also taking into account the filters $F$. Since $F$ is optional, we first discuss how to compute CTP results without any filter (Section~\ref{sec:algo-bfs} to \ref{sec:algo-molesp}), before discussing pushing filters (Section~\ref{sec:filters}). 

\noindent\textbf{Observation 1.}
Let us call \textbf{leaf} any node in a tree that is adjacent to exactly one edge. 
It is easy to see that \textbf{in each CTP result,  every leaf node is a seed}. (Otherwise, the leaf could be removed while still preserving an answer, which contradicts the minimality of the result.)
Clearly, the converse does not hold: in a result, some seeds may be internal nodes. 
We denote by  \textbf{sat($t$)} the node sets from which $t$ has a seed. 

\noindent\textbf{Observation 2.} 
As stated in Section~\ref{sec:language-construct}, we may be only computing {\em partial} CTP results. In such cases, it is reasonable to {\em return at least the smallest-size results}, given that tree size (smaller is better) is an ingredient of many score functions (see Section~\ref{sec:related-work}), and small results are easy to understand. However, we do not assume ``smaller is always better'': that is for the score function $\sigma$ to decide. Nor do we require users to specify a maximum result size, which may be hard for them to guess.  Rather,  we consider algorithms that {\em find as many results as possible, as fast as possible}, also taking into account the {\em CTP filters}, which may limit the search. 

\noindent\textbf{Seed set size} Most of our discussion assumes that no seed set is $\nodes$, and that they all fit easily in memory. We briefly discuss how the contrary situations could be handled, in Section~\ref{sec:survival}.

\mysubsection{Simple Breadth-First algorithm (\bfs)}
\label{sec:algo-bfs}
The first algorithm we consider finds the {\em t}ree results in {\em b}readth-{\em f}irst fashion, thus we call it \bfs . It starts by creating a first generation of trees $T_0$, containing a one-node tree, denoted \Init($n$), for each seed node  $n\in S_1 \cup \ldots \cup S_m$. Then, from each generation $T_i$, it builds the trees $T_{i+1}$, by  ``growing'' each tree $t$ in $T_i$, successively, with every edge $(n,n')$ adjacent to one of its nodes $n\in t$, such that: 

\vspace{-.5mm}
\begin{itemize}
\item (\textsc{Grow1}):~$n'$ is not already in $t$, and
  \vspace{-.3mm}
  \item (\textsc{Grow2}):~$n'$ is not a seed from a set $S_j\in \text{sat}(t)$.
\end{itemize}
\vspace{-.5mm}

Condition (\textsc{Grow1}) ensures we only build trees. (\textsc{Grow2}) enforces the CTP result minimality condition ($ii$) (Def.~\ref{def:ctp-result}). 
As trees grow from their original seed, they can include more seeds. When a tree has a seed from each set, it must be minimized, by removing all edges that do not lead to a seed, before reporting it in the result. For instance, with the seed sets $\{n_2\}$ and $\{n_4\}$ on the graph in Figure~\ref{fig:eg-graph}, starting from $n_2$, \bfs\ may build $\{e_5,e_4\}$, then $\{e_5,e_4,e_6\}$ before realizing that $e_4$ is useless, and removing it through minimization. Minimization slows \bfs\ down, as we experimentally show  in Section~\ref{sec:exp-gam-bfs}.
\bfs\ can build a tree in multiple ways;  to avoid duplicate work, any tree built during the search must be stored, and each new tree is checked against this memory of the search. 

It is easy to see that \textbf{\bfs\ is complete}, 
i.e., given enough time and memory, it finds all CTP results.

\mysubsection{GAM algorithm}
\label{sec:algo-gam}
The GAM (Grow and Aggressive Merge) algorithm has been introduced recently~\cite{gam-inf-sys-2022}, reusing some ideas from~\cite{dpbf}.
Unlike \bfs\ that views a tree as a set of edges, GAM {\em distinguishes one root node in each tree} it builds. 
The algorithm uses a {\em priority queue} where \textsc{Grow} opportunities are inserted, as (tree, edge) pairs such that the tree could grow from its root with that edge. 

GAM also starts from the set of \Init\ trees built from the seed sets. Next, it inserts in the priority queue all $(t,e)$ pairs for some \Init\ tree $t$ and edge $e$ adjacent to the root (only node) of $t$, satisfying the conditions (\textsc{Grow1}) and (\textsc{Grow2}) introduced in Section~\ref{sec:algo-bfs}. 
GAM then repeats the following, until no new trees can be built, or a time-out is reached: 

  \begin{enumerate}
  \item (\textsc{Grow}): Pop a highest-priority $(t,e)$ pair from the priority queue, where $e=(t.root, n')$, and build the tree $t^i$ having all edges of $t$ as well as $e$, and rooted in $n'$.
  \item (\textsc{Merge}): For any tree $t^{ii}$ already built, such that: 
    \begin{itemize}
    \item (\textsc{Merge1}): $t^{ii}$ has the same root as $t^i$, and no other node in common with $t^i$; and 
  \item (\textsc{Merge2}): sat($t^i$)$\,\cap\,$sat($t^{ii}$)$=\emptyset$, 
  \end{itemize}

  take the following steps:
  \begin{enumerate}
  \item \label{step:merge} Create $t^{iii}$, a tree having the edges of $t^i$ and those of $t^{ii}$, and the same root as $t^i$ and $t^{ii}$;
  \item \label{step:repeat} Immediately \textsc{Merge} $t^{iii}$ with all qualifying trees (see conditions \textsc{Merge1}, \textsc{Merge2}), and again merge the resulting trees etc., until no more \textsc{Merge} are possible;
      \end{enumerate}
  \item \label{step:push} For each tree $t^{iv}$ created via \textsc{Grow} or \textsc{Merge} as above: ($i$)~if $t^{iv}$ has a seed from each set, report it as a result; ($ii$)~otherwise, push in the priority queue all $(t^{iv},e^{iv})$ pairs such that $e^{iv}$ is adjacent to the (only) root node of $t^{iv}$, satisfying the conditions (\textsc{Grow1}) and (\textsc{Grow2}).  
  \end{enumerate}

\property[GAM completeness] \label{prop:gam-completeness} The GAM algorithm is complete.
  
\property[GAM result minimality] \label{prop:gam-minimal} By construction, each result tree built by GAM is minimal (in the sense of Def.~\ref{def:ctp-result}). 
\printIfExtVersion{\begin{proof}
First, note that the \Grow\ and \Merge\ conditions ensure that only trees are built, and they have at most one seed from each seed set.
Next, we show that {\em in any tree ever built by GAM, all the leaves (with the possible exception of the root) are seed nodes}. We show that by induction over the tree structure: 
\begin{itemize}
\item Every initial tree in $T_0$ consists of one seed. 
  \item Now assume the induction hypothesis is true for a tree $t$. A \Grow($t,e$) step adds a new root that is a leaf, and may or may not be a seed; the other leaves of \Grow($t,e$) are also leaves in $t$, thus seeds. 
  \item Similarly, assume this holds for two trees $t_1,t_2$. When $t$ is obtained as \Merge($t_1,t_2$), the root of $t$ is by definition not a leaf (it has at least two adjacent edges), while its leaves (those of $t_1$ and $t_2$) are seeds, by the induction hypothesis.
  \end{itemize}
  Thus, in a GAM tree, all the leaves are seeds; when the root is a leaf, it may or may not be a seed.

  We can now finalize proving that GAM builds minimal results, as follows. 
  Step~(\ref{step:push})($i$) above tests whether each new tree is a result. When this is true of a \Grow\ tree, its root is also a seed, thus Observation 1 holds for GAM results found by \Grow.
When a \Merge tree is a result, we have shown above that all its leaves are seeds, while the root is by definition ot a leaf. Thus, Observation 1 also holds for GAM results found by \Merge.
\end{proof}
}

Thus, \textbf{GAM does not need to minimize} the results it finds. 

~\\
\noindent\textbf{Search space exploration order} Unlike \bfs, GAM does not build trees in the strictly increasing order of their size; \Merge\  may build quite large trees before some other, smaller trees.
The order in which GAM enumerates trees is determined, first, by the priority of the queue which holds $(t,e)$ entries, and second, by the available \Merge\ opportunities.
In this work, \textbf{to remain compatible with \underline{any} score function, we study search algorithms regardless of (orthogonally to) the search order.}

Like \bfs, GAM may also build a tree in multiple ways. Formally:

\begin{definition}[Tree with provenance]
A tree with provenance (or provenance, in short) is a formula of one of the forms shown below, together with one node called the {\em provenance root}: 
\begin{enumerate}
    \item \Init($n$) where $n$ is a seed; the root of such a provenance is $n$ itself; 
    \item \Grow($t,e$) where $t$ is a provenance, its root is $n_0$, $e$ is an edge going from $n_0$ to $n_1$ and $n_1$ does not appear in $t$; in this case, $n_1$ is the root of the \Grow provenance;
    \item \Merge($t_1,t_2$), where $t_1$ and $t_2$ are provenances, rooted in $n_1$$=$$n_2$; in this case, $n_1$ is the root of the \Merge provenance.
\end{enumerate}
\end{definition}

We call \textbf{rooted tree} a set of edges that, together, form a tree, together with one distinguished root node.  GAM may build several provenances for the same rooted tree, e.g., \Merge(\Merge($t_1,t_2$),$t_3$) and \Merge($t_2$,\Merge($t_1,t_3$)), for some trees $t_1,t_2,t_3$. The interest of a tree as part of a possible result does not depend on its provenance. Therefore, \textbf{GAM discards all but the first provenance built for a given rooted tree}. 

\mysubsection{\bfs\ variants with \Merge}
\label{sec:bfs-variants}
The \Merge\ operation can also be injected in the \bfs\ algorithm to allow it to build some larger trees before all the smaller trees have been enumerated. We study two variants: \bfsm\ {\em m}erges each new tree resulting from \Grow\ with all its compatible partners (Step (\ref{step:merge}) in  Section~\ref{sec:algo-gam}), but does not apply \Merge\ on top of these \Merge\ results; in contrast, \bfsam\ applies both Step~(\ref{step:merge}) and Step~(\ref{step:repeat}) to {\em a}ggressively {\em m}erge. \bfsm\ and \bfsam\ are obviously complete. Like \bfs, they still need to minimize a potential result before reporting it. This is because \bfs\ algorithms {\em grow trees from any of their nodes}, thus may add edges on one side of one seed node, which later turn to be useless. GAM avoids this by growing only from the root.

\mysubsection{Edge set pruning  and ESP algorithm}
\label{sec:algo-esp}
GAM may build several rooted trees for the same set of edges. For example, on the graph in Figure~\ref{fig:gam-incompleteness-eg} with the seeds $\{B\}$, $\{C\}$, \textbf{denoting a rooted tree by its edges and underlining the root}, successive \Grow\ from B lead to B-3-\underline{C},  successive \Grow\ from C lead to \underline{B}-3-C, and \Merge\ of  two \Grow provenances yields B-\underline{3}-C.
However, the root is meaningless in a CTP result, which is simply a set of edges. We introduce: 

\begin{definition}[Edge set]\label{def:edge-set}
An edge set is a set of edges that, together, form a tree such that at most $1$ leaf is not a seed. 
\end{definition}

A result is a particular case of edge set, where all leaves are seeds (recall Observation 1). 

As GAM builds several rooted trees for an edge set, it {\em repeats some effort}: we only need to find each result once. This leads to the following pruning idea: 

\begin{definition}[Edge-set pruning (ESP)]
  \label{def:esp}
  The ESP pruning technique during GAM consists of discarding any provenance $t_1$ whose edge set is non-empty, such that another provenance $t_0$, corresponding to the same edge set, had been created previously. 
\end{definition}

We will call ESP, in short, the GAM algorithm (Section~\ref{sec:algo-gam}) enhanced with ESP. As we will show, \textbf{ESP significantly speeds up GAM execution}. However,  \textbf{ESP compromises completeness} for some graphs, seed sets, {\em and execution orders}. That is: {\em depending on the order in which various trees are built}, the first (and only, due to ESP) provenance for a given edge set may prevent the algorithm from finding some results. 

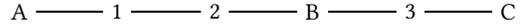
\begin{figure}
\centering
\tikzstyle{node} = [text centered, fill=white]
\tikzstyle{arrow} = [thick,->,>=stealth]

\begin{tikzpicture}[node distance=1.3cm]
\node (A) [node] {A};
\node (n1) [node, right of=A] {1};
\node (n2) [node, right of=n1] {2};
\node (B) [node, right of=n2] {B};
\node (n3) [node, right of=B] {3};
\node (C) [node, right of=n3] {C};
\draw [line width=.8pt] (A) -> (n1); 
\draw [line width=.8pt] (n1) -> (n2);
\draw [line width=.8pt] (n2) -> (B); 
\draw [line width=.8pt] (B) -> (n3);
\draw [line width=.8pt] (n3) -> (C); 
\end{tikzpicture}
\vspace{-4mm}
\caption{ESP incompleteness example. \label{fig:gam-incompleteness-eg}}
\vspace{-4mm}
\end{figure}

For instance, consider the graph in Figure~\ref{fig:gam-incompleteness-eg}, and the seed sets $S_1=\{A\}, S_2=\{B\}, S_3=\{C\}$. A possible execution of GAM is: 

\begin{enumerate}
\item Initial trees: \underline{A}, \underline{B}, \underline{C}. 
\item A set of \Grow\ lead to these trees: A–\underline{1},  B–\underline{2},  B–\underline{3},  C–\underline{3}.
\item B-\underline{3} and C-\underline{3} merge into B-\underline{3}-C.
\item \label{step:a12b} \Grow on A-\underline{1} leads to A-1-\underline{2}, which immediately merges with B-\underline{2}, forming A-1-\underline{2}-B.
\item After this point:
  \begin{itemize}
  \item If the tree A-1-2-\underline{B} is built, for instance by
    \Grow\ on  A-1-\underline{2}, ESP discards it since
    A-1-\underline{2}-B was found in step~(\ref{step:a12b}). Lacking
    A-1-2-\underline{B}, we cannot \Grow over it to build the result provenance A-1-2-B-3-\underline{C}. Nor can we build the result provenance \Merge(A-1-2-\underline{B}, \underline{B}-3-C). 
  \item By a similar reasoning, when \underline{B}-3-C is built, it is discarded by ESP, preventing the construction of of \underline{A}-1-2-B-3-C.
  \end{itemize}
   Thus, no result is found. 
\end{enumerate}

Note that {\em with a favorable execution order}, the CTP  result would be found. For instance, from \underline{A}, \underline{B}, \underline{C}, ESP could build:

\begin{enumerate}
\item Through successive \Grow: A–\underline{1}, A-1-\underline{2}, A-1-2-\underline{B}, C-\underline{3}, C-3-\underline{B}
\item Then, \Merge(A-1-2-\underline{B}, C-3-\underline{B}) is a provenance for the result. 
\end{enumerate}

This raises the question: can we pick a GAM execution order that would
ensure completeness, even when using ESP? Intuitively, the order
should ensure that {\em for each result $r$, there exists a provenance
  $p_r$ for $r$ which is certainly built}, which requires that {\em at
  every sub-expression $e$ of $p_r$, over an edge set $es$, the first
  provenance $p_{es}$ we find for $es$ happens to be rooted in a node that allows to build on $e$ until $p_r$}. Thus, the decisions made up to building $p_{es}$ would need to have a ``look-ahead'' knowledge of the {\em future} of the search, which is clearly not possible.
In the above example 
the ``bad'' order builds A-1-\underline{2}-B  first, whereas it would be more favorable to build A-1-2-\underline{B}. However, when exploring these three edges, the future of the exploration is not known; thus, we cannot ``pre-determine'' the best provenance for $es$. Recall also from Section~\ref{sec:algo-gam} that different orders may be suited for partial exploration with different score functions. In a conservative way, we consider an algorithm incomplete when {\em for some ``bad'' execution order} it may miss results. 

We show that ESP finds {\em some} answers {\em for any execution order}: 

\begin{property}[$2$-seed sets ESP completeness]\label{prop:two-completeness}
Let $t$ be a result of a CTP with $2$ seed sets. Then, $t$ is guaranteed to be found by ESP. 
\end{property}

Here and throughout this paper, {\em guaranteed to be found}, for a rooted tree or an edge set, means that at least one provenance for it is built;  ESP cannot prune the one built first.

For $1$ seed set, Property~\ref{prop:two-completeness} is trivially shown,  thus we focus on $m=2$ (two seed sets). In this case, any result is path of $0$ or more edges. We introduce: 

\begin{definition}[$(n,s)$-rooted path]\label{def:ns-rooted-path}
  Given a CTP and its seed sets $S_1,S_2,\ldots,S_m$, an {\em $(n,s)$-rooted path} is a rooted path from a seed $s$  to a root node $n$, such that the only seed in the path is $s$. 
\end{definition}

\begin{lemma}
  \label{lemma:n-rooted-path}
  Any $(n,s)$-rooted path  is guaranteed to be found by GAM with ESP. 
\end{lemma}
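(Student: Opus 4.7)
The plan is to prove the lemma by induction on the length $k$ of the $(n,s)$-rooted path $P$. The base case $k=0$ reduces to $n=s$, and $\Init(s)$ is built at initialization; ESP does not prune it since its edge set is empty. For the inductive step with $k\geq 1$, let $n'$ be the neighbor of $n$ in $P$ along the edge $e=(n',n)$, so that $P'=P\setminus\{e\}$ is an $(n',s)$-rooted path of length $k-1$. By the inductive hypothesis, some provenance $p'$ for $P'$ rooted at $n'$ has been built. Since $P$ is simple (so $n\notin P'$) and $n$ is not a seed (so \textsc{Grow2} holds trivially), the pair $(p',e)$ satisfies both \textsc{Grow} conditions, is pushed into the priority queue, and is eventually popped.

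When $(p',e)$ is popped, either ESP allows $\Grow(p',e)$ and the resulting provenance is rooted at the new leaf $n$, as required; or ESP prunes it because some earlier provenance for the same edge set $P$ already exists. The core of the argument is to show that \emph{any} provenance whose edge set equals $P$ must be rooted at $n$, so both branches yield the desired conclusion. I would prove this by a case analysis on the operation creating that provenance, using the invariant (from the proof of Property~\ref{prop:gam-minimal}) that every non-root leaf of a GAM tree is a seed. A $\Grow(q,e'')$ yielding edge set $P$ places the root at the new leaf, which must be one of the two leaves of the path $P$, namely $s$ or $n$; the choice $s$ would force $q$ to have the non-seed $n$ as a non-root leaf, violating the invariant, so only the choice $n$ is feasible. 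A $\Merge(q_1,q_2)$ yielding $P$ would root the result at the unique common node $r$ of $q_1$ and $q_2$; if $r$ were internal to $P$ or equal to $s$, the sub-tree containing $n$ would have $n$ as a non-root leaf, again contradicting the invariant, and $r=n$ would require one of $q_1,q_2$ to be an empty-edge tree at the non-seed $n$, which cannot exist as an $\Init$ tree.

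The main obstacle I expect is making the case analysis watertight, in particular carefully eliminating all ``degenerate'' decompositions of $P$ as a $\Grow$ or $\Merge$ target---for instance, verifying that removing an internal edge of $P$ disconnects it and thus cannot produce a valid connected predecessor $q$ for $\Grow$, and that the only way a $\Merge$ could split $P$ into two sub-trees sharing exactly one node is along a vertex of $P$. The boundary $k=1$, in which $n'=s$ and $p'=\Init(s)$, presents no additional difficulty: the induction closes since $\Grow(\Init(s),e)$ directly yields the $(n,s)$-rooted provenance the moment $(p',e)$ is popped, or, if ESP prunes it, the pre-existing provenance must itself be rooted at $n$ by the same case analysis.
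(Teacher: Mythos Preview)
Your argument is correct and follows essentially the same idea as the paper: the $(n,s)$-rooted path is obtained by successive \Grow\ steps from \Init($s$), and ESP cannot discard it. The paper's proof is terser---it simply asserts that this successive-\Grow\ formula is the \emph{only} provenance for the edge set, so there is nothing earlier for ESP to compare against---whereas your induction and case analysis via the ``every non-root leaf is a seed'' invariant is precisely the justification behind that uniqueness claim, phrased as the slightly weaker (but sufficient) statement that any provenance for this edge set must be rooted at $n$.
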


\begin{proof}
  We prove this by exhibiting a provenance for it. First, for each seed $s\in S_1\cup \ldots \cup S_m$, \Init($s$) is guaranteed to be built. ESP pruning does not apply.
  Then, any provenance applying only \Grow steps on an \Init\ provenance, is guaranteed to be built by GAM. Such a provenance is not pruned by ESP, because it is the \emph{only} provenance that could lead to  its edge set. 
Thus, successive \Grow on top of any seed $s$ is guaranteed to build up to $n$, leading  to the $(n,s)$-rooted path.
\end{proof}

Based on the above lemma, we prove Property \ref{prop:two-completeness}: 
\begin{proof}
  If  the result $t$ is a node ($s_1=s_2$), the property is trivial.
  If the result is a path of $1$ edge,  there are two provenances of the form \Grow(\Init); the first is already a result. 
  Now, assume  $t$ has at least two edges. For any internal node $n$ in $t$, the $(n,s_i)$-rooted paths from both the (seed) leaves $s_1,s_2$ of $t$ are guaranteed to be found, by Lemma~\ref{lemma:n-rooted-path}.  Then, one of two cases may occur: 
  \textbf{(1)} For some internal node $n_0$, both rooted paths $(n_0,s_1)$ and $(n_0,s_2)$  are created {\em before} a sequence of \Grow\ gets from \Init($s_1$) to $s_2$, and {\em before} the opposite sequence of \Grow is built from \Init($s_2$), to $s_1$. Without loss of generality, let $n_0$ denote the {\em first} internal node for which these two rooted paths are created. Immediately,  \Merge\ on these  creates a provenance of $t$. By the way we chose $n_0$, this is the first provenance for this edge set, thus not pruned. 
  \textbf{(2)} On the contrary, assume that successive \Grow\ get from one end of the path to another, {\em before} two rooted paths meet in any internal node. Assume without loss of generality that \Grow(\Grow(\ldots \Init($s_1$)\ldots)) is the first one to reach $s_2$. Again, by design, this is the first provenance for $t$, thus not pruned.
\end{proof}

CTP with two seed sets (path queries) are frequent in practice; on these, GAM~\cite{gam-inf-sys-2022} and ESP are comparable, and we experimentally show the latter is much more efficient. Next, we add more algorithmic refinements to significantly extend our completeness guarantees.

\mysubsection{MoESP algorithm}
\label{sec:algo-moesp}
We now introduce an algorithmic variant called {\em Merge-oriented ESP}, or MoESP, which finds many (but not all) CTP results for arbitrary numbers of seed sets.

MoESP works like ESP, but it creates more trees. Specifically, whenever \Grow\ or \Merge\ produces a provenance $t$ {\em having strictly more seeds than any of its (one or two) children}, the algorithm builds from  $t$ all the so-called \textbf{MoESP trees} $t'$ such that: 

\begin{itemize}
\item  $t'$ has the same edges (and nodes) as $t$, but
\item $t'$ is rooted in a seed node, distinct from the root of $t$. 
\end{itemize}

The provenance of any such $t'$ is denoted \Mo($t$, $r$) where \Mo\ is special symbol and $r$ is the root of $t'$.
Within MoESP, 
 \textbf{\Merge\ is allowed on MoESP trees, but not \Grow}. More generally, \Grow\ is disabled on any tree whose provenance includes \Mo.

Clearly, MoESP builds a strict superset of the rooted trees created by ESP (thus, it finds all results of  ESP). It also finds the result in Figure~\ref{fig:gam-incompleteness-eg}. Namely, after creating \underline{A}, \underline{B}, \underline{C}: 

\begin{enumerate}
\item  \Grow\ leads to the trees: A–\underline{1},  B–\underline{2},  B–\underline{3},  C–\underline{3}.
\item B-\underline{3} and C-\underline{3} merge into B-\underline{3}-C. MoESP trees are added at this point: B-3-\underline{C} and \underline{B}-3-C.
\item \Grow on A-\underline{1} leads to A-1-\underline{2}, which  merges with B-\underline{2}, forming A-1-\underline{2}-B. Similarly, A-1-2-\underline{B} and \underline{A}-1-2-B are added. 
\item A-1-2-\underline{B}  merges with \underline{B}-3-C, leading to the result. 
\end{enumerate}

We now generalize the example by establishing completeness guarantees for MoESP.

\begin{figure}
\centering
\tikzstyle{node} = [text centered, fill=white]
\tikzstyle{arrow} = [thick, ->,>=stealth]

\begin{tikzpicture}[node distance=8mm]
\node (A) [node] {A};
\node (n1) [node, right of=A] {1};
\node (n2) [node, right of=n1] {2};
\node (B) [node, right of=n2] {B};
\node (n3) [node, right of=B] {3};
\node (C) [node, right of=n3] {C};
\draw [color=blue, line width=.8pt](A) -> (n1); 
\draw [color=blue, line width=.8pt] (n1) -> (n2);
\draw [color=blue, line width=.8pt] (n2) -> (B); 
\draw  [color=violet, line width=.8pt](B) -> (n3);
\draw  [color=violet, line width=.8pt](n3) -> (C);
\node (n4) [node, below of=A] {4};
\node (n5) [node, below of=n1,xshift=-1.7mm] {5};
\node (n6) [node, below of=n2,xshift=1.3mm] {6};
\node (n7) [node, below of=B] {7};
\node (n8) [node, below of=n3] {8};
\node (n9) [node, below of=C] {9};
\draw [color=red, line width=.8pt] (A) -> (n4); 
\draw  [line width=.8pt](A) -> (n5);
\draw  [line width=.8pt] (B) -> (n6); 
\draw  [color=violet, line width=.8pt] (B) -> (n7);
\draw  [color=violet, line width=.8pt] (B) -> (n8);
\draw  [line width=.8pt] (C) -> (n9);
\node (D) [node, below of=n4] {D};
\node (n10) [node, below of=n5, xshift=5.7mm] {10};
\node (E) [node, below of=n7] {E};
\node (F) [node, below of=n9] {F};
\draw [color=red, line width=.8pt](n4) -> (D); 
\draw  [line width=.8pt](n5) -> (n10);
\draw  [line width=.8pt] (n6) -> (n10); 
\draw  [color=violet, line width=.8pt] (n7) -> (E);
\draw  [color=violet, line width=.8pt] (n8) -> (F);
\draw  [line width=.8pt](n9) -> (F);
\end{tikzpicture}
\vspace{-4mm}
\caption{Sample graph for MoESP discussion. \label{fig:sample-MoESP}}
\vspace{-5mm}
\end{figure}
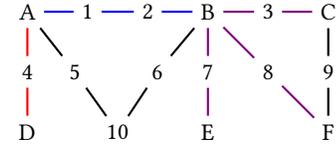

\begin{definition}[Simple and $p$-simple edge set]\label{def:simple-edge-set}
  A simple edge set is an edge set (Def.~\ref{def:edge-set}) where each leaf is a seed 
  and no internal (non-leaf) node is a seed. A simple edge set is $p$-simple, for some integer $p$, if its number of leaves is at most $p$.
\end{definition}

For instance, consider the sample graph in Figure~\ref{fig:sample-MoESP}, and the 6 seed sets $\{A\}$, $\{B\}$, $\{C\}$, $\{D\}$, $\{E\}$, $\{F\}$. The edge set A-4-D, shown in red,  is 2-simple, and so are: A-1-2-B, shown in blue; B-8-F, etc. 

\begin{definition}[Simple tree decomposition of a
  solution]\label{def:simple-decomposition} Let $t$ be a CTP result.  A simple tree decomposition of $t$, denoted $\theta(t)$, is a set of simple edge sets which ($i$)~are a partition of the edges of $t$ and ($ii$)~may share (leaf) nodes with each other.
\end{definition}

For instance, in Figure~\ref{fig:sample-MoESP}, the red, blue, and violet edges, together, form a result for the 6-seed sets CTP.  A simple tree decomposition of this solution is: \{A-4-D, A-1-2-B, B-7-E, B-8-F, B-3-C\}. It is easy to see that a tree $t$ has a unique simple tree decomposition $\theta(t)$. 


\begin{definition}[$p$-piecewise simple
  solution] \label{def:p-piecewise-simple}
  A result $t$ is $p$-piecewise simple ($p$ps, in short), for some integer $p$, if every edge set in the simple tree decomposition $\theta(t)$ 
  is $p$-simple (Def.~\ref{def:simple-edge-set}). 
\end{definition}

The sample result above in Figure~\ref{fig:sample-MoESP}
is $2$ps, since its simple tree decomposition only contains 2-simple edge sets.
The following important \moesp\ property guarantees it is found: 

\begin{property}[MoESP finds $2$-piecewise simple solutions]\label{prop:moesp-2ps}
 For any number of seed sets $m$,  MoESP is guaranteed to find any $2$-piecewise simple result. 
\end{property}

\begin{proof}
  Let $t$ be a $2$-piecewise simple solution and
  $\theta(t)=\{t_1,\ldots, t_r\}$ be its simple tree decomposition. It
  is easy to see that each $t_i$, $1\leq i \leq r$, is a path of the
  form $n_1^i, \ldots, n_m^i$ such that $n_1^i$ and $n_m^i$ are seeds,
  while no other intermediary node is a seed. 
  Lemma~\ref{lemma:n-rooted-path}, which still holds for MoESP, guarantees that rooted paths are built starting from both $n_1^i$ and $n_m^i$.
  As soon as these paths meet, a tree over the edges of $t_i$ is created, then thanks to MoESP, one tree rooted in $n_1^i$ and another rooted in $n_m^i$, over the edge set of $t_i$, are created.
  Because $\theta(t)$ is a simple tree decomposition of $t$, if $r=1$, the property is proved. If $r>1$, each seed-rooted tree based on the edge set of a $t_i$ has its root in common with at least another seed-rooted tree over another edge set(s) from $\theta(t)$. Therefore, aggressive \Merge ensures that they are eventually all merged, leading to one provenance for $t$. 
\end{proof}


For a CTP with any number $m$ of seed sets, a {\em path result} is one in which no node has more than two adjacent edges. In a path result, seed and non-seed nodes alternate, with the two ends of the paths being seeds. Thus, any path result is $2$ps. It follows then, as a direct consequence of Property~\ref{prop:moesp-2ps}:

 \begin{property}[MoESP finds all path results]\label{prop:moesp-path-solutions}
 For any CTP, MoESP finds all the path results. 
 \end{property}

 However, outside $2$ps results, MoESP may still fail. For instance, consider the graph in Figure~\ref{fig:moesp-incompleteness-eg}, and the seed sets $\{A\}, \{B\}, \{C\}$. The only result here is $3$-simple. A possible MoESP execution order is: 

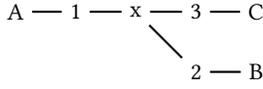
\begin{figure}
  \centering
\tikzstyle{node} = [text centered, fill=white]
\tikzstyle{arrow} = [thick,->,>=stealth]
\begin{tikzpicture}[node distance=8mm]
\node (x) [node] {x};
\node (n1) [node, left of=x] {1};
\node (A) [node, left of=n1] {A};
\node (n3) [node, right of=x] {3};
\node (C) [node, right of=n3] {C};
\node (n2) [node, below of=n3] {2};
\node (B) [node, right of=n2] {B};
\draw [line width=.8pt] (A) -> (n1);
\draw [line width=.8pt] (n1) -> (x); 
\draw [line width=.8pt] (x) -> (n2);
\draw [line width=.8pt] (x) -> (n3);
\draw [line width=.8pt] (n2) -> (B); 
\draw [line width=.8pt] (n3) -> (C);
\end{tikzpicture}
\vspace{-4mm}
\caption{MoESP incompleteness example.   \label{fig:moesp-incompleteness-eg}}
\vspace{-5mm}
\end{figure}

\begin{enumerate}
\item Starting from \underline{A}, \underline{B}, \underline{C},  \Grow\ produces A–\underline{1},  B–\underline{2},  C–\underline{3}; 
\item \label{step:moesp-bx} B–2–\underline{x}, followed by B–2–x–\underline{3}, which merges with C-\underline{3} into B–2–x–\underline{3}–C, leading also to \underline{B}–2–x–3–C and B–2–x–3–\underline{C}. 
\item \label{step:b2xa} B-2–x-\underline{1} which merges with A-\underline{1}, leading to  B-2-x-\underline{1}-A and similar trees rooted in B and A. 
\item \label{step:MoESP1} \Grow\ produces A-1-\underline{x}. ESP discards the \Merge\ of A-1-\underline{x} with B–2–\underline{x}, due to the rooted tree built at step~(\ref{step:b2xa}), over the same set of edges. 
\item \label{step:a1xc} A–1–x–\underline{3} is built, then \Merge\  with C-\underline{3} creates A–1–x–\underline{3}–C, and similar trees rooted in A and C. 
\item \label{step:MoESP2}  \Grow\ produces C-3-\underline{x}. ESP discards the merges of C-3-\underline{x} with  A-1-\underline{x} due to the 3-rooted tree built at step~(\ref{step:a1xc})  and  with  B–2–\underline{x} due to the 3-rooted tree built at step~(\ref{step:moesp-bx}). 
\item At this point, we have trees with two seeds, rooted in \underline{1}, \underline{3}, \underline{A}, \underline{B} and \underline{C}.  \Grow\  on any of them is impossible, because they already contain all the edges adjacent to their roots. There are no \Merge\ possibilities on their roots, either. Thus, the search fails to find a result. 
\end{enumerate}

%
%
%

At steps (\ref{step:MoESP1}) and (\ref{step:MoESP2}), ESP is ``short-sighted'': it prevents the construction of some trees, necessary for finding the result. Next, we present another optimization which prevents such errors.

\mysubsection{\lesp\ algorithm}\label{sec:algo-lesp}
The Limited Edge-Set Pruning (\lesp), in short, works like ESP (Section~\ref{sec:algo-esp}), but it {\em limits} edge-set pruning, as follows.

\begin{itemize}
\item \label{step:spesp:signature} We assign to each node $n$, and maintain throughout \lesp\ execution, a {\em seed signature} $ss_n$, indicating the seed sets $S_i$, $1\leq i \leq m$, such that a $(n,s_i)$-rooted path (Def.~\ref{def:ns-rooted-path}) has been built from a seed $s_i\in S_i$, to $n$, since execution started. 
  For any seed $s\in S_i$, the signature $ss_s$ is initialized to $0\ldots 1\ldots0$ (a single $1$ in the $i$-th position). For a non-seed $n$, initially $ss_n$=0; the $i$-th bit is set to $1$ when node $n$ is reached by the first rooted path from a seed in $S_i$.
\item \label{step:smesp:limit} Prevent ESP from discarding a \Merge\
  tree rooted in $n$ such that: ($i$) $\sum(ss_n)\geq 3$, that is,
  there are at least $3$ bits set to $1$ in the signature $ss_n$; and ($ii$)~$n$ has at least $3$ adjacent edges in $\graph$. 
\end{itemize}

Intuitively, the condition on $ss_n$ {\em encourages merging on nodes already well-connected to seeds}. 
We denote by $d_n$  the number of $\graph$ edges adjacent to $n$; it can be computed and stored before evaluating any query.
The condition on $d_n$ focuses the ``protection against ESP'' to \Merge\ trees  {\em rooted in nodes where such protection is likely to be most useful}: specifically, those where $3$ or more rooted paths can meet (see Lemma~\ref{lemma:3-maximal-merge} below).
\Grow\ and \Merge apply on trees ``spared'' in this way with no restriction. 

Clearly, LESP creates all the trees built by ESP, and may create more. In particular, reconsider the  graph in Figure~\ref{fig:moesp-incompleteness-eg}, the associated seed sets, and the execution steps we traced in Section~\ref{sec:algo-moesp}. 
At step (\ref{step:moesp-bx}), $ss_x$ is initialized with $010$ (there is a path from B to x).
At step (\ref{step:MoESP1}), when A-1-\underline{x} is built, $ss_x$ becomes $110$;  since $\sum(ss_x)=2$, the tree A-1-\underline{x}-2-B is pruned. 
However, at step (\ref{step:MoESP2}), when C-3-\underline{x} is built, $ss_x$ becomes $111$, which, together with $d_x=3$, spares its \Merge\ result  A-1-\underline{x}-3-C (despite the presence of several trees with the same edges). In turn, this merges immediately with B-2-\underline{x} into a result. 

We formalize the guarantees of \lesp\ as follows.

\begin{definition}[$(u,n)$ rooted merge]\label{def:n-rooted-merge}
For an integer $u\geq 3$ and non-seed node $n$, the $(u,n)$ rooted merge is the rooted tree resulting from merging a set of $u$ $(n,s_i)$ rooted paths, for some seeds $s_1,\ldots,s_u$.
\end{definition}

It follows from the (\textsc{Merge2}) pre-condition (Section~\ref{sec:algo-gam}) that in an $(u,n)$ rooted merge, each $s_i$ belongs to a different seed set.
Further, it follows from the definition of an $(n,s_i)$-rooted path,
that in a $(u,n)$ rooted merge, all seeds are on leaves. In other
words, a $(u,n)$ rooted merge is a $u$-simple edge set.

\begin{lemma}
  \label{lemma:3-maximal-merge}
  Any $(3,n)$ rooted merge is guaranteed to be found  by \lesp.
\end{lemma}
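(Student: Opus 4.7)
The plan is to exhibit, for an arbitrary $(3,n)$-rooted merge $M$, a \lesp\ provenance that is guaranteed to be built, namely $\Merge(\Merge(P_1,P_3),P_2)$, where $P_1,P_2,P_3$ are the three $(n,s_i)$-rooted paths composing $M$, with $s_1,s_2,s_3$ in three pairwise-distinct seed sets (as noted right after Def.~\ref{def:n-rooted-merge}). By Lemma~\ref{lemma:n-rooted-path} each $P_i$ is guaranteed to be built; moreover each $P_i$ has a unique provenance (the pure \Grow\ chain from \Init$(s_i)$), so none of them is ever discarded by ESP and all three remain available for subsequent \Merge\ steps. Without loss of generality, let $P_3$ be the last of the three to be completed, and consider the moment at which the final \Grow\ of that path (the one attaching $n$) fires.

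At that moment, $ss_n$ already carries the bits for the seed sets of $s_1$ and $s_2$ (set when $P_1$ and $P_2$ first reached $n$), and the freshly built $P_3$ sets the third bit, so $\sum(ss_n)\geq 3$. Moreover, because the three rooted paths share only the node $n$ (otherwise their union would contain a cycle and $M$ would not be a tree), they depart $n$ through three distinct edges of $\graph$, giving $d_n\geq 3$. Both conditions of the \lesp\ protection rule at $n$ are thus satisfied, so every \Merge\ tree rooted at $n$ produced from this point on is shielded from ESP pruning.

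Immediately after this final \Grow, the aggressive \Merge\ cascade fires. The pair $(P_1,P_3)$ satisfies (\textsc{Merge1}) (common root $n$, and $n$ is their only shared node, since otherwise $M$ would not be a tree) and (\textsc{Merge2}) (singleton sat sets from two distinct seed sets), producing an intermediate tree $T_{13}$ rooted at $n$ with edge set $P_1\cup P_3$. By the protection just established, $T_{13}$ survives even if its edge set had already been reached by a previously built provenance. Aggressive \Merge\ then composes $T_{13}$ with $P_2$, which again fulfils both Merge pre-conditions, yielding $M$ rooted at $n$; the same protection keeps $M$ alive.

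The main obstacle, and the precise reason we need \lesp\ rather than plain ESP, is this intermediate step: earlier linear \Grow\ chains traversing $n$ (for example, a sequence from $s_1$ through $n$ to $s_2$, or directly from $s_1$ to $s_3$) may already have locked the edge sets $P_i\cup P_j$ to roots other than $n$, in which case plain ESP would discard $\Merge(P_1,P_3)$ and block the final merge. The $(\sum(ss_n)\geq 3,\,d_n\geq 3)$ trigger of \lesp\ is designed precisely to lift this block once three seed-set bits have reached $n$, which is exactly the situation in which a $(3,n)$ rooted merge needs to be assembled.
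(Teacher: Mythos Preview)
Your argument is correct and follows essentially the same route as the paper's proof: invoke Lemma~\ref{lemma:n-rooted-path} for the three $(n,s_i)$-rooted paths, observe that once the last one is built $\sum(ss_n)\geq 3$, and conclude that the \lesp\ protection at $n$ allows the merges to go through. You are in fact more careful than the paper on one point: you \emph{derive} $d_n\geq 3$ from the fact that the three edge-disjoint paths leave $n$ through distinct edges, whereas the paper simply calls it ``the hypothesis $d_n\geq 3$'' without justification. Your explicit tracing of the intermediate merge $T_{13}$ and the final paragraph explaining why plain ESP would fail are useful elaborations, but the core idea is the same.
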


\begin{proof}
  For any non-seed node $n$,  Lemma \ref{lemma:n-rooted-path} (which also holds for \lesp) ensures that any $(n,s_i)$-rooted path is found. 
  As soon as the third one is built, $\sum(ss_n)$ becomes $3$. This, and the hypothesis $d_n\geq 3$,  ensure that the \Merge\ of the three is not pruned. 
\end{proof}

\property\label{prop:n-maximal-merge}
For any integer $u\geq 3$ and non-seed node $n$, any $(u,n)$ rooted merge is guaranteed to be found by \lesp. 

\begin{proof}
For $u=3$ this is established by Lemma~\ref{lemma:3-maximal-merge}. 
Once the first $(3,n)$ rooted merge has been built and kept, this ensures both that $d_n\geq 3$ and $\sum(ss_n)\geq 3$. 
Then, whenever a new $(n,s_i)$ rooted path, satisfying the \Merge\ pre-conditions, is built, it is aggressively merged with the first $(3,n)$ rooted path, and the result is protected from pruning by \lesp's special provision. The same  holds during all subsequent merges with other $(n,s_j)$ rooted paths.
\end{proof}

For $4$ or more seed sets, \lesp\ may miss results that are not $(u,n)$ rooted merges. For instance, consider the following order of execution for $S=(\{A\}, \{B\}, \{C\}, \{D\})$ on the graph in Figure \ref{fig:gam-incompleteness-4-keyword}:


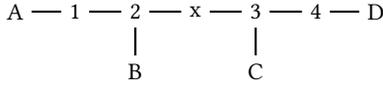
\begin{figure}
  \centering
  \tikzstyle{node} = [text centered, fill=white]
\tikzstyle{arrow} = [thick,->,>=stealth]
\begin{tikzpicture}[node distance=8mm]
\node (x) [node] {x};
\node (n2) [node, left of=x] {2};
\node (B) [node, below of=n2] {B};
\node (n1) [node, left of=n2] {1};
\node (A) [node, left of=n1] {A};
\node (n3) [node, right of=x] {3};
\node (C) [node, below of=n3] {C};
\node (n4) [node, right of=n3] {4};
\node (D) [node, right of=n4] {D};
\draw [line width=.8pt] (x) -> (n2);
\draw [line width=.8pt] (n2) -> (n1); 
\draw [line width=.8pt] (n1) -> (A);
\draw [line width=.8pt] (x) -> (n3);
\draw [line width=.8pt] (n3) -> (n4); 
\draw [line width=.8pt] (n4) -> (D);
\draw [line width=.8pt] (n2) -> (B);
\draw [line width=.8pt] (n3) -> (C); 
\end{tikzpicture}
\vspace{-4mm}
  \caption{\lesp\ incompleteness example with $4$ seed sets.}\label{fig:gam-incompleteness-4-keyword}
  \vspace{-4mm}
\end{figure}

\begin{enumerate}
\item From \underline{A}, \underline{B}, \underline{C}, \underline{D}, \Grow\ builds: A–\underline{1},  B–\underline{2}, C-\underline{3}, D-\underline{4}.
\item \label{enum:problem-trees1} \Grow\ builds B–2-\underline{1} which merges with A-\underline{1} into A-\underline{1}–2-B.
\item \label{enum:problem-trees2}  \Grow\ builds C-3-\underline{4} which merges with D-\underline{4} into C-3-\underline{4}-D.
\item \Grow\ builds: A-1-\underline{2};
    B–2-\underline{x} which cannot merge with B-\underline{2} as A-\underline{1}–2-B exists, and $\sum(ss_2)=2$;
    D-4-\underline{3} which cannot merge with C-\underline{3} as C-3-\underline{4}-D exists, and $\sum(ss_3)=2$.
\item C-3-\underline{x} merges with B–2-\underline{x} to build B–2-\underline{x}-3-C. 
\item C-3-x-\underline{2} merges with: A-1-\underline{2}, leading to C-3-x-\underline{2}-1-A;  and B–\underline{2}, leading to C-3-x-\underline{2}-B.
\item Similarly, B-2-x-\underline{3}, aggressively merges with C-\underline{3}, leading to B-2-x-\underline{3}-C, and D-4-\underline{3}, leading to B-2-x-\underline{3}-4-D.
\item Progressing similarly, we can only merge at most $3$ rooted paths, in nodes 2, x or 3. 
  We cannot merge with a path leading to the $4$th seed, because the trees with the edge sets A-1-2-B and C-3-4-D, built at (\ref{enum:problem-trees1}), (\ref{enum:problem-trees2}) above, are not rooted in 2 nor 3, respectively, and these are the only nodes satisfying the \lesp\  condition that ``spares'' some \Merge trees.
\end{enumerate}

\mysubsection{MoLESP algorithm}
\label{sec:algo-molesp}

Our last algorithm, called \molesp, is a GAM variant with ESP and {\em both} the modifications of MoESP (which injects more trees) and \lesp\ (which avoids ESP pruning for some \Merge trees). Clearly, \molesp\  finds all the trees found by MoESP and \lesp. Further: 

\begin{property}[MoLESP finds all $3$ps results] MoLESP is guaranteed to find all the $3$-piecewise simple results.\label{prop:smoesp-3ps}
\end{property}
\begin{proof}
  Let $t$ be a $3$ps result. If $t$ was $2$ps, MoESP finds it (Property~\ref{prop:moesp-2ps}), thus MoLESP also does.

 Now consider that $\theta(t)$ has some $3$-simple edge sets that are not $2$-simple (thus, $m\geq 3$). 
  We show that for any $3$-simple edge set in $\theta(t)$, one provenance is built. Let $t^3$ be such an edge set: its three leaves, denoted $n_1,n_2,n_3$, are seeds, and no internal node is a seed. Let $c$ denote the central node in $t^3$ (connected to $n_1,n_2,n_3$ by pairwise disjoint paths). 
  $t^3$ is a $(3,c)$ rooted merge (recall Def. \ref{def:n-rooted-merge}) and one provenance for it is built (Lemma \ref{lemma:3-maximal-merge}).
  
 The rest of the proof follows the idea in the proof of Property~\ref{prop:moesp-2ps}. The MoESP aspect of \molesp\ guarantees that for each edge set in $\theta(t)$, one tree rooted in each seed  is built and not pruned; eventually, aggressive \Merge of these trees builds a provenance for $t$. 
\end{proof}

As an important consequence: 

\begin{property}MoLESP is complete for $m\leq 3$ seed sets.
\end{property}
\begin{proof}\vspace{-1.5mm}
  Consider the possible result shapes: ($i$)~a single node $s_1=s_2=s_3$: no ESP applies, thus it is found;
  ($ii$)~a path going from $s_1=s_2$ to $s_3$; such a result is $2$-simple; 
  ($iii$)~a path going from $s_1$ to $s_2$ and then to $s_3$, for some pairwise distinct $s_1,s_2,s_3$; such a  result is $2$ps; 
  ($iv$)~a tree with three distinct leaves $s_1,s_2,s_3$, which is $3$-simple. 
  In cases ($ii$), ($iii$), ($iv$),  Property~\ref{prop:smoesp-3ps} ensures the result is found.
\end{proof}

\vspace{-1.5mm}
Our strongest completeness result is:

\begin{property}[Restricted \molesp\ completeness] For any CTP of $m\geq 1$ seeds, \molesp\  finds any result $t$, such that:
  each edge set $es\in \theta(t)$ is a $(u,n)$-rooted merge (Def.~\ref{def:n-rooted-merge}), for some integer $1\leq u\leq m$ and non-seed node $n$ in $es$. 
\label{prop:restricted-nps}
\end{property}

\begin{proof}\vspace{-1.5mm}
  Let $t$ be a result, and assume it is $v$-piecewise simple, for some integer $v$. If $v\in \{2, 3\}$, Property \ref{prop:smoesp-3ps} ensures \molesp\ finds it. 

  On the contrary, assume $v\geq 4$ and  let $t^4\in \theta(t)$ be a
  $(v,n)$-rooted merge for some non-seed node $n$, thus, also
  $v$-simple. Property \ref{prop:n-maximal-merge}, which also holds
  during \molesp, guarantees that one provenance for $t^4$ is
  built. The end of our proof leverages the MoESP aspect of the
  algorithm: for each such edge set in $\theta(t)$, one tree rooted in each seed is built and not pruned; eventually, aggressive \Merge of these trees builds a provenance for $t$. 
\end{proof}

\begin{figure}[t!]
  \vspace{-6mm}
  \centering
  \tikzstyle{node} = [text centered, fill=white]
\tikzstyle{arrow} = [thick,->,>=stealth]
\begin{tikzpicture}[node distance=8mm]
\node (A) [node] {A};
\node (n1) [node, right of=A] {1};
\node (B) [node, right of=n1] {2};
\node (n2) [node, right of=B] {3};
\node (D) [node, right of=n2] {C};
\node (n3) [node, below of=n2, yshift=2mm] {7};
\node (C) [node, right of=n3] {F};
\node (n4) [node, right of=D] {4};
\node (n5) [node, right of=n4] {5};
\node (E) [node, right of=n5] {D};
\node (F) [node, below of=A, yshift=2mm] {E};
\node (n6) [node, right of=F] {6};
\draw [line width=.8pt] (A) -> (n1);
\draw [line width=.8pt] (n1) -> (B); 
\draw [line width=.8pt] (B) -> (n2);
\draw [line width=.8pt] (n2) -> (D);
\draw [line width=.8pt] (D) -> (n4); 
\draw [line width=.8pt] (n4) -> (n5);
\draw [line width=.8pt] (n5) -> (E); 
\draw [line width=.8pt] (B) -> (n3);
\draw [line width=.8pt] (n3) -> (C);
\draw [line width=.8pt] (n6) -> (F);
\draw [line width=.8pt] (n6) -> (B);
\end{tikzpicture}
\vspace{-4mm}
  \caption{\molesp\  completeness example.}\label{fig:gam-molesp}
\end{figure}

\begin{algorithm}[t!]
\setstretch{0.55}
\caption{\textsc{MoLESP}(graph $\graph$, seed sets $(S_1 \ldots,  S_m)$)}\label{alg:gamsearch-smoesp}
\KwOutput{Set of results, $\mathbf{Res}$}
Priority queue $\mathbf{PrioQ} \gets$ new priority queue\;
History $\mathbf{Hist}$ $\gets$ new set of edge sets\;
\ForEach {$S_i, 1\leq i \leq m$}{
\ForEach {$n_i^j \in S_i$}{
$t_i^j \gets$ \Init($n_i^j$); 
\textsc{processTree}($t_i^j$)\; 
}
}
\While{$\mathbf{PrioQ}$ is not empty}{
$(t,e) \gets poll(\mathbf{PrioQ})$; 
$t' \gets $ \Grow ($t,e$)\;  
Update $ss_{root(t')}$; 
\textsc{processTree}($t'$)\;
}
\end{algorithm}
\setlength{\textfloatsep}{0pt}
\setlength{\floatsep}{0pt}
\begin{algorithm}[t!]
\setstretch{0.55}
\caption{Procedure \textsc{processTree}(provenance $t$)} \label{alg:process-tree}
\If{\textsc{isNew}$(t)$}{
Add $t$ to $\mathbf{Hist}$ \;
\If{\textsc{isResult}$(t)$}{
Add $t$ to $\mathbf{Res}$\;
}
\Else{
\textsc{recordForMerging}($t$)\; 
\If{$t$ is not a MoESP tree}{ 
 \For{edge $e \in$ adjacentEdges($t.root$)}{
 \If{$hasNotBeenInQueue(t,e)$}{
 \label{line:push}Add $(t,e)$ to $\mathbf{PrioQ}$\; 
}
}
}
}
}
\end{algorithm}

\begin{algorithm}[t!]\caption{Procedure \textsc{recordForMerging}(tree $t$)}
\setstretch{0.55}
Add $t$ to $\mathbf{TreesRootedIn}[t.root]$\;
\For{$n \in (nodes(t) \, \cap \, \cup_i(S_i))$}{
\label{line:moesp1} Copy $t$ into a new tree $t'$, rooted at $n$, with provenance \textsc{Mo}$(t,n)$\;
Add $t'$ to $\mathbf{TreesRootedIn}[n]$\;
\label{line:moesp2}\textsc{MergeAll}$(t')$\;
}
\end{algorithm}

\begin{algorithm}[t!]\caption{Procedure \textsc{isNew}(tree $t$)}
\setstretch{0.55}
%
\If{$t \notin \mathbf{Hist}$}{
return $true$\;
} 

\If{$\Sigma(ss_{t.root}) \geq 3$ and $d_{t.root}\geq 3$}{
\If{$t \notin \mathbf{TreesRootedIn}[t.root]$}{
return $true$\;
}
}
return $false$\; 
\end{algorithm}

\vspace{-15mm}
For example, in Figure~\ref{fig:gam-molesp}, with the six seeds $A$ to $F$, the result is guaranteed to be found by \molesp. Depending on the exploration order, \moesp\ and \lesp\ may not find it.

\vspace{1mm}
\noindent\textbf{MoLESP algorithm}  Algorithms
\ref{alg:gamsearch-smoesp} to \ref{alg:mergeall-smoesp}, together,
implement MoLESP. They share a set of global variables whose names
start with an uppercase letter: $\mathbf{Res}, \mathbf{PrioQ}$,
$\mathbf{Hist}$ (the search history), and $\mathbf{TreesRootedIn}$ (to
store the trees by their roots); the latter is needed to find \Merge\ candidates fast. Variables with lowercase names are local to each algorithm.
 \textsc{processTree} feeds the priority queue with (tree, edge) pairs at line \ref{line:push}. 
 \textsc{recordForMerging} injects the extra MoESP trees (Section~\ref{sec:algo-moesp}) at lines \ref{line:moesp1} to \ref{line:moesp2}. 
 \textsc{isNew} implements limited edge-set pruning based on the history, and the two conditions that can ``spare'' a tree from pruning (Section~\ref{sec:algo-lesp}). 
\textsc{mergeAll} implements aggressive merging; by calling \textsc{processTree} on each new \Merge\ result, through \textsc{recordForMerging}, the result is available in the future iterations of \textsc{mergeAll}, thus ensuring all the desired \Merge.

\mysubsection{CTP evaluation in the presence of filters}
\label{sec:filters}
We now briefly explain how various CTP filters (Section~\ref{sec:language-construct}) can be inserted within the above algorithms. 
\textsf{UNI}-directional search is
enforced by adding pre-conditions to \Grow\ and \Merge, to ensure we only create the desired provenances. 
\textsf{\small LABEL} $\{l_1,l_2,\ldots,l_k\}$ is enforced 
by restricting the \Grow\ edges to only those carrying one of these labels; in GAM and its variants, we only add in the queue (line \ref{line:push} in \textsc{processTree}),  (tree, edge) pairs where the edge has an allowed label. \textsf{\small MAX} $n$ prevents \Grow\ and \Merge\ from creating a tree of more than $n$ edges. \textsf{\small timeout} $T$ is checked after each newly found rooted tree and within each algorithm's main loop.

For \textsf{\small SCORE $\sigma$ [TOP $k$]}, the simplest implementation calls $\sigma$ on each new result; a vast majority of the proposed score functions can score each result independently. If the score of a result can only be computed once {\em all} the results are found, e.g.~\cite{rootrank@comad2019, kws@infsys2020}, the results need to be accumulated. For \textbf{any given score} $\sigma$, a smarter implementation may {\em favor (with guarantees, or just heuristically) the early production of higher-score results, by  appropriately chosing the priority queue order}; this allows search to finish faster. \textbf{Any} order can be chosen in conjunction with \molesp, since its completeness guarantees are independent of the exploration order.

\setlength{\textfloatsep}{0pt}
\setlength{\floatsep}{0pt}
\begin{algorithm}[t!]
\setstretch{0.55}
\caption{Procedure \textsc{MergeAll}(tree $t$)}\label{alg:mergeall-smoesp}
$\mathbf{toBeMerged} \gets \{t\}$\;
\While{$\mathbf{toBeMerged} \neq \emptyset$}{
$\mathbf{currentTrees} \gets \mathbf{toBeMerged}$; $\mathbf{toBeMerged} \gets \emptyset$\;
\For{$t'$ $\in$ $\mathbf{currentTrees}$}{
$\mathbf{mergePartners} \gets \mathbf{TreesRootedIn}[t'.root]$\;
\For{$t_p \in \mathbf{mergePartners}$}{
\If{sat($t'$)$ \,\cap\, $sat($t_p$)$\,=\emptyset$ and $t' \cap t_p = \{t'.root\}$}{
$t'' \gets\,$\Merge($t', t_p$)\; 
\If{\textsc{isNew}($t''$)}{
Add $t''$ to $\mathbf{toBeMerged}$\;
\textsc{processTree}($t''$)\;
}
}
} 
}
}
\end{algorithm}

\mysubsection{Handling very large seed sets}
\label{sec:survival}
Our CTP evaluation algorithms build \Init\ trees for each seed. This has two risks: 
($i$)~when one or more seed sets are $\nodes$ (all graph nodes), exploring them all may be unfeasible; 
($ii$)~one or more seed sets may be subsets of $\nodes$, yet still much larger, e.g., one or more orders of magnitude, than the other seed sets.
To handle ($i$), assuming other seed sets are smaller, we only start
exploring (\Init, \Grow\ etc.) from the other seed sets, and simplify
accordingly the algorithms, since any encountered node is acceptable as a match for the $\nodes$ seed set(s).
To handle ($ii$), borrowing ideas from prior work~\cite{banks-2}, we use {\em multiple priority queues}, one for each subset of the seed sets, and \Grow\  at any point {\em from the queue having the fewest (tree, edge) pairs}. Thus, exploration initially focuses on the neighborhood of the smaller seed sets, and hopefully encounters \Init\ trees from the large seed sets, leading to results.

\mysection{Experimental evaluation}\label{sec:experiments}


We compare CTP evaluation algorithms, then consider systems capable, to some extent, to evaluate the language we introduced.

\mysubsection{Software and hardware setup}\label{subsec:impl}

We implemented a parser and a query compiler for our language (Section~\ref{sec:language-construct}) as an extension of SPARQL, and all the CTP evaluation algorithms from Section~\ref{sec:ctp-evaluation}, in Java 11. Our graphs are stored in a simple table \textsf{\small graph(\underline{id},source, edgeLabel, target)} within PostgreSQL 12.4; unless otherwise specified, we delegate to Postgres the BGP evaluation, and joining their results with CTP ones (Section~\ref{sec:query-evaluation}). 
When comparing CTP evaluation algorithms with in-memory competitors, we load the graph in memory prior to evaluating CTPs.

We executed our experiments on a server equipped with 2x10-core Intel Xeon E5-2640 CPUs $@$ 2.4GHz, with 128-GB DRAM.  Every execution point is averaged over $3$ executions.

\mysubsection{Baselines}
\label{sec:baselines}
\noindent\textbf{CTP evaluation (keyword search) algorithms}
Our focus is on algorithms that search for connecting trees ($i$)~traversing edges in both directions, ($ii$)~orthogonally wrt the score function,  ($iii$)~exhaustively, at least up to $m$=3 seed sets, ($iv$)~capable of returning as many solutions as requested, if given enough time and memory, and ($v$)~applicable to arbitrary graphs, i.e., not requiring a regular graph structure.
In the literature, only the GAM algorithm~\cite{gam-inf-sys-2022} (Section~\ref{sec:algo-gam}) fits the bill. The \bfs, \bfsm, \bfsam\  algorithms (Section~\ref{sec:algo-bfs} and \ref{sec:bfs-variants}) also satisfy these conditions, and are thus natural comparison baselines; like virtually all algorithms from the literature, they start from the seeds and move gradually away looking for results.

  QGSTP \cite{qgstp@www21} and LANCET \cite{lancet@vldb2021} are the most recent GSTP approximation algorithms, for specific cost functions based on node and edge (LANCET) weights. 
  LANCET relies on DPBF~\cite{dpbf} to find an initial result, which it then improves.  
  Since QGSTP has shown strong advantage over DPBF~\cite{qgstp@www21}, we select QGSTP as a baseline. 
  QGSTP runs in polynomial time in the size of the graph, and by design, returns only \emph{one} result; we used the authors' code.

\noindent\textbf{Graph query engines}
Our first two baselines only support  {\em checking, but not returning} unbounded-length, unidirectional paths whose edge labels match a regular expression that users {\em must} provide, that is: one cannot ask for ``any path''.
Specifically, we use \textbf{Virtuoso} OpenSource v7.2.6 to evaluate SPARQL 1.1 queries that come as close as possible to the semantics of our language. 
Internally, Virtuoso translates an incoming SPARQL query into an SQL dialect\footnote{Accessible using the built-in function \texttt{sparql\_to\_sql\_text()}.} before executing it. 
Our second baseline, named \textbf{Virtuoso-SQL}, consists of editing these SQL-like queries to remove label constraints and thus query the graph for connectivity between nodes.
However, Virtuoso's SQL dialect prevented us from {\em returning} the nodes and edge labels along the found paths (whereas standard recursive SQL allows it).

Our next three baselines support checking {\em and returning} paths. 
\textbf{JEDI}~\cite{jedi@vldb2018}  returns all the data paths matching a SPARQL property path; we use the authors' code. 
\textbf{Neo4j} supports Cypher queries asking for all directed or undirected paths between two sets of nodes. 
Finally, we used recursive queries in \textbf{Postgres} v12.4 to return the label on paths between node pairs.

\mysubsection{Datasets and queries}
\label{sec:datasets}
We experiment with both synthetic and real-world RDF graphs. 

\begin{figure}
\centering
\tikzstyle{node} = [text centered, fill=white]
\tikzstyle{arrow} = [thick,->,>=stealth]

%
\begin{tikzpicture}[node distance=0.7cm]
\node (A) [node] { A};
\node (n1) [node, right of=A] { 1};
\node (n2) [node, right of=n1] { 2};
\node (B) [node, right of=n2] { B};
\node (n3) [node, right of=B] { 3};
\node (n4) [node, right of=n3] { 4};
\node (C) [node, right of=n4] { C};
\node (n5) [node, below of=A] { 5};
\node (D) [node, below of=n5] { D};
\node (n7) [node, below of=B] { 7};
\node (F) [node, below of=n7] { F};
\node (n9) [node, below of=C] { 9};
\node (H) [node, below of=n9] { H};
\draw [line width=.8pt] (A) -> (n1); 
\draw  [line width=.8pt] (n1) -> (n2); 
\draw [line width=.8pt]  (n2) -> (B); 
\draw [line width=.8pt] (B) -> (n3); 
\draw  [line width=.8pt] (n3) -> (n4); 
\draw  [line width=.8pt] (n4) -> (C); 

\draw [line width=.8pt] (A) -> (n5); 
\draw  [line width=.8pt] (n5) -> (D); 

\draw [line width=.8pt] (B) -> (n7); 
\draw  [line width=.8pt] (n7) -> (F); 

\draw [line width=.8pt] (C) -> (n9); 
\draw  [line width=.8pt] (n9) -> (H); 
\end{tikzpicture}
\begin{tikzpicture}[node distance=0.7cm]
\node (c) [node] {5};
\node (n1) [node, left of=c] {1};
\node (A) [node, left of=n1] {A};
\node (n2) [node, right of=c] {2};
\node (B) [node, right of=n2] {B};
\node (n3) [node, below of=n1] {3};
\node (C) [node, below of=A] {C};
\node (n4) [node, below of=n2] {4};
\node (D) [node, below of=B] {D};
\draw [line width=.8pt] (c) -> (n1); 
\draw [line width=.8pt] (n1) -> (A); 
\draw [line width=.8pt] (c) -> (n2); 
\draw [line width=.8pt] (n2) -> (B); 
\draw [line width=.8pt] (c) -> (n3); 
\draw [line width=.8pt] (n3) -> (C); 
\draw [line width=.8pt] (c) -> (n4); 
\draw [line width=.8pt] (n4) -> (D); 
\end{tikzpicture}
\begin{tikzpicture}[node distance=0.7cm]
\node (A) [node] {A};
\node (n1) [node, right of=A] {1};
\node (B) [node, right of=n1] {B};
\node (n2) [node, right of=B] {2};
\node (C) [node, right of=n2] {C};
\draw [line width=.8pt] (A) -> (n1); 
\draw [line width=.8pt] (n1) -> (B); 
\draw [line width=.8pt] (B) -> (n2);
\draw [line width=.8pt] (n2) -> (C); 
\end{tikzpicture}
\vspace{-4mm}
\caption{Synthetic graphs: Comb($3,1,2,3$) at the top left, Star($4,2$) at the top right, and Line($3,1$) at the bottom. \label{fig:syngraph-topology}}
\end{figure}
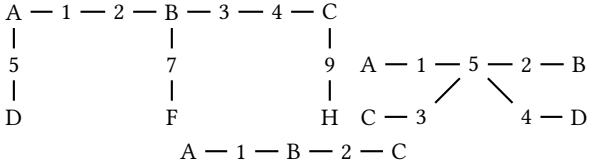

\titledparagraph{To compare CTP evaluation algorithms}, we generate three sets of  parameterized graphs and associated CTPs 
(Figure \ref{fig:syngraph-topology}).  The seeds are  labeled $A, B, \ldots, H$, non-seed nodes are labeled $1, 2$ etc.; each seed set is of size 1. 
\textbf{Line}($m,nL$) contains $m$ seeds, each connected to the next/previous seed by $nL$ intermediary nodes, using $sL$$=$$nL$$+$$1$ edges.  
\textbf{Comb}($nA,nS,sL,dBA$) consists of a line, from which a lateral segment (called {\em bristle}) exits each seed. There are $nA$ bristles, each made of $nS$ segments (a segment ends in another seed); each bristle segment has $sL$ triples, and there are $dBA$ nodes in the main line between two successive bristles.
The number of seeds is $m$$=$$nA$$\cdot$$(nS$$+$$1$$)$. 
\textbf{Star}($m,sL$) has a central node connected to each of the $m$ seeds by a line of $sL$ edges. 

On each Line, Comb, and Star graph, we run a CTP defined by the $m$ seeds, having $1$ result. For instance, on the Star in Figure~\ref{fig:syngraph-topology}, the seed sets are $\{A\}, \{B\}, \{C\}, \{D\}$.  On Line and Comb, the result is $2$ps (Def.~\ref{def:p-piecewise-simple}), while on Star, it is a $(u,n)$ rooted merge (Def.~\ref{def:n-rooted-merge}). Thus, by Property~\ref{prop:restricted-nps}, MoLESP is guaranteed to find them. 
The topology of Line graphs minimizes the number of subtrees for a given number of edges and seeds; specifically, there are $O((m$$\cdot$$nL)^2)$ subtrees, while the number of rooted trees is in  $O((m$$\cdot$$nL)^3)$. On the contrary, the Star topology raises the number of subtrees to $O(2^m$$\cdot$$sL^2)$, while its number of rooted trees is in $O(2^m$$\cdot$$sL^3)$. In Comb and Line graphs, MoESP trees (Section~\ref{sec:algo-moesp}) are part of results. 

\titledparagraph{To study the evaluation of our extended query language}, we generate parameterized \textbf{Connected Dense Forest (CDF)} graphs (see Figure~\ref{fig:dense-forest-summary}). 
Each graph contains a {\em top forest}, and a {\em bottom forest}; each of these is a set of $N_T$ disjoint, complete binary trees of depth $3$. 
{\em Links} connect leaves from the top and bottom forests. We generate CDFs for $m$$\in$$\{2,$$3\}$: when $m$$=$$2$, chains of triples connect a top leaf to a bottom one; when $m$$=$$3$, a Y-shaped connection goes from a top-forest leaf, to two bottom-forest ones. A CDF graph contains $N_L$ links, each made of $S_L$ triples. Only top leaves that are targets of ``c'' edges can participate to links, and we concentrate the links on 50\% of them (the others have no links). When $m$$=$$2$, only 50\% of the bottom forest leaves that are targets of ``g'' edges can participate; when $m$=3, 50\% of all the bottom forest leaf can participate.
The links are uniformly distributed across the eligible leaves. 
%
%
A CDF has $12$$\cdot$$N_T$$+$$N_L$$\cdot$$S_L$ edges; it has 
$14$$\cdot$$N_T$$+$$N_L$$\cdot$$(S_L$$-$$1$$)$ nodes if  $m$$=$$2$,  and
$14$$\cdot$$N_T$$+$$N_L$$\cdot$$S_L$ if $m$$=$$3$.  

On CDF graphs with $m$$=$$2$, we run the query
($v$,$tl$,$\underline{l}$) $\query$($x$,"c",$tl$), ($v$,"g",$bl$), ($bl$,$tl$,$\underline{l}$) whose two BGPs bind $tl$, respectively, $bl$ to leaves from the top and bottom forest, while its CTP asks for all the paths between each pair of such leaves. 
On graphs with $m$$=$$3$, we run ($v$,$tl$,$\underline{l}$)$\query$($x$, "c", $tl$), ($v$,"g",$bl_1$), ($v$,"h",$bl_2$),($tl$, $bl_1$, $bl_2$, $\underline{l}$), requiring connecting trees between $tl$, $bl_1$ and $bl_2$.
Each CDF query has $N_L$ answers, one for each link.

\begin{figure}[t]
\centering
\tikzstyle{node} = [text centered, fill=white]
\tikzstyle{arrow} = [thick,font = {\scriptsize}]
\scalebox{0.9}{
\begin{tikzpicture}[node distance=0.8cm and 0.8cm]
\node (n1) at (0,0) [node] {1};
\node (n2) [node, below left of=n1] {2};
\node (n3) [node, below right of=n1] {3};
\node (n5) [node, below of=n2] {5};
\node (n4) [node, left of=n5] {4};
\node (n6) [node, below of=n3] {6};
\node (n7) [node, right of=n6] {7};
\draw [->,line width=.8pt] (n1) -> (n2) node [midway,left,xshift=1mm,yshift=2mm] {a};
\draw [->,line width=.8pt]  (n1) -> (n3) node [midway,right,xshift=-1mm,yshift=2mm] {b};
\draw  [->,line width=.8pt] (n2) -> (n4) node [midway,left,xshift=1mm,yshift=1.5mm] {c};
\draw [->,line width=.8pt]   (n2) -> (n5) node [midway,right,yshift=1mm] {d};
\draw [->,line width=.8pt]  (n3) -> (n6) node [midway,left,yshift=0.9mm] {c};
\draw [->,line width=.8pt]  (n3) -> (n7) node [midway,right,xshift=-1mm,yshift=1.5mm] {d};
\node (n8) [node, below of=n5,xshift=1.5mm] {8};
\node (n12) [node, below of=n4, yshift=-8mm] {12};
\node (n13) [node, right of=n12] {13};
\node (n10) [node, below of=n13] {10};
\node (n9) [node, below right of=n10] {9};
\node (n11) [node, above right of=n9] {11};
\node (n14) [node, above of=n11] {14};
\node (n15) [node, right of=n14] {15};
\node (n16) [node, below of=n6] {16};
\node (dots1) [node, right of=n16,xshift=8mm] {$\ldots$};

\draw [->,line width=.8pt] (n6) -> (n8) node [midway,left] {link};
\draw [->,line width=.8pt]  (n8) -> (n12) node [midway,left,xshift=1.4mm,yshift=1mm] {link};
\draw [->,line width=.8pt]  (n10) -> (n12) node [midway,left,xshift=1.4mm,yshift=-1.5mm] {g};
\draw [->,line width=.8pt]   (n10) -> (n13) node [midway,right] {h};
\draw [->,line width=.8pt]  (n11) -> (n14) node [midway,left,yshift=-0.5mm] {g};
\draw [->,line width=.8pt]  (n11) -> (n15) node [midway,right,xshift=-1mm,yshift=-1mm] {h};
\draw [->,line width=.8pt]  (n9) -> (n10) node [midway,left,xshift=1mm,yshift=-2mm] {e};
\draw [->,line width=.8pt]  (n9) -> (n11) node [midway,right,xshift=-1mm,yshift=-2mm] {f};
\draw [->,line width=.8pt] (n6) -> (n16) node [midway,right,xshift=-.7mm,yshift=1mm] {link};
\draw [->,line width=.8pt]  (n16) -> (n14) node [midway,right] {link};

\node (x1) at (2,0) [node] {18};
\node (x2) [node, below left of=x1,xshift=1mm] {$\ldots$};
\node (x3) [node, below right of=x1,xshift=-1mm] {$\ldots$};
\draw [->,line width=.8pt] (x1) -> (x2) node [midway,left,xshift=1mm,yshift=2mm] {a};
\draw  [->,line width=.8pt] (x1) -> (x3) node [midway,right,xshift=-1mm,yshift=2mm] {b};
\node (x9)  [node, right of=n9, xshift=12mm] {19};
\node (x10) [node, above left of=x9,xshift=1mm] {$\ldots$};
\node (x11) [node, above right of=x9,xshift=-1mm] {$\ldots$};
\draw [->,line width=.8pt] (x9) -> (x10) node [midway,left,xshift=1mm,yshift=-2mm] {e};
\draw [->,line width=.8pt]  (x9) -> (x11) node [midway,right,xshift=-1mm,yshift=-2mm] {f};

\draw (2.8,0) -> (2.8,-4.5); 

\node (m1) at (4.7,0) [node] {1};
\node (m2) [node, below left of=m1] {2};
\node (m3) [node, below right of=m1] {3};
\node (m5) [node, below of=m2] {5};
\node (m4) [node, left of=m5] {4};
\node (m6) [node, below of=m3] {6};
\node (m7) [node, right of=m6] {7};
\draw [->,line width=.8pt] (m1) -> (m2) node [midway,left,xshift=1mm,yshift=2mm] {a};
\draw  [->,line width=.8pt] (m1) -> (m3) node [midway,right,xshift=-1mm,yshift=2mm] {b};
\draw [->,line width=.8pt]  (m2) -> (m4) node [midway,left,xshift=1mm,yshift=1.5mm] {c};
\draw [->,line width=.8pt]   (m2) -> (m5) node [midway,right,yshift=1mm] {d};
\draw [->,line width=.8pt]  (m3) -> (m6) node [midway,left,yshift=0.9mm] {c};
\draw [->,line width=.8pt]  (m3) -> (m7) node [midway,right,xshift=-1mm,yshift=1.5mm] {d};
\node (m8) [node, below of=m5,xshift=-3mm] {8};
\node (m12) [node, below of=m4,yshift=-8mm] {12};
\node (m13) [node, right of=m12] {13};
\node (m10) [node, below of=m13] {10};
\node (m9) [node, below right of=m10] {9};
\node (m11) [node, above right of=m9] {11};
\node (m14) [node, above of=m11] {14};
\node (m15) [node, right of=m14] {15};
\node (m16) [node, above of=m14,xshift=1mm] {16};
\node (dots2) [node, right of=m16,xshift=12mm] {$\ldots$};

\draw [->,line width=.8pt] (m4) -> (m8) node [midway,left] {link};
\draw  [->,line width=.8pt] (m8) -> (m12) node [midway,left,xshift=-.5mm,yshift=.5mm] {link};
\draw  [->,line width=.8pt] (m8) -> (m13) node [midway,right,xshift=-.3mm,yshift=.5mm] {link};
\draw [->,line width=.8pt] (m6) -> (m16) node [midway,left] {link};
\draw [->,line width=.8pt]  (m16) -> (m14)  node [midway,left,yshift=.5mm] {link};
\draw [->,line width=.8pt]  (m16) -> (m15) node [midway,right,xshift=-.5mm,yshift=.8mm] {link};
\draw [->,line width=.8pt]  (m10) -> (m12) node [midway,left,xshift=1.4mm,yshift=-1.5mm] {g};
\draw  [->,line width=.8pt]  (m10) -> (m13) node [midway,right] {h};
\draw  [->,line width=.8pt] (m11) -> (m14) node [midway,left,yshift=-0.5mm] {g};
\draw [->,line width=.8pt]  (m11) -> (m15) node [midway,right,xshift=-1mm,yshift=-1mm] {h};
\draw  [->,line width=.8pt] (m9) -> (m10) node [midway,left,xshift=1mm,yshift=-2mm] {e};
\draw  [->,line width=.8pt] (m9) -> (m11) node [midway,right,xshift=-1mm,yshift=-2mm] {f};

\node (y1) at (6.6,0) [node] {17};
\node (y2) [node, below left of=y1,xshift=1mm] {$\ldots$};
\node (y3) [node, below right of=y1,xshift=-1mm] {$\ldots$};
\draw [->,line width=.8pt] (y1) -> (y2) node [midway,left,xshift=1mm,yshift=2mm] {a};
\draw  [->,line width=.8pt] (y1) -> (y3) node [midway,right,xshift=-1mm,yshift=2mm] {b};
\node (y9)  [node, right of=m9, xshift=12mm] {18};
\node (y10) [node, above left of=y9,xshift=1mm] {$\ldots$};
\node (y11) [node, above right of=y9,xshift=-1mm] {$\ldots$};
\draw [->,line width=.8pt] (y9) -> (y10) node [midway,left,xshift=1mm,yshift=-2mm] {e};
\draw [->,line width=.8pt]  (y9) -> (y11) node [midway,right,xshift=-1mm,yshift=-2mm] {f};
\end{tikzpicture}}
\vspace{-5.5mm}
\caption{CDF graphs generated with $m$=2, $S_L$=2 (left), and  with $m$=3, $S_L=3$ (right). \label{fig:dense-forest-summary}}
\end{figure}
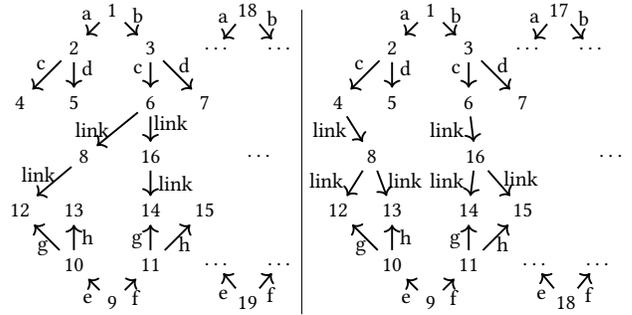

\vspace{1.5mm}
\titledparagraph{Real-world graphs} To compare with JEDI~\cite{jedi@vldb2018} and QGSTP~\cite{qgstp@www21},
we reused their datasets (a $6$M triples subset of YAGO3, and a $18$M triples subset of DBPedia), as well as their
queries.


\mysubsection{CTP evaluation algorithms}
\label{sec:exp-gam}


\begin{figure*}[th!]
\subfloat[CTP runtime on Line]{\includegraphics[width=0.3\textwidth]{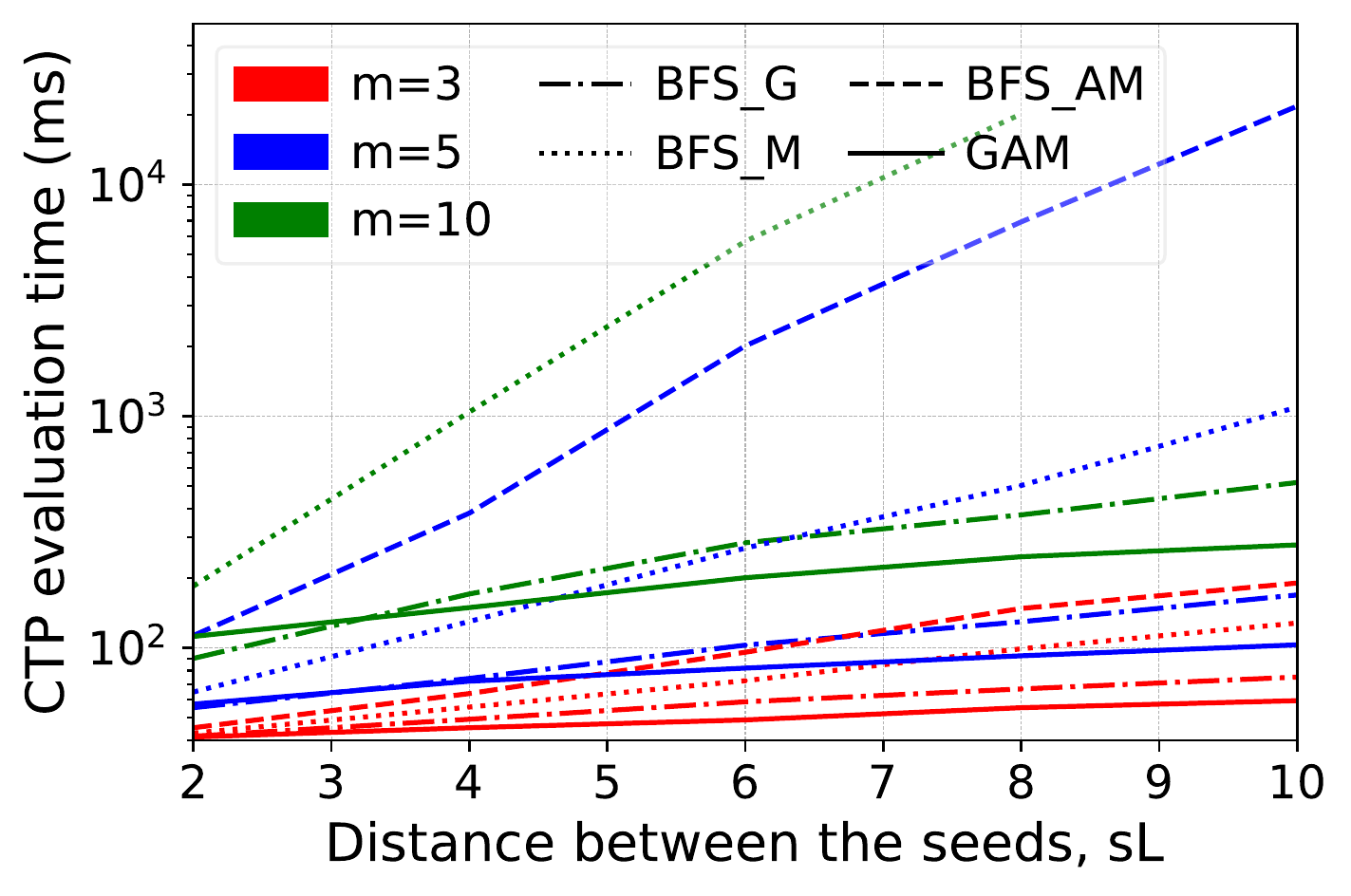}\label{fig:bfs-variants-line-rt}}
\subfloat[CTP runtime on Comb]{\includegraphics[width=0.3\textwidth]{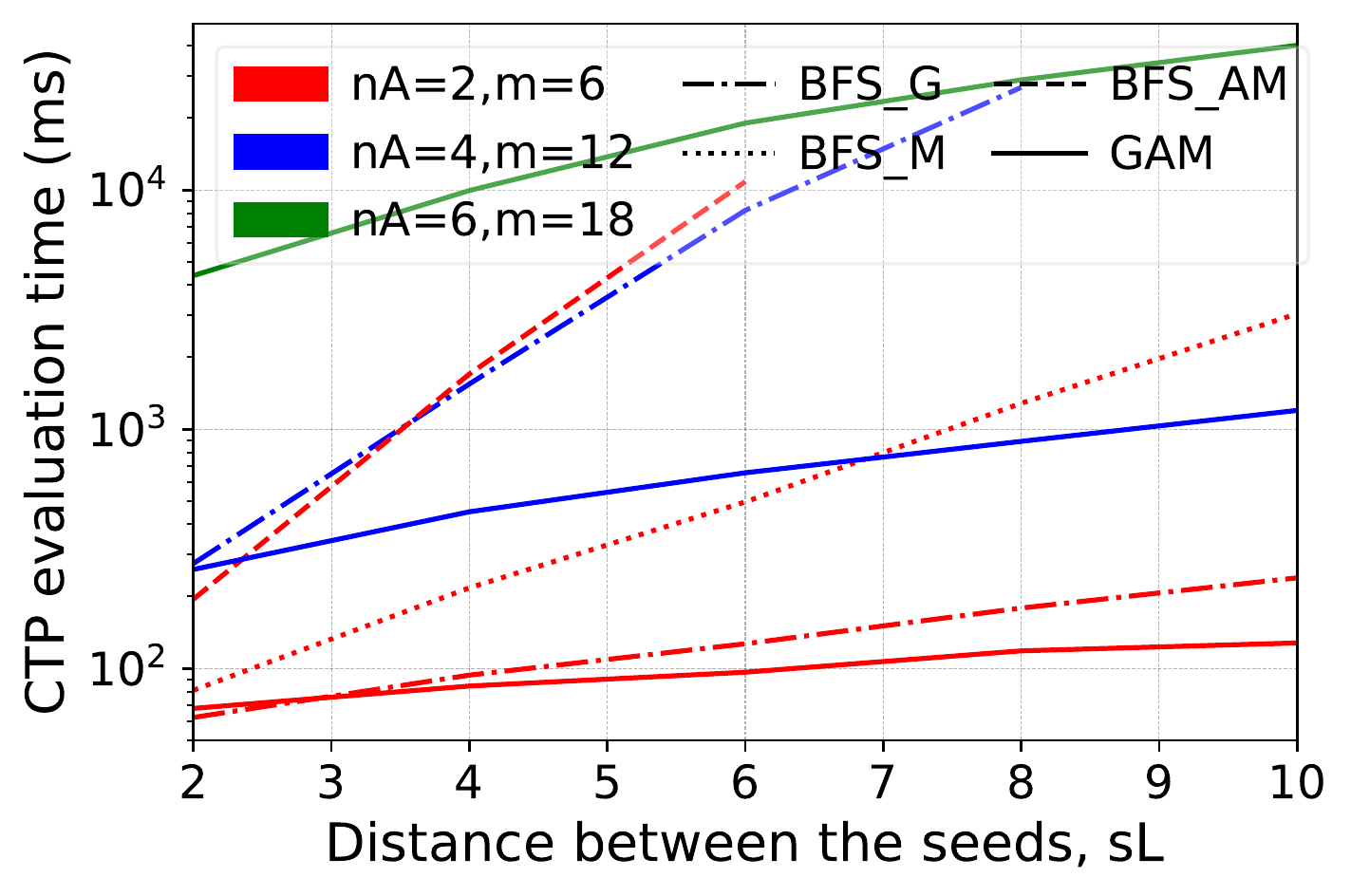}\label{fig:bfs-variants-comb-rt}}
\subfloat[CTP runtime on Star]{\includegraphics[width=0.3\textwidth]{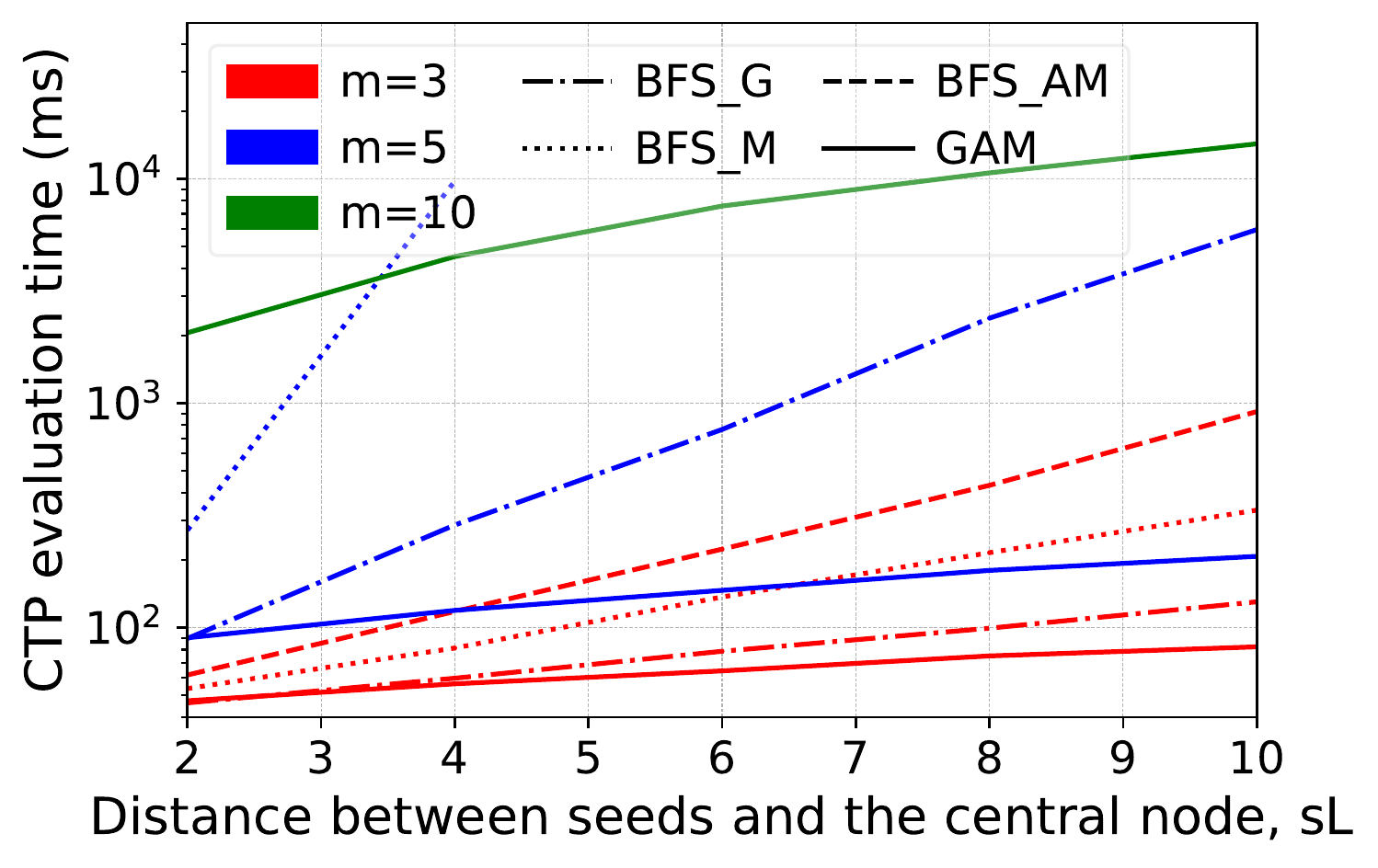}\label{fig:bfs-variants-star-rt}}\\
\vspace{-4mm}
\caption{Comparison of complete CTP evaluation baselines.}\label{fig:bfs-variants-synthetic}
\vspace{-4mm}
\end{figure*}

\begin{figure*}[t!]
  \vspace{-2mm}
\subfloat[CTP runtime on Line]{\includegraphics[width=0.3\textwidth]{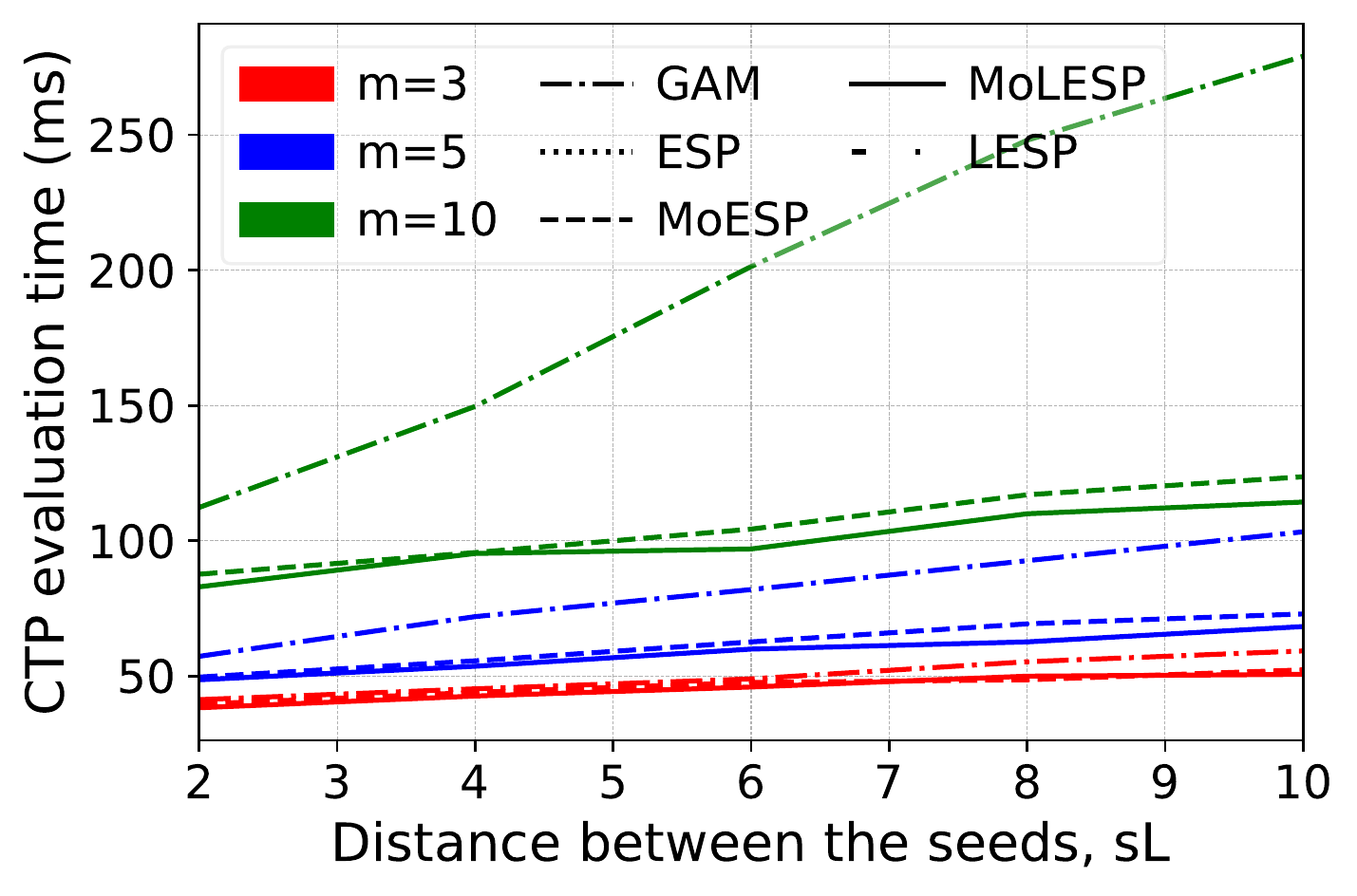}\label{fig:gam-variants-line-rt}}
\subfloat[CTP runtime on Comb]{\includegraphics[width=0.3\textwidth]{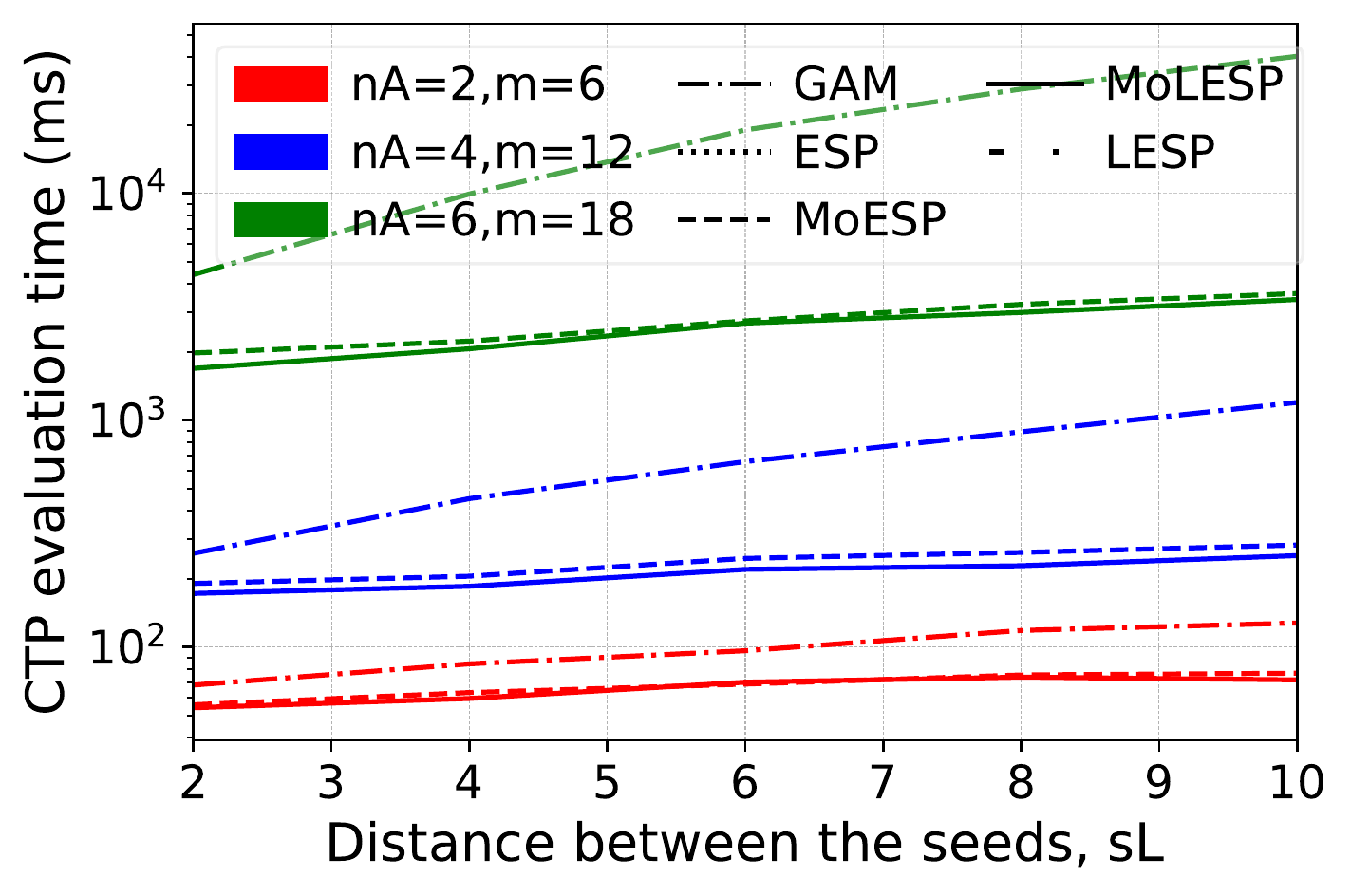}\label{fig:gam-variants-comb-rt}}
\subfloat[CTP runtime on Star]{\includegraphics[width=0.3\textwidth]{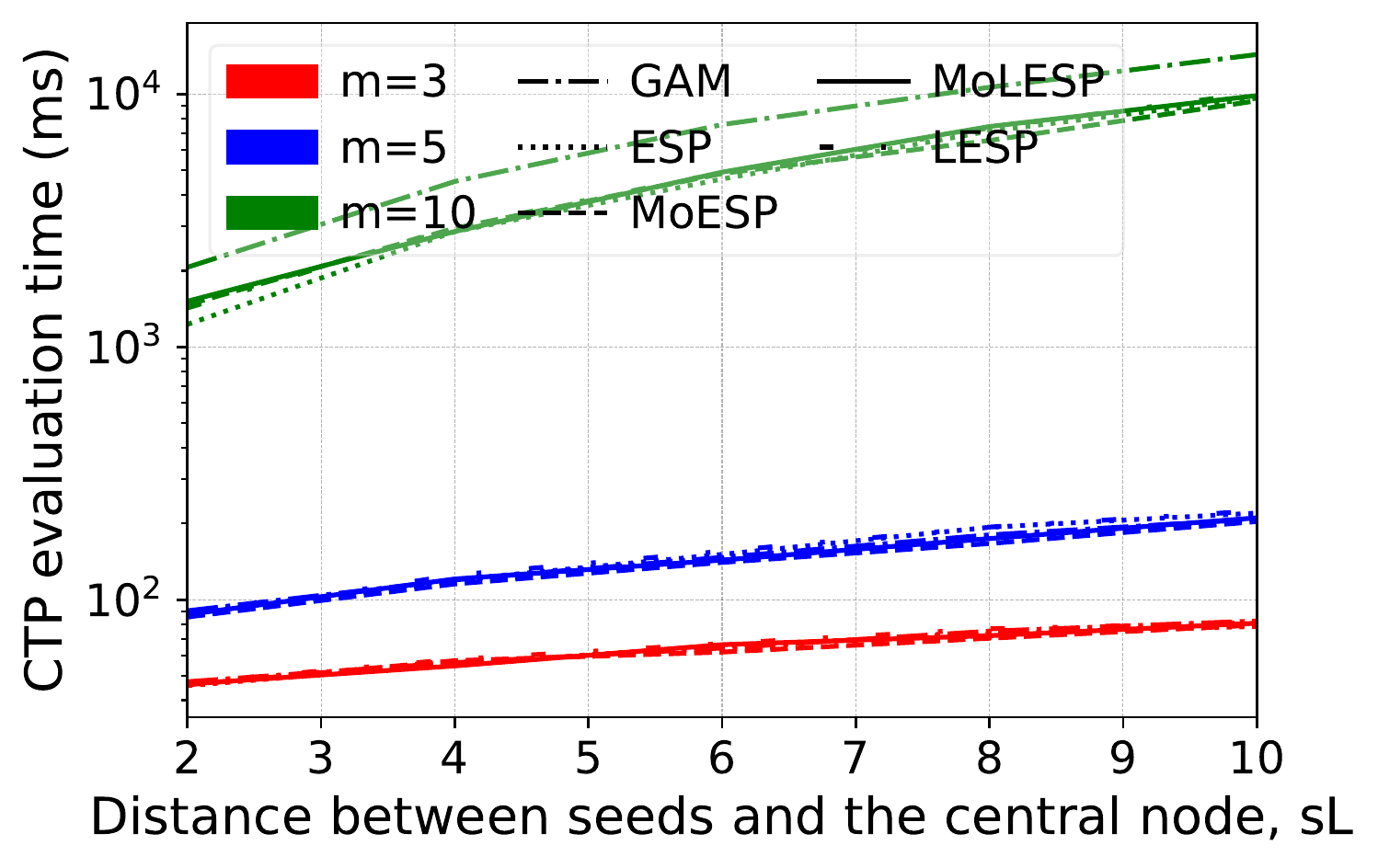}\label{fig:gam-variants-star-rt}}
\vspace{-2mm}
\subfloat[Number of provenances on Line]{\includegraphics[width=0.3\textwidth]{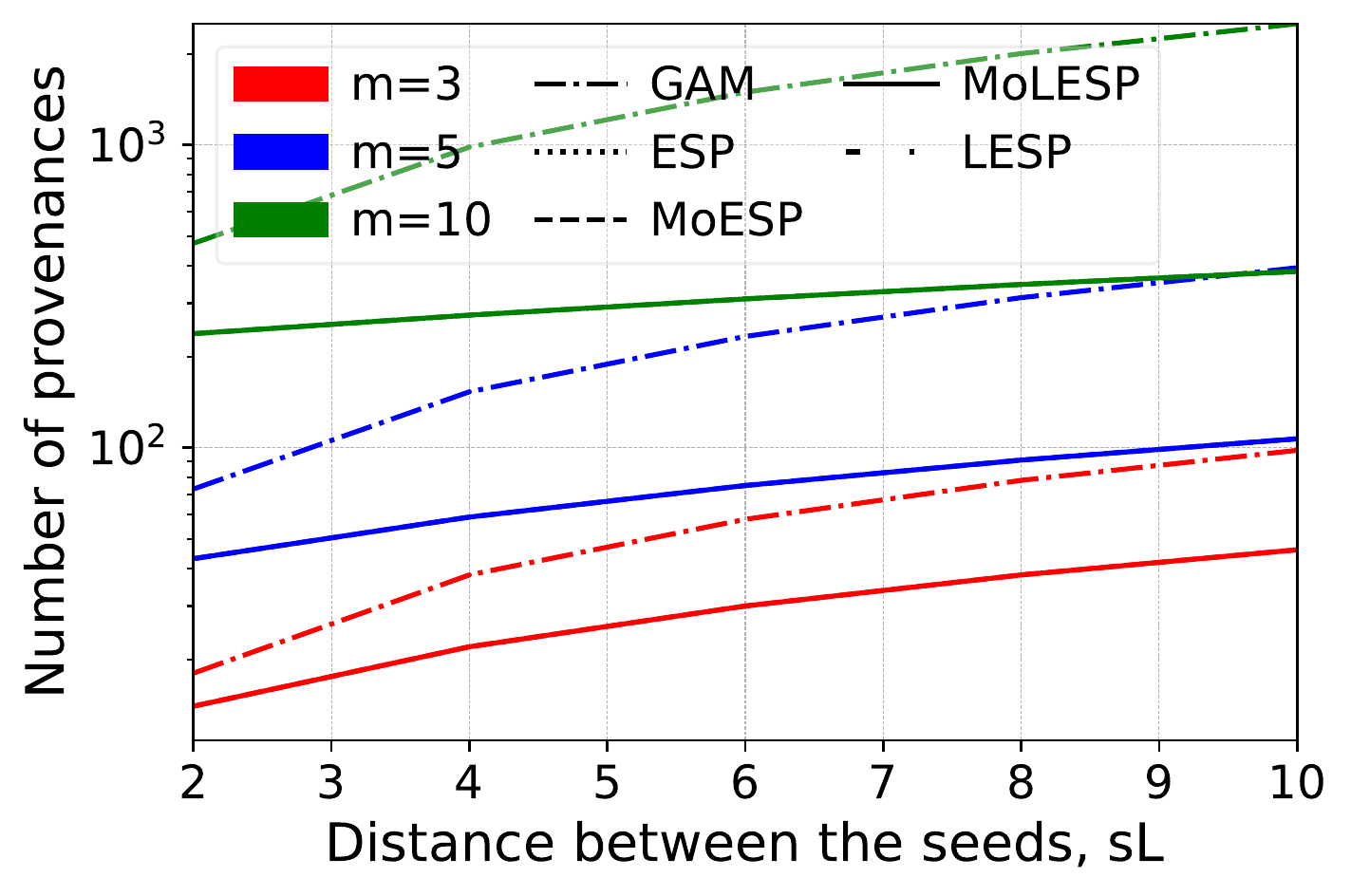}\label{fig:gam-variants-line-at}}
\subfloat[Number of provenances on Comb]{\includegraphics[width=0.3\textwidth]{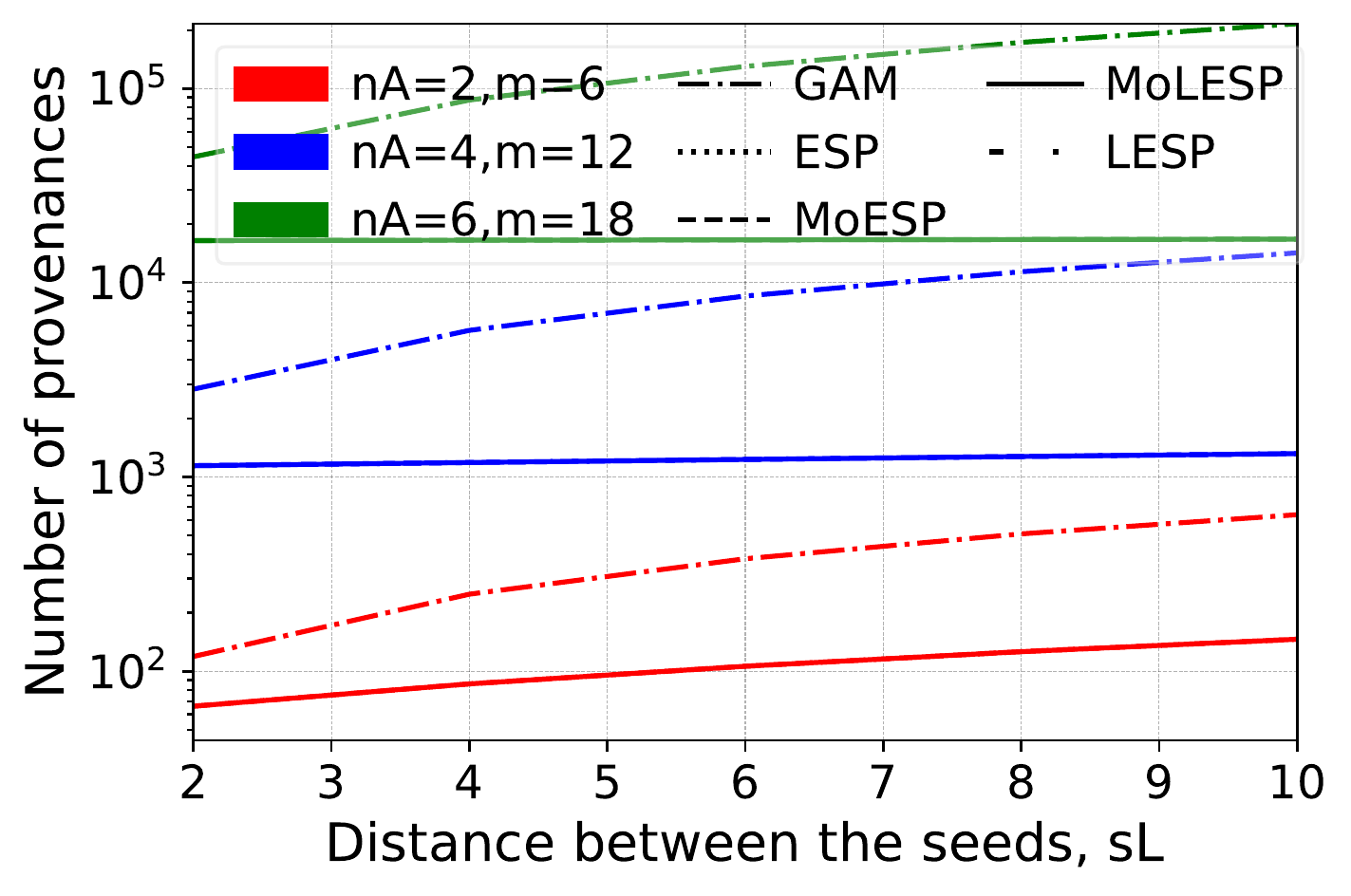}\label{fig:gam-variants-comb-at}}
\subfloat[Number of provenances on Star]{\includegraphics[width=0.3\textwidth]{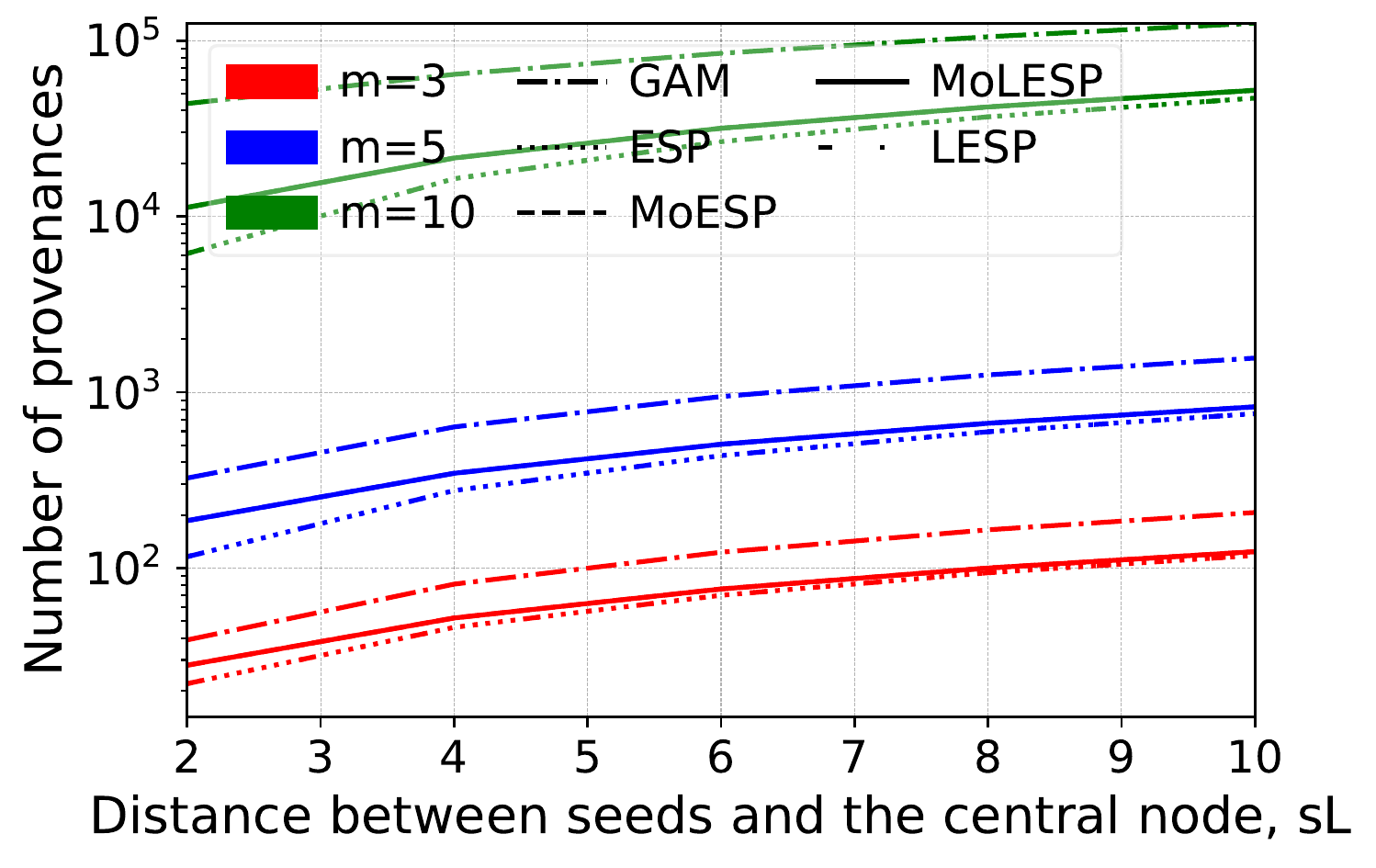}\label{fig:gam-variants-star-at}}

 \vspace{-4mm}
 \caption{Graphs for GAM variants on synthetic benchmarks. } \label{fig:gam-variants-synthetic}
 \vspace{-4mm}
 \end{figure*}

\vspace{2mm}
\mysubsubsection{Complete (baseline) algorithms} \label{sec:exp-gam-bfs}
We start by comparing the algorithms without any pruning: \bfs\ (Section \ref{sec:algo-bfs}), GAM (Section \ref{sec:algo-gam}), and the \bfs\ variants \bfsm\ and \bfsam\ (Section \ref{sec:algo-gam}), on synthetic Line, Comb and Star graphs of increasing size. We used a \textsf{\small timeout} $T$ of 10 minutes. \textbf{In all experiments with GAM and all its variants, our exploration order (queue priority) favors the smallest trees, and breaks ties arbitrarily}. Figure~\ref{fig:bfs-variants-synthetic} depicts the algorithm running time;  the color indicates the number of seed sets ($3$, $5$ or $10$), while the line pattern indicates the algorithm. {\em Missing points (or curves) denote algorithms that did not complete by the timeout}.
Note the logarithmic $y$ axes. 

Across these plots, \textbf{\bfsm\ performs worse than \bfsam}. On Line graphs, the difference is a factor $2\times$ for $m=3$ and up to $100\times$ for $m=10$. On the Comb and Star graphs,
\bfsm\ times out on the larger graphs and queries. \textbf{\bfsam\ takes even more than \bfsm}, by a factor of $15\times$, thus more executions timed out. 
\textbf{GAM is much faster} and completes execution in all cases. The reason, as explained in Section~\ref{sec:algo-bfs}, is that breadth-first algorithms waste effort by minimizing results, and may find a tree in even more different ways than GAM, since they grow from any node. Thus,  {\em we exclude breadth-first algorithms from the subsequent comparisons}.

\printIfExtVersion{\titledparagraph{Unbalanced Star}
The unbalanced star also has a $(nK,\_)$-rooted merge as the solution. Results over unbalanced star are shown in Figure \ref{fig:bfs-variants-synthetic}. \molesp\ takes more time on this dataset when compared to a balanced star but the increase in runtime follows a similar trend.  None of the \bfs\ variants finish execution before \TO\ for the $10$-seed unbalanced star. The number of trees constructed (Figure \ref{fig:bfs-variants-ustar-at}) follow a similar trend as the balanced star. \MM{Should we maybe remove this graph for \bfs\ variants? It is more insightful for the GAM variants.}}

\mysubsubsection{GAM algorithm variants}\label{sec:exp-gam-variants}
On the same graphs, we compare GAM (Section \ref{sec:algo-gam}), \esp\ (Section \ref{sec:algo-esp}), \moesp\ (Section \ref{sec:algo-moesp}), \lesp\ (Section \ref{sec:algo-lesp}) and \molesp\ (Section \ref{sec:algo-molesp}) with the same timeout. 
Figure~\ref{fig:gam-variants-synthetic} shows the algorithm running time as well as the number of provenances they built. 
In all graphs but Figure~\ref{fig:gam-variants-line-rt}, the $y$ axis is logarithmic. 
On Line and Comb graphs, \textbf{\esp\ and \lesp\ failed to find results} due to edge set pruning, as explained in Section~\ref{sec:algo-esp}, thus the corresponding curves are missing.
\moesp\ and \molesp\ build the same number of provenances on Line and Comb graphs.

The plots show, first, that edge set pruning significantly reduces the running time: \textbf{\molesp\ is faster than GAM} by a factor ranging from $1.3\times$ (Line graphs) to $15\times$ (Comb graphs, $nA$$=$$6$, $m$$=$$18$). Second, on the Star graphs, where the limited edge-set pruning (Section~\ref{sec:algo-lesp}) applies, the performance difference between \moesp\ and \molesp\ is small. This shows that \textbf{the extra cost incurred by \lesp\ and \molesp, which limit or compensate for edge-set pruning} (by injecting more trees), \textbf{is worth paying for the completeness guarantees} of \molesp. Overall, the algorithm running times closely track the numbers of built provenances, further highlighting the interest of controlling the latter through pruning.

\mysubsubsection{Comparison with QGSTP on real-world data}\label{sec:exp-qgstp}

We now compare the winner of the above comparisons, namely \molesp, with QGSTP~\cite{qgstp@www21} on the $18$M edges DBPedia dataset and $312$ CTPs used in their evaluation. Among these, $83$ CTPs  (respectively, $98$, $85$, $38$, $8$) have $2$ (respectively, $3$, $4$, $5$, $6$) seed sets. 
To align with QGSTP, we added a \textsf{\small UNI} filter (unidirectional exploration only), and \textsf{\small LIMIT 1} to stop after the first result. 
Each QGSTP returned result is such that Property~\ref{prop:restricted-nps} ensures \molesp\ finds it. 
Figure \ref{fig:gam-qgstp-dbpedia} shows the average runtimes grouped by $m$. GAM is faster than QGSTP for $m$$\leq$$5$, but timed-out for the $8$ CTPs with $m$$=$$6$. \molesp\ is about $6$-$7\times$ faster than QGSTP for all $m$ values, and scales well as $m$ increases. Thus, \textbf{\molesp\ is competitive also on large real-world graphs and queries}.

\mysubsection{Extended query evaluation}
\label{sec:exp-eql}
\vspace{2mm} 
\mysubsubsection{Synthetic queries on CDF benchmark}\label{sec:exp-cdf} 
We now compare our EQL query evaluation system with the graph query baselines, on our CDF graphs (Section~\ref{sec:datasets}) generated with $m$$\in$$\{2$,$3\}$, $S_L$$\in$$\{3$,$6\}$, 
$18$K to $2.4$M edges, leading to $2$K up to $200$K results ($N_L$), respectively. We used $T$$=$$15$ minutes.
As explained in Section~\ref{sec:language-construct}, the paths returned by the baselines, which we ``stitch'' for  $m$$=$$3$, semantically differ from CTP results;
the baselines' reported time {\em do not include the time to minimize nor deduplicate their results}. 
 
For $m$$=$$2$, Figure \ref{fig:comparison-dense-forest-m2} shows that all systems scale linearly in the input size (note the logarithmic time axis). {\em For each system, the lower curve is on graphs with $S_L$$=$$3$, while the upper curve is on graphs with $S_L$$=$$6$ (these graphs are larger, thus curves go farther at right). All missing points correspond to time-out.} JEDI succeeded only on the smallest graph, Neo4j timed-out on all. 
Virtuoso-SPARQL is the fastest, closely followed by Virtuoso-SQL; they are both  {\em unidirectional, require the edge labels}, and {\em do not return paths}.  Unidirectional \molesp, which we included to compare with unidirectional baselines, is slower by approximately $3\times$ only. 
JEDI is slower than \molesp\ by $10^2\times$ on the smallest graph, and timed-out on the others. Postgres is faster than JEDI, yet at least $10\times$ slower than \molesp.
\textbf{\molesp\ is the only feasible bidirectional algorithm}; it completes in less than $2$ minutes on the largest graph with $2.4$M edges.

Figure \ref{fig:comparison-dense-forest-m3} shows similar results for $m$$=$$3$.  
Postgres timed-out in all cases. 
Virtuoso-SPARQL is $7\times$ faster than Virtuoso-SQL; both return {\em non-minimal, duplicate results}. \textsf{\small UNI}-\molesp\ outperforms every system, while \emph{also returning connecting trees}. Note that the {\em bidirectional} \molesp\ found about $7$$\times$ more results than the $N_L$ expected ones, by also connecting bottom leaves {\em without a common parent} through their grandparent node; 
these results are filtered by the join between the BGPs and the CTP (Section \ref{sec:query-evaluation}). 
Despite the much larger search space due to bidirectionality, \molesp\ scales well with the size of the graph.


 \begin{figure}[t!]
   \centering
   \includegraphics[width=0.3\textwidth]{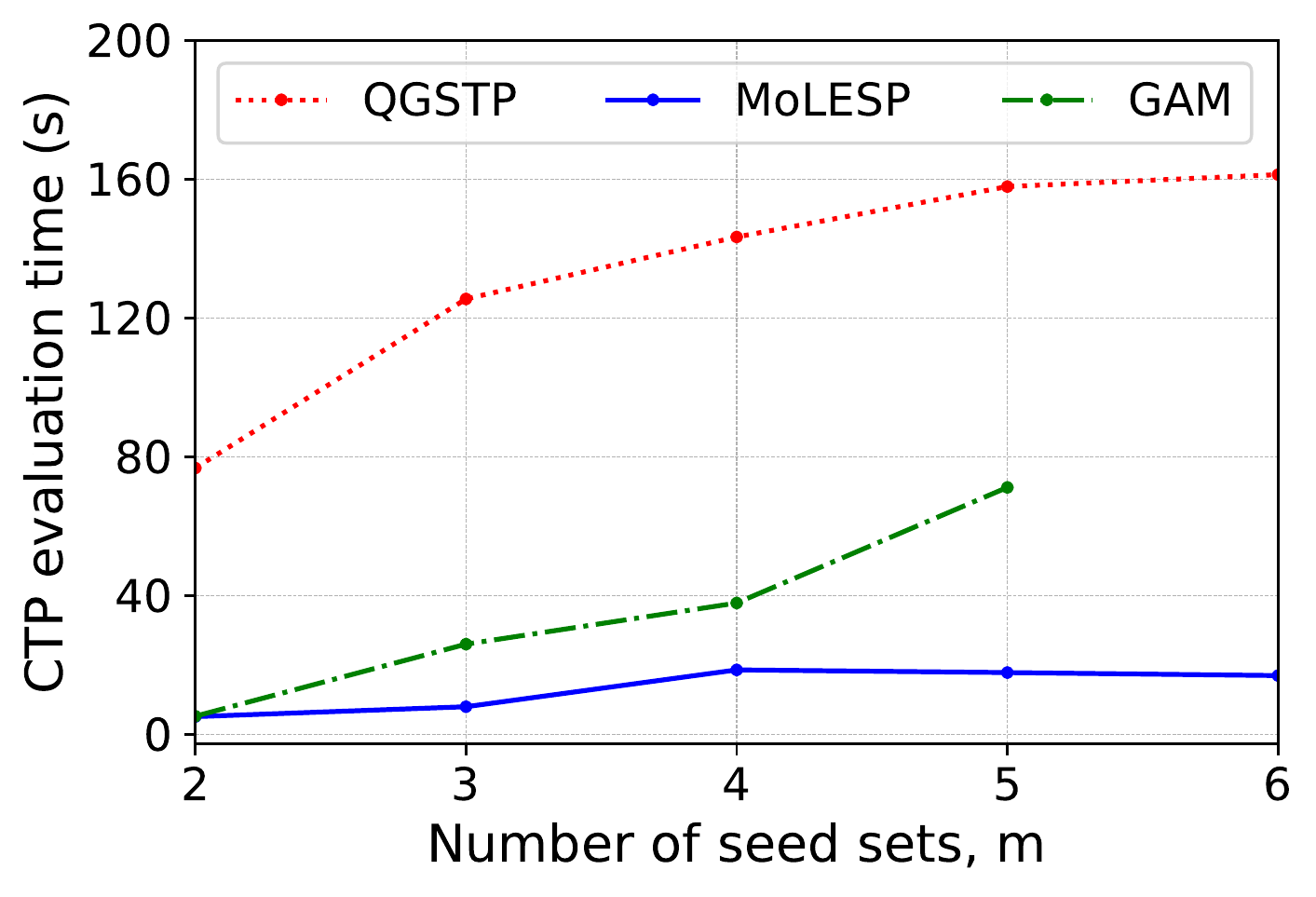}
   \vspace{-5mm}
\caption{GAM and \molesp\ vs. QGSTP~\cite{qgstp@www21} on DBPedia.}\label{fig:gam-qgstp-dbpedia}
\end{figure}

\begin{figure}[t!]
  \includegraphics[width=0.5\textwidth]{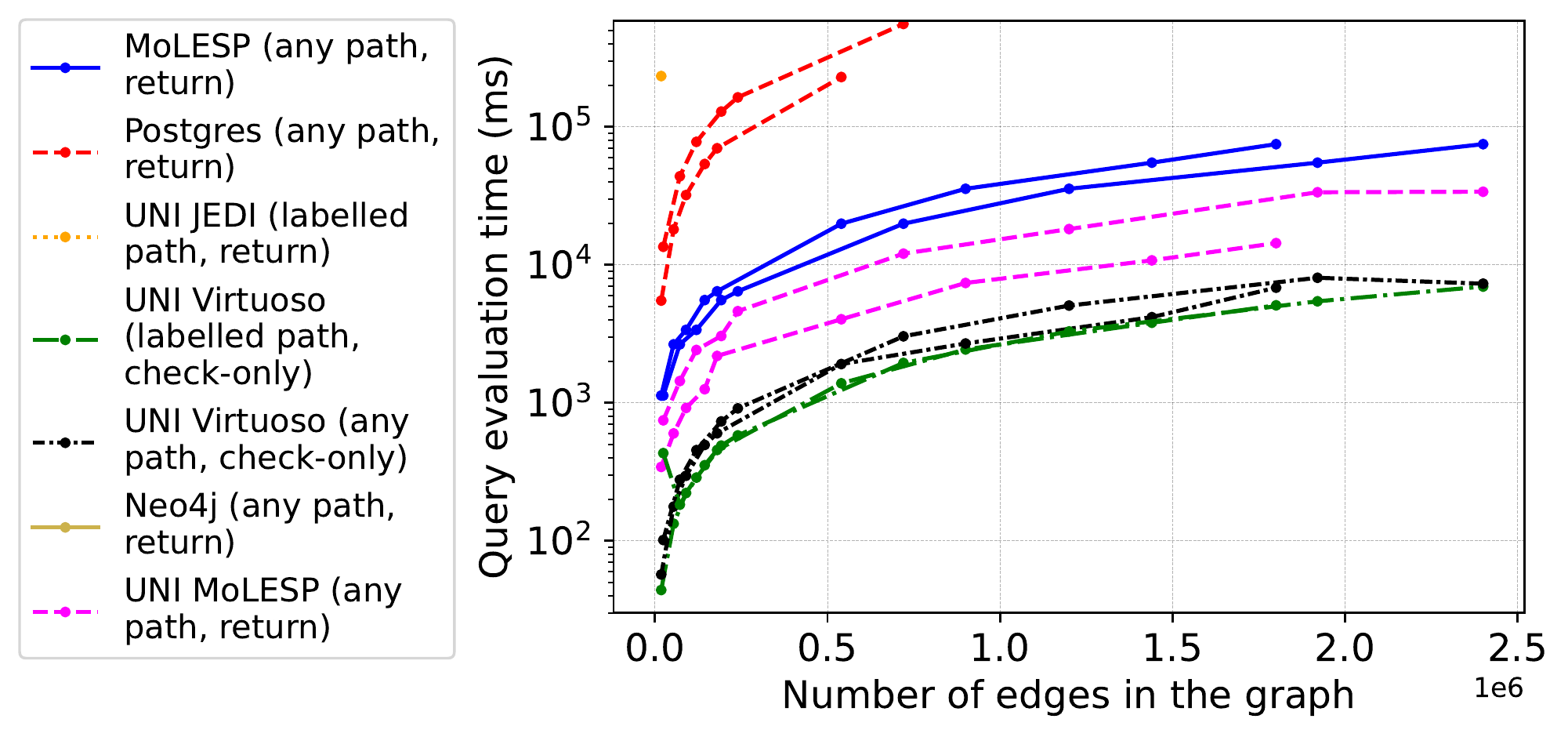}
  \vspace{-9mm}
\caption{CDF benchmark performance for $m$$=$$2$, $S_L$$\in$$\{3$,$6\}$.\label{fig:comparison-dense-forest-m2}}
\end{figure}

\begin{figure}[t!]
  \includegraphics[width=0.5\textwidth]{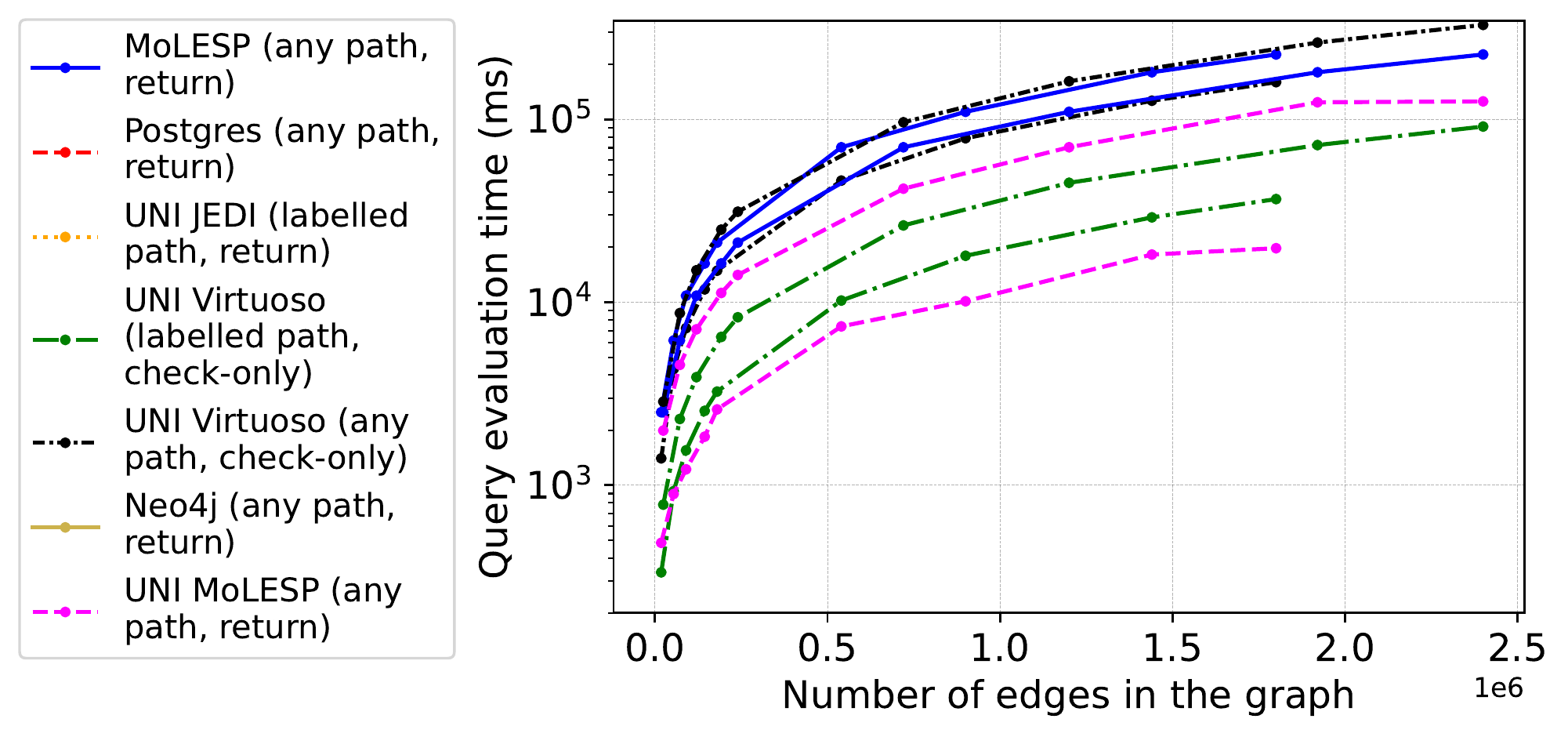}
  \vspace{-9mm}
\caption{CDF benchmark performance for $m$$=$$3$, $S_L$$\in$$\{3$,$6\}$.\label{fig:comparison-dense-forest-m3}}
\end{figure}

\mysubsubsection{Comparison with JEDI on real-world data}\label{sec:exp-yago}
JEDI~\cite{jedi@vldb2018} used a set of (unidirectional, label-constrained) SPARQL 1.1 queries over YAGO3.
  Table \ref{tab:comparison-yago} shows the queries' characteristics. 
  We compare \molesp\ similarly constrained (\textsf{\small UNI} and \textsf{\small LABEL}), on these queries, with JEDI, Virtuoso and Neo4j (Postgres timed-out on all). 
  Query $J_2$ has one very large seed set, while query $J_3$ has a $\nodes$ seed set. {\em On queries $J_2$ and $J_3$, \molesp\ timed out}. Thus, we applied the optimizations described in Section~\ref{sec:survival}, which enabled it to perform as shown. 
  Virtuoso-SPARQL completed query $J_1$, then ran out of memory.
  Compared with JEDI, our query evaluation engine is $2\times$ faster on $J_1$, close on $J_2$, and around  $3\times$ slower on $J_3$. 
   \molesp\ took around 30\% of the total time, the rest being spent by Postgres in the BGP evaluation and final joins.
This shows that \textbf{the optimizations described in Section~\ref{sec:survival} make \molesp\ robust also to large seed sets.}
 

\begin{table}
\centering
\scalebox{0.85}{\begin{tabular}{|p{4cm}| r|r|r|r|} 
 \hline
 {Query} & {JEDI} & {\molesp\ } & {Virtuoso}& {Neo4j} \\ 
 \hline\hline
 $J_1$: $3$ BGPs, $2$ CTPs & $3.9$ & $1.9$ & $0.2$& \TO\ \\ 
 \hline
 $J_2$: $2$ BGPs, $1$ CTP, large seed set & $0.9$ & $1$ & OOM & \TO\ \\
 \hline
 $J_3$: $1$ CTP,  $\nodes$ seed set & $0.75$ & $2.3$ & OOM &  $1.27$ \\
 \hline
\end{tabular}}
\caption{Query evaluation times (seconds) on YAGO3 dataset.\label{tab:comparison-yago}}
\vspace{-4mm}
\end{table}


\mysection{Related work and perspectives}\label{sec:related-work}

We focused on extending a \textbf{graph query language}, such as SPARQL \cite{sparql11}, Cypher \cite{neo4j-cypher} or GraphQL \cite{graphql}, with {\em connecting tree patterns} (CTPs) that they currently do not support (our requirement (\textbf{R1}) from Section~\ref{sec:introduction}).  
Specifically, SPARQL 1.1 property paths
($i$)~allow to {\em check} that some paths connect two nodes, not to {\em return} the path(s);
($ii$)~do not allow searching for {\em arbitrary} paths (users have to specify a regular expression);
($iii$)~are restricted to {\em unidirectional} paths only. Some PG query languages such as Neo4j's Cypher lift these restrictions, however, its implementation does not scale (Section~\ref{sec:exp-cdf}) \cite{neo4j-blog}.  RPQProv \cite{dey@edbt2013} uses recursive SQL to return path labels;  JEDI \cite{jedi@iswc2018,jedi@vldb2018} builds over SPARQL 1.1 by returning all unidirectional paths. 
Many works focus on finding label-constrained paths between nodes \cite{sparql2l, gubichev@webdb2011, gubichev@grades2013, fletcher@edbt2016,  yakovets@sigmod2016, valstar@sigmod2017, wadhwa@sigmod2019, kuijpers@edbt2021, peng@vldb2022, rpq@icde2022, na@icde2022}, typically by using precomputed indexes or sketches. In our CTP evaluation algorithm, an index could be integrated by ``reading from it'' paths (or subtrees) on which to \Grow\ and \Merge. 
{\em Our CTPs extend finding paths, to finding trees that connect an arbitrary number of seed sets ($m$$\geq$$3$), traversing edges in any direction} by default; we guarantee completeness for $m$$\leq$$3$ and finding a large set of results for arbitrary $m$. As we explained (Section~\ref{sec:language-construct}), path stitching leads to different results, which may require deduplication and minimization. 

The CTP evaluation problem is directly related to \textbf{keyword search in (semi-)structured data},
addressed in many algorithms, some of which are surveyed in \cite{kssurvey,coffman@tkde2014}.
These prior studies differ from ours as follows: ($i$)~\cite{DBXplorer, discover, efficient@vldb2003, spark, spark2, oliveira@icde2018, xrank, hristidis@icde2003, tran@icde2009, star} are schema-dependent;
($ii$)~\cite{objectrank,tran@icde2009, scalable-ks} assume available a compact summary of the graph;
($iii$)~\cite{banks-2, dpbf, blinks, li@sigmod2016} depend heavily on their score functions for pruning the search, particularly to approximate the best result \cite{dpbf, li@sigmod2016} or return only top-$k$ results \cite{ease, spark,blinks, banks-1, parallel-ks};
($iv$)~\cite{banks-1, banks-1-demo, DBXplorer, discover, blinks} are only unidirectional. For these reasons, they fail to meet our requirements (\textbf{R2}) to (\textbf{R5}) as outlined in Section \ref{sec:introduction}.





The Java-based GAM algorithm used in this work~\cite{gam-inf-sys-2022} was sped up by up to $100\times$ in a multi-threaded, C++ version~\cite{anadiotis:hal-03337650}. \molesp\ brings new, orthogonal, optimizations, and novel guarantees. 

Our future work includes developing adaptive EQL optimization and execution strategies and applying it to graph exploration for investigative journalism. 


\balance
\bibliographystyle{ACM-Reference-Format}
\bibliography{main.bib}


\begin{thebibliography}{46}


\ifx \showCODEN    \undefined \def \showCODEN     #1{\unskip}     \fi
\ifx \showDOI      \undefined \def \showDOI       #1{#1}\fi
\ifx \showISBNx    \undefined \def \showISBNx     #1{\unskip}     \fi
\ifx \showISBNxiii \undefined \def \showISBNxiii  #1{\unskip}     \fi
\ifx \showISSN     \undefined \def \showISSN      #1{\unskip}     \fi
\ifx \showLCCN     \undefined \def \showLCCN      #1{\unskip}     \fi
\ifx \shownote     \undefined \def \shownote      #1{#1}          \fi
\ifx \showarticletitle \undefined \def \showarticletitle #1{#1}   \fi
\ifx \showURL      \undefined \def \showURL       {\relax}        \fi
\providecommand\bibfield[2]{#2}
\providecommand\bibinfo[2]{#2}
\providecommand\natexlab[1]{#1}
\providecommand\showeprint[2][]{arXiv:#2}

\bibitem[\protect\citeauthoryear{Aditya, Bhalotia, Chakrabarti, Hulgeri, Nakhe,
  Parag, and Sudarshan}{Aditya et~al\mbox{.}}{2002}]%
        {banks-1-demo}
\bibfield{author}{\bibinfo{person}{B. Aditya}, \bibinfo{person}{Gaurav
  Bhalotia}, \bibinfo{person}{Soumen Chakrabarti}, \bibinfo{person}{Arvind
  Hulgeri}, \bibinfo{person}{Charuta Nakhe}, \bibinfo{person}{Parag}, {and}
  \bibinfo{person}{S. Sudarshan}.} \bibinfo{year}{2002}\natexlab{}.
\newblock \showarticletitle{{BANKS:} Browsing and Keyword Searching in
  Relational Databases}. In \bibinfo{booktitle}{\emph{Proceedings of 28th
  International Conference on Very Large Data Bases, {VLDB} 2002, Hong Kong,
  August 20-23, 2002}}. \bibinfo{pages}{1083--1086}.
\newblock
\urldef\tempurl%
\url{https://doi.org/10.1016/B978-155860869-6/50114-1}
\showDOI{\tempurl}


\bibitem[\protect\citeauthoryear{Aebeloe, Montoya, Setty, and Hose}{Aebeloe
  et~al\mbox{.}}{2018a}]%
        {jedi@vldb2018}
\bibfield{author}{\bibinfo{person}{Christian Aebeloe},
  \bibinfo{person}{Gabriela Montoya}, \bibinfo{person}{Vinay Setty}, {and}
  \bibinfo{person}{Katja Hose}.} \bibinfo{year}{2018}\natexlab{a}.
\newblock \showarticletitle{Discovering Diversified Paths in Knowledge Bases}.
\newblock \bibinfo{journal}{\emph{Proc. {VLDB} Endow.}} \bibinfo{volume}{11},
  \bibinfo{number}{12} (\bibinfo{year}{2018}), \bibinfo{pages}{2002--2005}.
\newblock
\urldef\tempurl%
\url{https://doi.org/10.14778/3229863.3236245}
\showDOI{\tempurl}
\newblock
\shownote{Code available at: http://qweb.cs.aau.dk/jedi/.}


\bibitem[\protect\citeauthoryear{Aebeloe, Setty, Montoya, and Hose}{Aebeloe
  et~al\mbox{.}}{2018b}]%
        {jedi@iswc2018}
\bibfield{author}{\bibinfo{person}{Christian Aebeloe}, \bibinfo{person}{Vinay
  Setty}, \bibinfo{person}{Gabriela Montoya}, {and} \bibinfo{person}{Katja
  Hose}.} \bibinfo{year}{2018}\natexlab{b}.
\newblock \showarticletitle{Top-K Diversification for Path Queries in Knowledge
  Graphs}. In \bibinfo{booktitle}{\emph{Proceedings of the {ISWC} 2018 Posters
  {\&} Demonstrations, Industry and Blue Sky Ideas Tracks co-located with 17th
  International Semantic Web Conference {(ISWC} 2018), Monterey, USA, October
  8th - to - 12th, 2018}} \emph{(\bibinfo{series}{{CEUR} Workshop
  Proceedings})}, \bibfield{editor}{\bibinfo{person}{Marieke van Erp},
  \bibinfo{person}{Medha Atre}, \bibinfo{person}{Vanessa L{\'{o}}pez},
  \bibinfo{person}{Kavitha Srinivas}, {and} \bibinfo{person}{Carolina Fortuna}}
  (Eds.), Vol.~\bibinfo{volume}{2180}. \bibinfo{publisher}{CEUR-WS.org}.
\newblock
\urldef\tempurl%
\url{http://ceur-ws.org/Vol-2180/paper-01.pdf}
\showURL{%
\tempurl}


\bibitem[\protect\citeauthoryear{Agrawal, Chaudhuri, and Das}{Agrawal
  et~al\mbox{.}}{2002}]%
        {DBXplorer}
\bibfield{author}{\bibinfo{person}{Sanjay Agrawal}, \bibinfo{person}{Surajit
  Chaudhuri}, {and} \bibinfo{person}{Gautam Das}.}
  \bibinfo{year}{2002}\natexlab{}.
\newblock \showarticletitle{DBXplorer: {A} System for Keyword-Based Search over
  Relational Databases}. In \bibinfo{booktitle}{\emph{Proceedings of the 18th
  International Conference on Data Engineering, San Jose, CA, USA, February 26
  - March 1, 2002}}, \bibfield{editor}{\bibinfo{person}{Rakesh Agrawal} {and}
  \bibinfo{person}{Klaus~R. Dittrich}} (Eds.). \bibinfo{publisher}{{IEEE}
  Computer Society}, \bibinfo{pages}{5--16}.
\newblock
\urldef\tempurl%
\url{https://doi.org/10.1109/ICDE.2002.994693}
\showDOI{\tempurl}


\bibitem[\protect\citeauthoryear{Anadiotis, Balalau, Bouganim, Chimienti,
  Galhardas, Haddad, Horel, Manolescu, and Youssef}{Anadiotis
  et~al\mbox{.}}{2021}]%
        {anadiotis:hal-03337650}
\bibfield{author}{\bibinfo{person}{Angelos-Christos Anadiotis},
  \bibinfo{person}{Oana Balalau}, \bibinfo{person}{Th{\'e}o Bouganim},
  \bibinfo{person}{Francesco Chimienti}, \bibinfo{person}{Helena Galhardas},
  \bibinfo{person}{Mhd~Yamen Haddad}, \bibinfo{person}{St{\'e}phane Horel},
  \bibinfo{person}{Ioana Manolescu}, {and} \bibinfo{person}{Youssr Youssef}.}
  \bibinfo{year}{2021}\natexlab{}.
\newblock \showarticletitle{{Empowering Investigative Journalism with
  Graph-based Heterogeneous Data Management}}.
\newblock \bibinfo{journal}{\emph{{Bulletin of the Technical Committee on Data
  Engineering}}} (\bibinfo{date}{Sept.} \bibinfo{year}{2021}).
\newblock
\urldef\tempurl%
\url{https://hal.archives-ouvertes.fr/hal-03337650}
\showURL{%
\tempurl}


\bibitem[\protect\citeauthoryear{Anadiotis, Balalau, Concei{\c{c}}{\~{a}}o,
  Galhardas, Haddad, Manolescu, Merabti, and You}{Anadiotis
  et~al\mbox{.}}{2022}]%
        {gam-inf-sys-2022}
\bibfield{author}{\bibinfo{person}{Angelos{-}Christos~G. Anadiotis},
  \bibinfo{person}{Oana Balalau}, \bibinfo{person}{Catarina
  Concei{\c{c}}{\~{a}}o}, \bibinfo{person}{Helena Galhardas},
  \bibinfo{person}{Mhd~Yamen Haddad}, \bibinfo{person}{Ioana Manolescu},
  \bibinfo{person}{Tayeb Merabti}, {and} \bibinfo{person}{Jingmao You}.}
  \bibinfo{year}{2022}\natexlab{}.
\newblock \showarticletitle{Graph integration of structured, semistructured and
  unstructured data for data journalism}.
\newblock \bibinfo{journal}{\emph{Inf. Syst.}}  \bibinfo{volume}{104}
  (\bibinfo{year}{2022}), \bibinfo{pages}{101846}.
\newblock
\urldef\tempurl%
\url{https://doi.org/10.1016/j.is.2021.101846}
\showDOI{\tempurl}


\bibitem[\protect\citeauthoryear{Anyanwu, Maduko, and Sheth}{Anyanwu
  et~al\mbox{.}}{2007}]%
        {sparql2l}
\bibfield{author}{\bibinfo{person}{Kemafor Anyanwu}, \bibinfo{person}{Angela
  Maduko}, {and} \bibinfo{person}{Amit~P. Sheth}.}
  \bibinfo{year}{2007}\natexlab{}.
\newblock \showarticletitle{{SPARQ2L:} towards support for subgraph extraction
  queries in rdf databases}. In \bibinfo{booktitle}{\emph{Proceedings of the
  16th International Conference on World Wide Web, {WWW} 2007, Banff, Alberta,
  Canada, May 8-12, 2007}}. \bibinfo{pages}{797--806}.
\newblock
\urldef\tempurl%
\url{https://doi.org/10.1145/1242572.1242680}
\showDOI{\tempurl}


\bibitem[\protect\citeauthoryear{Arroyuelo, Hogan, Navarro, and
  Rojas{-}Ledesma}{Arroyuelo et~al\mbox{.}}{2022}]%
        {rpq@icde2022}
\bibfield{author}{\bibinfo{person}{Diego Arroyuelo}, \bibinfo{person}{Aidan
  Hogan}, \bibinfo{person}{Gonzalo Navarro}, {and} \bibinfo{person}{Javiel
  Rojas{-}Ledesma}.} \bibinfo{year}{2022}\natexlab{}.
\newblock \showarticletitle{Time- and Space-Efficient Regular Path Queries on
  Graphs}.
\newblock  (\bibinfo{year}{2022}).
\newblock


\bibitem[\protect\citeauthoryear{Balmin, Hristidis, and
  Papakonstantinou}{Balmin et~al\mbox{.}}{2004}]%
        {objectrank}
\bibfield{author}{\bibinfo{person}{Andrey Balmin}, \bibinfo{person}{Vagelis
  Hristidis}, {and} \bibinfo{person}{Yannis Papakonstantinou}.}
  \bibinfo{year}{2004}\natexlab{}.
\newblock \showarticletitle{ObjectRank: Authority-Based Keyword Search in
  Databases}. In \bibinfo{booktitle}{\emph{(e)Proceedings of the Thirtieth
  International Conference on Very Large Data Bases, {VLDB} 2004, Toronto,
  Canada, August 31 - September 3 2004}}. \bibinfo{pages}{564--575}.
\newblock
\urldef\tempurl%
\url{https://doi.org/10.1016/B978-012088469-8.50051-6}
\showDOI{\tempurl}


\bibitem[\protect\citeauthoryear{Bhalotia, Hulgeri, Nakhe, Chakrabarti, and
  Sudarshan}{Bhalotia et~al\mbox{.}}{2002}]%
        {banks-1}
\bibfield{author}{\bibinfo{person}{Gaurav Bhalotia}, \bibinfo{person}{Arvind
  Hulgeri}, \bibinfo{person}{Charuta Nakhe}, \bibinfo{person}{Soumen
  Chakrabarti}, {and} \bibinfo{person}{S. Sudarshan}.}
  \bibinfo{year}{2002}\natexlab{}.
\newblock \showarticletitle{Keyword Searching and Browsing in Databases using
  {BANKS}}. In \bibinfo{booktitle}{\emph{Proceedings of the 18th International
  Conference on Data Engineering, San Jose, CA, USA, February 26 - March 1,
  2002}}. \bibinfo{pages}{431--440}.
\newblock
\urldef\tempurl%
\url{https://doi.org/10.1109/ICDE.2002.994756}
\showDOI{\tempurl}


\bibitem[\protect\citeauthoryear{Bowman}{Bowman}{2022}]%
        {neo4j-blog}
\bibfield{author}{\bibinfo{person}{Andrew Bowman}.}
  \bibinfo{year}{2022}\natexlab{}.
\newblock \showarticletitle{{Tuning Cypher queries by understanding
  cardinality}}.
\newblock  (\bibinfo{year}{2022}).
\newblock
\urldef\tempurl%
\url{https://neo4j.com/developer/kb/understanding-cypher-cardinality/#_distinct_nodes_from_variable_length_paths}
\showURL{%
\tempurl}


\bibitem[\protect\citeauthoryear{Coffman and Weaver}{Coffman and
  Weaver}{2014}]%
        {coffman@tkde2014}
\bibfield{author}{\bibinfo{person}{Joel Coffman} {and}
  \bibinfo{person}{Alfred~C. Weaver}.} \bibinfo{year}{2014}\natexlab{}.
\newblock \showarticletitle{An Empirical Performance Evaluation of Relational
  Keyword Search Techniques}.
\newblock \bibinfo{journal}{\emph{{IEEE} Trans. Knowl. Data Eng.}}
  \bibinfo{volume}{26}, \bibinfo{number}{1} (\bibinfo{year}{2014}),
  \bibinfo{pages}{30--42}.
\newblock
\urldef\tempurl%
\url{https://doi.org/10.1109/TKDE.2012.228}
\showDOI{\tempurl}


\bibitem[\protect\citeauthoryear{Consortium}{Consortium}{2013}]%
        {sparql11}
\bibfield{author}{\bibinfo{person}{WWW Consortium}.}
  \bibinfo{year}{2013}\natexlab{}.
\newblock \showarticletitle{{SPARQL 1.1}}.
\newblock  (\bibinfo{year}{2013}).
\newblock
\urldef\tempurl%
\url{https://www.w3.org/TR/sparql11-overview/}
\showURL{%
\tempurl}


\bibitem[\protect\citeauthoryear{de~Oliveira, da~Silva, de~Moura, and
  Rodrigues}{de~Oliveira et~al\mbox{.}}{2018}]%
        {oliveira@icde2018}
\bibfield{author}{\bibinfo{person}{Pericles de Oliveira},
  \bibinfo{person}{Altigran~S. da Silva}, \bibinfo{person}{Edleno~Silva de
  Moura}, {and} \bibinfo{person}{Rosiane Rodrigues}.}
  \bibinfo{year}{2018}\natexlab{}.
\newblock \showarticletitle{Match-Based Candidate Network Generation for
  Keyword Queries over Relational Databases}. In \bibinfo{booktitle}{\emph{34th
  {IEEE} International Conference on Data Engineering, {ICDE} 2018, Paris,
  France, April 16-19, 2018}}. \bibinfo{pages}{1344--1347}.
\newblock
\urldef\tempurl%
\url{https://doi.org/10.1109/ICDE.2018.00146}
\showDOI{\tempurl}


\bibitem[\protect\citeauthoryear{Dey, Cuevas{-}Vicentt{\'{\i}}n, K{\"{o}}hler,
  Gribkoff, Wang, and Lud{\"{a}}scher}{Dey et~al\mbox{.}}{2013}]%
        {dey@edbt2013}
\bibfield{author}{\bibinfo{person}{Saumen~C. Dey},
  \bibinfo{person}{V{\'{\i}}ctor Cuevas{-}Vicentt{\'{\i}}n},
  \bibinfo{person}{Sven K{\"{o}}hler}, \bibinfo{person}{Eric Gribkoff},
  \bibinfo{person}{Michael Wang}, {and} \bibinfo{person}{Bertram
  Lud{\"{a}}scher}.} \bibinfo{year}{2013}\natexlab{}.
\newblock \showarticletitle{On implementing provenance-aware regular path
  queries with relational query engines}. In \bibinfo{booktitle}{\emph{Joint
  2013 {EDBT/ICDT} Conferences, {EDBT/ICDT} '13, Genoa, Italy, March 22, 2013,
  Workshop Proceedings}}, \bibfield{editor}{\bibinfo{person}{Giovanna
  Guerrini}} (Ed.). \bibinfo{publisher}{{ACM}}, \bibinfo{pages}{214--223}.
\newblock
\urldef\tempurl%
\url{https://doi.org/10.1145/2457317.2457353}
\showDOI{\tempurl}


\bibitem[\protect\citeauthoryear{Ding, Yu, Wang, Qin, Zhang, and Lin}{Ding
  et~al\mbox{.}}{2007}]%
        {dpbf}
\bibfield{author}{\bibinfo{person}{Bolin Ding}, \bibinfo{person}{Jeffrey~Xu
  Yu}, \bibinfo{person}{Shan Wang}, \bibinfo{person}{Lu Qin},
  \bibinfo{person}{Xiao Zhang}, {and} \bibinfo{person}{Xuemin Lin}.}
  \bibinfo{year}{2007}\natexlab{}.
\newblock \showarticletitle{Finding Top-k Min-Cost Connected Trees in
  Databases}.
\newblock  (\bibinfo{year}{2007}), \bibinfo{pages}{836--845}.
\newblock
\urldef\tempurl%
\url{https://doi.org/10.1109/ICDE.2007.367929}
\showDOI{\tempurl}


\bibitem[\protect\citeauthoryear{Fletcher, Peters, and Poulovassilis}{Fletcher
  et~al\mbox{.}}{2016}]%
        {fletcher@edbt2016}
\bibfield{author}{\bibinfo{person}{George H.~L. Fletcher},
  \bibinfo{person}{Jeroen Peters}, {and} \bibinfo{person}{Alexandra
  Poulovassilis}.} \bibinfo{year}{2016}\natexlab{}.
\newblock \showarticletitle{Efficient regular path query evaluation using path
  indexes}. In \bibinfo{booktitle}{\emph{Proceedings of the 19th International
  Conference on Extending Database Technology, {EDBT} 2016, Bordeaux, France,
  March 15-16, 2016, Bordeaux, France, March 15-16, 2016}},
  \bibfield{editor}{\bibinfo{person}{Evaggelia Pitoura},
  \bibinfo{person}{Sofian Maabout}, \bibinfo{person}{Georgia Koutrika},
  \bibinfo{person}{Am{\'{e}}lie Marian}, \bibinfo{person}{Letizia Tanca},
  \bibinfo{person}{Ioana Manolescu}, {and} \bibinfo{person}{Kostas Stefanidis}}
  (Eds.). \bibinfo{publisher}{OpenProceedings.org}, \bibinfo{pages}{636--639}.
\newblock
\urldef\tempurl%
\url{https://doi.org/10.5441/002/edbt.2016.67}
\showDOI{\tempurl}


\bibitem[\protect\citeauthoryear{Foundation}{Foundation}{2022}]%
        {graphql}
\bibfield{author}{\bibinfo{person}{The~GraphQL Foundation}.}
  \bibinfo{year}{2022}\natexlab{}.
\newblock \showarticletitle{{GraphQL}}.
\newblock  (\bibinfo{year}{2022}).
\newblock
\urldef\tempurl%
\url{https://graphql.org/}
\showURL{%
\tempurl}


\bibitem[\protect\citeauthoryear{Gubichev, Bedathur, and Seufert}{Gubichev
  et~al\mbox{.}}{2013}]%
        {gubichev@grades2013}
\bibfield{author}{\bibinfo{person}{Andrey Gubichev},
  \bibinfo{person}{Srikanta~J. Bedathur}, {and} \bibinfo{person}{Stephan
  Seufert}.} \bibinfo{year}{2013}\natexlab{}.
\newblock \showarticletitle{Sparqling kleene: fast property paths in {RDF-3X}}.
  In \bibinfo{booktitle}{\emph{First International Workshop on Graph Data
  Management Experiences and Systems, {GRADES} 2013, co-located with
  {SIGMOD/PODS} 2013, New York, NY, USA, June 24, 2013}}. \bibinfo{pages}{14}.
\newblock
\urldef\tempurl%
\url{https://doi.org/10.1145/2484425.2484443}
\showDOI{\tempurl}


\bibitem[\protect\citeauthoryear{Gubichev and Neumann}{Gubichev and
  Neumann}{2011}]%
        {gubichev@webdb2011}
\bibfield{author}{\bibinfo{person}{Andrey Gubichev} {and}
  \bibinfo{person}{Thomas Neumann}.} \bibinfo{year}{2011}\natexlab{}.
\newblock \showarticletitle{Path Query Processing on Very Large {RDF} Graphs}.
  In \bibinfo{booktitle}{\emph{Proceedings of the 14th International Workshop
  on the Web and Databases 2011, WebDB 2011, Athens, Greece, June 12, 2011}}.
\newblock
\urldef\tempurl%
\url{http://webdb2011.rutgers.edu/papers/Paper21/pathwebdb.pdf}
\showURL{%
\tempurl}


\bibitem[\protect\citeauthoryear{Guo, Shao, Botev, and Shanmugasundaram}{Guo
  et~al\mbox{.}}{2003}]%
        {xrank}
\bibfield{author}{\bibinfo{person}{Lin Guo}, \bibinfo{person}{Feng Shao},
  \bibinfo{person}{Chavdar Botev}, {and} \bibinfo{person}{Jayavel
  Shanmugasundaram}.} \bibinfo{year}{2003}\natexlab{}.
\newblock \showarticletitle{{XRANK:} Ranked Keyword Search over {XML}
  Documents}. In \bibinfo{booktitle}{\emph{Proceedings of the 2003 {ACM}
  {SIGMOD} International Conference on Management of Data, San Diego,
  California, USA, June 9-12, 2003}}. \bibinfo{pages}{16--27}.
\newblock
\urldef\tempurl%
\url{https://doi.org/10.1145/872757.872762}
\showDOI{\tempurl}


\bibitem[\protect\citeauthoryear{He, Wang, Yang, and Yu}{He
  et~al\mbox{.}}{2007}]%
        {blinks}
\bibfield{author}{\bibinfo{person}{Hao He}, \bibinfo{person}{Haixun Wang},
  \bibinfo{person}{Jun Yang}, {and} \bibinfo{person}{Philip~S. Yu}.}
  \bibinfo{year}{2007}\natexlab{}.
\newblock \showarticletitle{{BLINKS:} ranked keyword searches on graphs}. In
  \bibinfo{booktitle}{\emph{Proceedings of the {ACM} {SIGMOD} International
  Conference on Management of Data, Beijing, China, June 12-14, 2007}}.
  \bibinfo{pages}{305--316}.
\newblock
\urldef\tempurl%
\url{https://doi.org/10.1145/1247480.1247516}
\showDOI{\tempurl}


\bibitem[\protect\citeauthoryear{Hristidis, Gravano, and
  Papakonstantinou}{Hristidis et~al\mbox{.}}{2003a}]%
        {efficient@vldb2003}
\bibfield{author}{\bibinfo{person}{Vagelis Hristidis}, \bibinfo{person}{Luis
  Gravano}, {and} \bibinfo{person}{Yannis Papakonstantinou}.}
  \bibinfo{year}{2003}\natexlab{a}.
\newblock \showarticletitle{Efficient IR-Style Keyword Search over Relational
  Databases}. In \bibinfo{booktitle}{\emph{Proceedings of 29th International
  Conference on Very Large Data Bases, {VLDB} 2003, Berlin, Germany, September
  9-12, 2003}}. \bibinfo{pages}{850--861}.
\newblock
\urldef\tempurl%
\url{https://doi.org/10.1016/B978-012722442-8/50080-X}
\showDOI{\tempurl}


\bibitem[\protect\citeauthoryear{Hristidis and Papakonstantinou}{Hristidis and
  Papakonstantinou}{2002}]%
        {discover}
\bibfield{author}{\bibinfo{person}{Vagelis Hristidis} {and}
  \bibinfo{person}{Yannis Papakonstantinou}.} \bibinfo{year}{2002}\natexlab{}.
\newblock \showarticletitle{{DISCOVER:} Keyword Search in Relational
  Databases}. In \bibinfo{booktitle}{\emph{{VLDB}}}.
\newblock
\urldef\tempurl%
\url{http://www.vldb.org/conf/2002/S19P02.pdf}
\showURL{%
\tempurl}


\bibitem[\protect\citeauthoryear{Hristidis, Papakonstantinou, and
  Balmin}{Hristidis et~al\mbox{.}}{2003b}]%
        {hristidis@icde2003}
\bibfield{author}{\bibinfo{person}{Vagelis Hristidis}, \bibinfo{person}{Yannis
  Papakonstantinou}, {and} \bibinfo{person}{Andrey Balmin}.}
  \bibinfo{year}{2003}\natexlab{b}.
\newblock \showarticletitle{Keyword Proximity Search on {XML} Graphs}. In
  \bibinfo{booktitle}{\emph{Proceedings of the 19th International Conference on
  Data Engineering, March 5-8, 2003, Bangalore, India}}.
  \bibinfo{pages}{367--378}.
\newblock
\urldef\tempurl%
\url{https://doi.org/10.1109/ICDE.2003.1260806}
\showDOI{\tempurl}


\bibitem[\protect\citeauthoryear{Kacholia, Pandit, Chakrabarti, Sudarshan,
  Desai, and Karambelkar}{Kacholia et~al\mbox{.}}{2005}]%
        {banks-2}
\bibfield{author}{\bibinfo{person}{Varun Kacholia}, \bibinfo{person}{Shashank
  Pandit}, \bibinfo{person}{Soumen Chakrabarti}, \bibinfo{person}{S.
  Sudarshan}, \bibinfo{person}{Rushi Desai}, {and} \bibinfo{person}{Hrishikesh
  Karambelkar}.} \bibinfo{year}{2005}\natexlab{}.
\newblock \showarticletitle{Bidirectional Expansion For Keyword Search on Graph
  Databases}. In \bibinfo{booktitle}{\emph{Proceedings of the 31st
  International Conference on Very Large Data Bases, Trondheim, Norway, August
  30 - September 2, 2005}}. \bibinfo{pages}{505--516}.
\newblock
\urldef\tempurl%
\url{http://www.vldb.org/archives/website/2005/program/paper/wed/p505-kacholia.pdf}
\showURL{%
\tempurl}


\bibitem[\protect\citeauthoryear{Kasneci, Ramanath, Sozio, Suchanek, and
  Weikum}{Kasneci et~al\mbox{.}}{2009}]%
        {star}
\bibfield{author}{\bibinfo{person}{Gjergji Kasneci}, \bibinfo{person}{Maya
  Ramanath}, \bibinfo{person}{Mauro Sozio}, \bibinfo{person}{Fabian~M.
  Suchanek}, {and} \bibinfo{person}{Gerhard Weikum}.}
  \bibinfo{year}{2009}\natexlab{}.
\newblock \showarticletitle{{STAR:} Steiner-Tree Approximation in Relationship
  Graphs}. In \bibinfo{booktitle}{\emph{Proceedings of the 25th International
  Conference on Data Engineering, {ICDE} 2009, March 29 2009 - April 2 2009,
  Shanghai, China}}. \bibinfo{pages}{868--879}.
\newblock
\urldef\tempurl%
\url{https://doi.org/10.1109/ICDE.2009.64}
\showDOI{\tempurl}


\bibitem[\protect\citeauthoryear{Kuijpers, Fletcher, Lindaaker, and
  Yakovets}{Kuijpers et~al\mbox{.}}{2021}]%
        {kuijpers@edbt2021}
\bibfield{author}{\bibinfo{person}{Jochem Kuijpers}, \bibinfo{person}{George
  Fletcher}, \bibinfo{person}{Tobias Lindaaker}, {and} \bibinfo{person}{Nikolay
  Yakovets}.} \bibinfo{year}{2021}\natexlab{}.
\newblock \showarticletitle{Path Indexing in the Cypher Query Pipeline}. In
  \bibinfo{booktitle}{\emph{Proceedings of the 24th International Conference on
  Extending Database Technology, {EDBT} 2021, Nicosia, Cyprus, March 23 - 26,
  2021}}. \bibinfo{pages}{582--587}.
\newblock
\urldef\tempurl%
\url{https://doi.org/10.5441/002/edbt.2021.68}
\showDOI{\tempurl}


\bibitem[\protect\citeauthoryear{Le, Li, Kementsietsidis, and Duan}{Le
  et~al\mbox{.}}{2014}]%
        {scalable-ks}
\bibfield{author}{\bibinfo{person}{Wangchao Le}, \bibinfo{person}{Feifei Li},
  \bibinfo{person}{Anastasios Kementsietsidis}, {and} \bibinfo{person}{Songyun
  Duan}.} \bibinfo{year}{2014}\natexlab{}.
\newblock \showarticletitle{Scalable Keyword Search on Large {RDF} Data}.
\newblock \bibinfo{journal}{\emph{{IEEE} Trans. Knowl. Data Eng.}}
  \bibinfo{volume}{26}, \bibinfo{number}{11} (\bibinfo{year}{2014}),
  \bibinfo{pages}{2774--2788}.
\newblock
\urldef\tempurl%
\url{https://doi.org/10.1109/TKDE.2014.2302294}
\showDOI{\tempurl}


\bibitem[\protect\citeauthoryear{Li, Ooi, Feng, Wang, and Zhou}{Li
  et~al\mbox{.}}{2008}]%
        {ease}
\bibfield{author}{\bibinfo{person}{Guoliang Li}, \bibinfo{person}{Beng~Chin
  Ooi}, \bibinfo{person}{Jianhua Feng}, \bibinfo{person}{Jianyong Wang}, {and}
  \bibinfo{person}{Lizhu Zhou}.} \bibinfo{year}{2008}\natexlab{}.
\newblock \showarticletitle{{EASE:} an effective 3-in-1 keyword search method
  for unstructured, semi-structured and structured data}. In
  \bibinfo{booktitle}{\emph{Proceedings of the {ACM} {SIGMOD} International
  Conference on Management of Data, {SIGMOD} 2008, Vancouver, BC, Canada, June
  10-12, 2008}}. \bibinfo{pages}{903--914}.
\newblock
\urldef\tempurl%
\url{https://doi.org/10.1145/1376616.1376706}
\showDOI{\tempurl}


\bibitem[\protect\citeauthoryear{Li, Qin, Yu, and Mao}{Li
  et~al\mbox{.}}{2016}]%
        {li@sigmod2016}
\bibfield{author}{\bibinfo{person}{Rong{-}Hua Li}, \bibinfo{person}{Lu Qin},
  \bibinfo{person}{Jeffrey~Xu Yu}, {and} \bibinfo{person}{Rui Mao}.}
  \bibinfo{year}{2016}\natexlab{}.
\newblock \showarticletitle{Efficient and Progressive Group Steiner Tree
  Search}. In \bibinfo{booktitle}{\emph{Proceedings of the 2016 International
  Conference on Management of Data, {SIGMOD} Conference 2016, San Francisco,
  CA, USA, June 26 - July 01, 2016}}. \bibinfo{pages}{91--106}.
\newblock
\urldef\tempurl%
\url{https://doi.org/10.1145/2882903.2915217}
\showDOI{\tempurl}


\bibitem[\protect\citeauthoryear{Luo, Lin, Wang, and Zhou}{Luo
  et~al\mbox{.}}{2007}]%
        {spark}
\bibfield{author}{\bibinfo{person}{Yi Luo}, \bibinfo{person}{Xuemin Lin},
  \bibinfo{person}{Wei Wang}, {and} \bibinfo{person}{Xiaofang Zhou}.}
  \bibinfo{year}{2007}\natexlab{}.
\newblock \showarticletitle{Spark: top-k keyword query in relational
  databases}. In \bibinfo{booktitle}{\emph{Proceedings of the {ACM} {SIGMOD}
  International Conference on Management of Data, Beijing, China, June 12-14,
  2007}}. \bibinfo{pages}{115--126}.
\newblock
\urldef\tempurl%
\url{https://doi.org/10.1145/1247480.1247495}
\showDOI{\tempurl}


\bibitem[\protect\citeauthoryear{Luo, Wang, Lin, Zhou, Wang, and Li}{Luo
  et~al\mbox{.}}{2011}]%
        {spark2}
\bibfield{author}{\bibinfo{person}{Yi Luo}, \bibinfo{person}{Wei Wang},
  \bibinfo{person}{Xuemin Lin}, \bibinfo{person}{Xiaofang Zhou},
  \bibinfo{person}{Jianmin Wang}, {and} \bibinfo{person}{Keqiu Li}.}
  \bibinfo{year}{2011}\natexlab{}.
\newblock \showarticletitle{{SPARK2:} Top-k Keyword Query in Relational
  Databases}.
\newblock \bibinfo{journal}{\emph{{IEEE} Trans. Knowl. Data Eng.}}
  \bibinfo{volume}{23}, \bibinfo{number}{12} (\bibinfo{year}{2011}),
  \bibinfo{pages}{1763--1780}.
\newblock
\urldef\tempurl%
\url{https://doi.org/10.1109/TKDE.2011.60}
\showDOI{\tempurl}


\bibitem[\protect\citeauthoryear{Na, Yi, Whang, Moon, and Hyun}{Na
  et~al\mbox{.}}{2022}]%
        {na@icde2022}
\bibfield{author}{\bibinfo{person}{Inju Na}, \bibinfo{person}{Ilyeop Yi},
  \bibinfo{person}{Kyu{-}Young Whang}, \bibinfo{person}{Yang{-}Sae Moon}, {and}
  \bibinfo{person}{Soon~J. Hyun}.} \bibinfo{year}{2022}\natexlab{}.
\newblock \showarticletitle{Regular Path Query Evaluation Sharing a Reduced
  Transitive Closure Based on Graph Reduction}.
\newblock  (\bibinfo{year}{2022}).
\newblock


\bibitem[\protect\citeauthoryear{Neo4j}{Neo4j}{2022}]%
        {neo4j-cypher}
\bibfield{author}{\bibinfo{person}{Inc. Neo4j}.}
  \bibinfo{year}{2022}\natexlab{}.
\newblock \showarticletitle{{Cypher Query Language}}.
\newblock  (\bibinfo{year}{2022}).
\newblock
\urldef\tempurl%
\url{https://neo4j.com/developer/cypher/}
\showURL{%
\tempurl}


\bibitem[\protect\citeauthoryear{Peng, Lin, Zhang, Zhang, and Qin}{Peng
  et~al\mbox{.}}{2022}]%
        {peng@vldb2022}
\bibfield{author}{\bibinfo{person}{You Peng}, \bibinfo{person}{Xuemin Lin},
  \bibinfo{person}{Ying Zhang}, \bibinfo{person}{Wenjie Zhang}, {and}
  \bibinfo{person}{Lu Qin}.} \bibinfo{year}{2022}\natexlab{}.
\newblock \showarticletitle{Answering reachability and K-reach queries on large
  graphs with label constraints}.
\newblock \bibinfo{journal}{\emph{{VLDB} J.}} \bibinfo{volume}{31},
  \bibinfo{number}{1} (\bibinfo{year}{2022}), \bibinfo{pages}{101--127}.
\newblock
\urldef\tempurl%
\url{https://doi.org/10.1007/s00778-021-00695-0}
\showDOI{\tempurl}


\bibitem[\protect\citeauthoryear{S. and Haritsa}{S. and Haritsa}{2019}]%
        {rootrank@comad2019}
\bibfield{author}{\bibinfo{person}{Vinay~M. S.} {and}
  \bibinfo{person}{Jayant~R. Haritsa}.} \bibinfo{year}{2019}\natexlab{}.
\newblock \showarticletitle{Root Rank: {A} Relational Operator for {KWS} Result
  Ranking}. In \bibinfo{booktitle}{\emph{Proceedings of the {ACM} India Joint
  International Conference on Data Science and Management of Data, {COMAD/CODS}
  2019, Kolkata, India, January 3-5, 2019}}. \bibinfo{pages}{103--111}.
\newblock
\urldef\tempurl%
\url{https://doi.org/10.1145/3297001.3297014}
\showDOI{\tempurl}


\bibitem[\protect\citeauthoryear{S. and Haritsa}{S. and Haritsa}{2020}]%
        {kws@infsys2020}
\bibfield{author}{\bibinfo{person}{Vinay~M. S.} {and}
  \bibinfo{person}{Jayant~R. Haritsa}.} \bibinfo{year}{2020}\natexlab{}.
\newblock \showarticletitle{Operator implementation of Result Set Dependent
  {KWS} scoring functions}.
\newblock \bibinfo{journal}{\emph{Inf. Syst.}}  \bibinfo{volume}{89}
  (\bibinfo{year}{2020}), \bibinfo{pages}{101465}.
\newblock
\urldef\tempurl%
\url{https://doi.org/10.1016/j.is.2019.101465}
\showDOI{\tempurl}


\bibitem[\protect\citeauthoryear{Shi, Cheng, Tran, Kharlamov, and Shen}{Shi
  et~al\mbox{.}}{2021}]%
        {qgstp@www21}
\bibfield{author}{\bibinfo{person}{Yuxuan Shi}, \bibinfo{person}{Gong Cheng},
  \bibinfo{person}{Trung{-}Kien Tran}, \bibinfo{person}{Evgeny Kharlamov},
  {and} \bibinfo{person}{Yulin Shen}.} \bibinfo{year}{2021}\natexlab{}.
\newblock \showarticletitle{Efficient Computation of Semantically Cohesive
  Subgraphs for Keyword-Based Knowledge Graph Exploration}. In
  \bibinfo{booktitle}{\emph{{WWW} '21: The Web Conference 2021, Virtual Event /
  Ljubljana, Slovenia, April 19-23, 2021}},
  \bibfield{editor}{\bibinfo{person}{Jure Leskovec}, \bibinfo{person}{Marko
  Grobelnik}, \bibinfo{person}{Marc Najork}, \bibinfo{person}{Jie Tang}, {and}
  \bibinfo{person}{Leila Zia}} (Eds.). \bibinfo{publisher}{{ACM} / {IW3C2}},
  \bibinfo{pages}{1410--1421}.
\newblock
\urldef\tempurl%
\url{https://doi.org/10.1145/3442381.3449900}
\showDOI{\tempurl}
\newblock
\shownote{Code available at: https://github.com/nju-websoft/QGSTP.}


\bibitem[\protect\citeauthoryear{Sun, Xiao, Cui, Halgamuge, Lappas, and
  Luo}{Sun et~al\mbox{.}}{2021}]%
        {lancet@vldb2021}
\bibfield{author}{\bibinfo{person}{Yahui Sun}, \bibinfo{person}{Xiaokui Xiao},
  \bibinfo{person}{Bin Cui}, \bibinfo{person}{Saman~K. Halgamuge},
  \bibinfo{person}{Theodoros Lappas}, {and} \bibinfo{person}{Jun Luo}.}
  \bibinfo{year}{2021}\natexlab{}.
\newblock \showarticletitle{Finding Group Steiner Trees in Graphs with both
  Vertex and Edge Weights}.
\newblock \bibinfo{journal}{\emph{Proc. {VLDB} Endow.}} \bibinfo{volume}{14},
  \bibinfo{number}{7} (\bibinfo{year}{2021}), \bibinfo{pages}{1137--1149}.
\newblock
\urldef\tempurl%
\url{https://doi.org/10.14778/3450980.3450982}
\showDOI{\tempurl}


\bibitem[\protect\citeauthoryear{Tran, Wang, Rudolph, and Cimiano}{Tran
  et~al\mbox{.}}{2009}]%
        {tran@icde2009}
\bibfield{author}{\bibinfo{person}{Thanh Tran}, \bibinfo{person}{Haofen Wang},
  \bibinfo{person}{Sebastian Rudolph}, {and} \bibinfo{person}{Philipp
  Cimiano}.} \bibinfo{year}{2009}\natexlab{}.
\newblock \showarticletitle{Top-k Exploration of Query Candidates for Efficient
  Keyword Search on Graph-Shaped {(RDF)} Data}. In
  \bibinfo{booktitle}{\emph{Proceedings of the 25th International Conference on
  Data Engineering, {ICDE} 2009, March 29 2009 - April 2 2009, Shanghai,
  China}}. \bibinfo{pages}{405--416}.
\newblock
\urldef\tempurl%
\url{https://doi.org/10.1109/ICDE.2009.119}
\showDOI{\tempurl}


\bibitem[\protect\citeauthoryear{Valstar, Fletcher, and Yoshida}{Valstar
  et~al\mbox{.}}{2017}]%
        {valstar@sigmod2017}
\bibfield{author}{\bibinfo{person}{Lucien D.~J. Valstar},
  \bibinfo{person}{George H.~L. Fletcher}, {and} \bibinfo{person}{Yuichi
  Yoshida}.} \bibinfo{year}{2017}\natexlab{}.
\newblock \showarticletitle{Landmark Indexing for Evaluation of
  Label-Constrained Reachability Queries}. In
  \bibinfo{booktitle}{\emph{Proceedings of the 2017 {ACM} International
  Conference on Management of Data, {SIGMOD} Conference 2017, Chicago, IL, USA,
  May 14-19, 2017}}. \bibinfo{pages}{345--358}.
\newblock
\urldef\tempurl%
\url{https://doi.org/10.1145/3035918.3035955}
\showDOI{\tempurl}


\bibitem[\protect\citeauthoryear{Wadhwa, Prasad, Ranu, Bagchi, and
  Bedathur}{Wadhwa et~al\mbox{.}}{2019}]%
        {wadhwa@sigmod2019}
\bibfield{author}{\bibinfo{person}{Sarisht Wadhwa}, \bibinfo{person}{Anagh
  Prasad}, \bibinfo{person}{Sayan Ranu}, \bibinfo{person}{Amitabha Bagchi},
  {and} \bibinfo{person}{Srikanta Bedathur}.} \bibinfo{year}{2019}\natexlab{}.
\newblock \showarticletitle{Efficiently Answering Regular Simple Path Queries
  on Large Labeled Networks}. In \bibinfo{booktitle}{\emph{Proceedings of the
  2019 International Conference on Management of Data, {SIGMOD} Conference
  2019, Amsterdam, The Netherlands, June 30 - July 5, 2019}}.
  \bibinfo{pages}{1463--1480}.
\newblock
\urldef\tempurl%
\url{https://doi.org/10.1145/3299869.3319882}
\showDOI{\tempurl}


\bibitem[\protect\citeauthoryear{Wang and Aggarwal}{Wang and Aggarwal}{2010}]%
        {kssurvey}
\bibfield{author}{\bibinfo{person}{Haixun Wang} {and} \bibinfo{person}{Charu~C.
  Aggarwal}.} \bibinfo{year}{2010}\natexlab{}.
\newblock \showarticletitle{A Survey of Algorithms for Keyword Search on Graph
  Data}.
\newblock In \bibinfo{booktitle}{\emph{Managing and Mining Graph Data}},
  \bibfield{editor}{\bibinfo{person}{Charu~C. Aggarwal} {and}
  \bibinfo{person}{Haixun Wang}} (Eds.). \bibinfo{series}{Advances in Database
  Systems}, Vol.~\bibinfo{volume}{40}. \bibinfo{publisher}{Springer},
  \bibinfo{pages}{249--273}.
\newblock
\urldef\tempurl%
\url{https://doi.org/10.1007/978-1-4419-6045-0\_8}
\showDOI{\tempurl}


\bibitem[\protect\citeauthoryear{Yakovets, Godfrey, and Gryz}{Yakovets
  et~al\mbox{.}}{2016}]%
        {yakovets@sigmod2016}
\bibfield{author}{\bibinfo{person}{Nikolay Yakovets}, \bibinfo{person}{Parke
  Godfrey}, {and} \bibinfo{person}{Jarek Gryz}.}
  \bibinfo{year}{2016}\natexlab{}.
\newblock \showarticletitle{Query Planning for Evaluating {SPARQL} Property
  Paths}. In \bibinfo{booktitle}{\emph{Proceedings of the 2016 International
  Conference on Management of Data, {SIGMOD} Conference 2016, San Francisco,
  CA, USA, June 26 - July 01, 2016}}. \bibinfo{pages}{1875--1889}.
\newblock
\urldef\tempurl%
\url{https://doi.org/10.1145/2882903.2882944}
\showDOI{\tempurl}


\bibitem[\protect\citeauthoryear{Yang, Agrawal, Jagadish, Tung, and Wu}{Yang
  et~al\mbox{.}}{2019}]%
        {parallel-ks}
\bibfield{author}{\bibinfo{person}{Yueji Yang}, \bibinfo{person}{Divyakant
  Agrawal}, \bibinfo{person}{H.~V. Jagadish}, \bibinfo{person}{Anthony K.~H.
  Tung}, {and} \bibinfo{person}{Shuang Wu}.} \bibinfo{year}{2019}\natexlab{}.
\newblock \showarticletitle{An Efficient Parallel Keyword Search Engine on
  Knowledge Graphs}. In \bibinfo{booktitle}{\emph{35th {IEEE} International
  Conference on Data Engineering, {ICDE} 2019, Macao, China, April 8-11,
  2019}}. \bibinfo{pages}{338--349}.
\newblock
\urldef\tempurl%
\url{https://doi.org/10.1109/ICDE.2019.00038}
\showDOI{\tempurl}


\end{thebibliography}

\end{document}